\documentclass[11pt]{article}
\usepackage[margin=1in]{geometry} 
\usepackage{graphicx} % Required for inserting images
\usepackage[dvipsnames]{xcolor}
 \usepackage[section]{placeins}
 \usepackage{hyperref}
\hypersetup{
  colorlinks=true,
  linkcolor=black,
  filecolor=black,
  urlcolor=black,
  citecolor=black,
  pdftitle={Your Document Title},
  pdfpagemode=FullScreen,
}
\usepackage{enumitem}
\usepackage{multirow}
\usepackage{tabularray}
\usepackage{amsmath}
\usepackage{amsfonts}
\usepackage{amssymb}
\usepackage{amsthm}
\usepackage{graphicx}
\usepackage{xcolor}
\usepackage[table]{xcolor}
\usepackage{colortbl} 
\usepackage{setspace}
\usepackage{physics}
\usepackage{authblk}
\usepackage{tikz}
\usepackage{caption}
\usepackage{mathdots}
\usepackage{yhmath}
\usepackage{mathtools}
\usepackage{cancel}
\usepackage{color}
\usepackage{siunitx}
\usepackage{array}
\usepackage{multirow}
\usepackage{longtable}
\usepackage{tabularx}
\usepackage{extarrows}
\usepackage{booktabs}
\usetikzlibrary{fadings}
\usetikzlibrary{patterns}
\usetikzlibrary{shadows.blur}
\usetikzlibrary{shapes}
\usepackage[normalem]{ulem}
\usepackage{framed}
\usepackage{lipsum}
\usepackage{algorithm}
\usepackage{algpseudocode}
\usepackage{cleveref}

\newcommand{\independent}{\perp \!\!\! \perp}

\DeclareMathOperator*{\argmin}{argmin}
\DeclareMathOperator*{\argmax}{argmax}

\newtheorem{assumption}{Assumption}
\newtheorem{condition}{Condition}
\newtheorem{theorem}{Theorem}

\setlist[itemize]{itemsep=1mm,parsep=1mm,topsep=0mm}
\setlist[enumerate]{itemsep=1mm,parsep=1mm,topsep=0mm}
\usepackage[numbers,sort&compress]{natbib}

\title{Targeted maximum likelihood estimation of vaccine effectiveness and immune correlates in test-negative design studies with missing data}

\author[1]{Leah I. B. Andrews*}
\author[2]{Lars van der Laan}
\author[1,3,4]{Peter B. Gilbert}
\affil[1]{Department of Biostatistics, University of Washington}
\affil[2]{Department of Statistics, University of Washington}
\affil[3]{Vaccine and Infectious Disease Division, Fred Hutchinson Cancer Center}
\affil[4]{Public Health Sciences Division, Fred Hutchinson Cancer Center}

\date{}

\begin{document}

\maketitle
% Abstract
\begin{abstract}
The test-negative design (TND) is a resource-efficient observational study design that can assess vaccine effectiveness and exposure-proximal immune correlates of disease. The TND enrolls symptomatic individuals seeking diagnostic testing and compares case status by an exposure variable, such as vaccination status or immune marker level, that is measured at testing. While the TND reduces confounding by healthcare-seeking behavior, other sources of confounding may remain. TND studies may also have missing data in the exposure variable due to incomplete records or two-phase sampling designs. We present a targeted maximum likelihood estimation approach involving a semiparametric logistic regression model that targets a causal conditional risk ratio of symptomatic disease in the healthcare-seeking population. Under causal and missing at random assumptions, our method produces an efficient, asymptotically linear estimator that provides flexible, data-driven confounding control and valid causal inference when analyzing TND studies with missing exposure variable data. We evaluate our method's finite sample properties using plasmode simulations of a two-phase TND immune correlates study. We also apply our method to assess COVID-19 vaccine effectiveness and antibody marker correlates of COVID-19 from TND study cohorts derived from the Moderna Coronavirus Efficacy phase 3 trial.
\end{abstract}

% Keywords
\noindent \textbf{Keywords:} test-negative design, targeted maximum likelihood estimation, vaccine effectiveness, immune correlates, missing data, two-phase sampling
\\

\noindent
*\textbf{Corresponding author}: Leah I. B. Andrews (email address: landrew2@uw.edu, ORCID iD: 0000-0002-3418-9468).

\section{Introduction}
\label{sec:intro}

The test-negative design (TND) is a practical observational study design that has been used to monitor post-marketing vaccine effectiveness against pneumococcal disease, influenza, rotavirus, cholera, pertussis, and most recently, COVID-19 \cite{broome_pneumococcal_1980, chua_use_2020,patel_evaluation_2021, chang_effectiveness_2022, tsang_prior_2024}. The TND enrolls individuals who obtain diagnostic testing for a target pathogen and meet a symptom definition. Vaccine effectiveness is assessed by comparing vaccination status between individuals who test positive (cases) and individuals who test negative (noncases), after adjusting for potential confounders \cite{jackson_test-negative_2013, sullivan_theoretical_2016}. TND studies are resource-efficient because they recruit cases and noncases identically, collect participants' information in one visit, and obtain a high proportion of cases. The TND also reduces bias from healthcare-seeking behavior (i.e., seeking medical attention when ill). Individuals with healthcare-seeking behavior are more likely to obtain frequent or intensive healthcare services that increase their probability of being diagnosed, engaging in risk-reducing behaviors (e.g., hand-washing, vaccination), and participating in studies, which induces bias in observational studies since healthcare-seeking behavior is difficult to measure \cite{ sullivan_theoretical_2016,clemens_resolving_1984}. The TND reduces this bias by limiting enrollment to symptomatic individuals who obtain testing and assuming they have similar healthcare-seeking behavior \cite{jackson_test-negative_2013,sullivan_theoretical_2016}.  

A novel application of the TND is to study exposure-proximal immune correlates of disease (i.e., studying the association of immune markers present at the time of potential pathogen exposure with a clinical disease endpoint) \cite{follmann_immune_2022,sumner_antisars-cov-2_2024,zhang_use_2024, middleton_statistical_2025,follmann_test-negative_2025}. Exposure-proximal immune correlates can provide greater insights about immunological mechanisms of protection than immune markers measured at fixed time points and can help validate surrogate endpoints for inferring vaccine efficacy. In a TND immune correlates study, a participant contributes their specimen sample (which can be assayed later for immune markers), diagnostic test, and additional covariates in a single visit. To interpret immune markers as exposure-proximal immune correlates, TND studies must either assume that participants' immune marker levels are identical at testing and potential pathogen exposure or account for rapid memory immune responses following natural infection \cite{follmann_immune_2022, follmann_test-negative_2025}. The TND is especially advantageous for studying rare severe outcomes and cell-mediated immune markers that are rarely studied in randomized, controlled clinical trials (RCTs) and prospective cohort studies because their measurement requires resource-intensive storage of peripheral blood mononuclear cells \cite{follmann_test-negative_2025,gilbert_immune_2022,benkeser_immune_2023,fong_immune_2022,fong_immune_2023,zhang_humoral_2022,hertoghs_vaccine-induced_2025, krause_making_2022}. 

Despite its advantages, the TND is an observational study design that is subject to confounding from characteristics like age, comorbidities, SARS-CoV-2 infection history, vaccination history, region, and calendar date \cite{ sullivan_theoretical_2016, chang_effectiveness_2022, chua_use_2020,dean_temporal_2020}. To account for confounding, most TND studies fit an ordinary logistic regression adjusted for confounders \cite{bond_regression_2016, thompson_effectiveness_2021, lopez_bernal_effectiveness_2021,chung_effectiveness_2021}, which assumes a parametric model that may induce bias if misspecified. Alternatively, some TND studies match by covariates and fit a conditional logistic regression \cite{bond_regression_2016,ranzani_vaccine_2022,pilishvili_effectiveness_2021}, which may not adequately control for confounding, may be less efficient than covariate adjustment through regression, and is more resource-intensive when studying multiple symptom definitions \cite{rothman_modern_1998,rose_why_2009}. Only recently have TND methods incorporated machine-learning methods to provide flexible confounding control \cite{jiang_double_2025}. 

TND studies may also have missing data in the exposure variable, which must be considered to ensure generalizable results. Missing vaccination status may arise in a TND vaccine effectiveness study from lack of vaccine records or unverified self-report \cite{tenforde_effectiveness_2021,olson_effectiveness_2021, patel_evaluation_2021}. Most TND studies require known vaccination status in the study eligibility criteria and conduct complete-case analyses \cite{tenforde_effectiveness_2021, ranzani_effectiveness_2021}, which may induce bias if individuals missing vaccination status have different characteristics than those with vaccination status observed \cite{tsiatis_models_2006}. Since measuring immune markers is expensive and labor-intensive, TND immune correlates studies may apply two-phase sampling, a common cost-effective and efficient design in immune correlate studies \cite{white_two_1982, rothman_modern_1998,breslow_using_2009,gilbert_controlled_2022, kenny_immune_2024}. 
Under this design, demographic and clinical information, diagnostic tests, and specimen samples are collected from all enrolled individuals at testing (i.e., phase one). Once test results are available, immune markers are measured from a subset of phase one individuals sampled into phase two with sampling probabilities dependent on case status and covariates. Although the missingness framework is known, analyses must account for the missingness patterns to obtain an unbiased estimate. 

With increasing reliance on the TND for post-marketing vaccine evaluation, it is essential to develop more robust statistical methods for TND analyses than ordinary logistic regression. Targeted maximum likelihood estimation approaches use semiparametric theory and machine-learning algorithms to produce less biased estimates under fewer assumptions than parametric models \cite{van_der_laan_targeted_2006,van_der_laan_targeted_2011}. These approaches also respect model constraints and may have better finite sample properties than one-step estimation and estimating equation approaches \cite{kang_demystifying_2007, porter_relative_2011}. van der Laan and Gilbert \cite{van_der_laan_semiparametric_2025} developed a targeted maximum likelihood estimator (TMLE) to study strain-specific relative COVID-19 vaccine efficacy in case-only designs.  This method assumes a semiparametric logistic regression model that allows for flexible, data-driven confounding adjustment and missing cause of failure information (e.g., viral genotype) in test-positive cases. 

In this work, we adapt the TMLE developed by van der Laan and Gilbert \cite{van_der_laan_semiparametric_2025} to a TND setting to study vaccine effectiveness and exposure-proximal immune correlates. We extend the identifying assumptions to target a causal conditional risk ratio of symptomatic disease in the healthcare-seeking population and account for missing data in the exposure variable (vaccination status or exposure-proximal immune marker level). Our method flexibly adjusts for confounding and produces an efficient, asymptotically linear substitution estimator that provides valid inference in TND settings. We compare the performance of our TMLE approach to pseudo-likelihood logistic regression and ordinary logistic regression in two-phase TND immune correlates study simulations. Using data from the Moderna Coronavirus Efficacy (COVE) trial, a phase 3 RCT of the mRNA-1273 vaccine in the United States (US) \cite{el_sahly_efficacy_2021}, we evaluate COVID-19 vaccine effectiveness and antibody marker correlates of COVID-19 using the TMLE and comparison methods.

\section{Estimands}
\label{sec:estimands}

In this section, we define our target estimand, the full data causal conditional risk ratio, and the observed conditional odds ratio from the TND data. We first provide assumptions to identify the full data conditional risk ratio. We then provide additional assumptions to identify the full data causal conditional risk ratio. 

\subsection{Causal estimand and ideal full data structure}
\label{subsec:causalq}
We are interested in assessing the effect of a binary exposure variable (vaccination status or exposure-proximal immune marker level) on symptomatic disease in a healthcare-seeking population. Relevant vaccine regimens are completion of primary series vaccination within a time period vs. no vaccination, different vaccine types, or number of vaccine doses. Relevant binary immune marker levels for comparison are high vs. low immune markers according to some threshold, like the limit of detection. Given our COVE motivating example, we discuss both exposures, vaccination status and immune marker level, against symptomatic COVID-19. 

Our population of interest $\mathcal{P}$ is the healthcare-seeking population ($U=1$), which consists of independent and identically distributed individuals who would seek care if ill and have unobserved full data structure $O_F = (A, Y,W,D,X)\sim P_F \in \mathcal{P}$ (Figure \ref{fig:dag1}). We define $A$ as a binary exposure variable, either vaccination status at time of potential SARS-CoV-2 exposure or exposure-proximal immune marker level. $Y$, $W$, and $D$ are indicators of SARS-CoV-2 infection, a cause other than SARS-CoV-2 that could induce symptoms, and meeting the symptom definition, respectively. $X\in\mathbb{R}^r$ represents covariates, like demographic and clinical characteristics, necessary to satisfy the assumptions listed in Sections \ref{subsec:statparam} and \ref{subsec:causalparam}.

An individual has causal data structure $O_{F,ca}=(O_F, Y(0),Y(1), D(0),D(1))$ $\sim P_{F,ca}$, where $Y(a)$ represents the potential outcome of SARS-CoV-2 infection and $D(a)$ represents the potential outcome of meeting the symptom definition under exposure status $a$, for $a=0,1$. Individuals have all potential outcomes but only one set of observed values $\{Y, D\}$. We target a full data causal conditional risk ratio,
\begin{equation}
RR (P_{F,ca}(x)):=\frac{P_{F,ca}(Y(1)=1, D(1)=1|X=x)}{P_{F,ca}(Y(0)=1,D(0)=1| X=x)},
\label{eqn:causalrr}
\end{equation}
which represents the reduction in COVID-19 risk comparing individuals with $A=1$ versus $A=0$ in the healthcare-seeking population who share additional characteristics $X=x$. We limit inference to the healthcare-seeking population to avoid inducing selection bias, which is more likely to occur for endpoints measuring a milder form of disease \cite{jackson_test-negative_2013,sullivan_theoretical_2016,shi_comparison_2017,lewnard_theoretical_2021}. 

\subsection{Observed test-negative design data}
\label{subsec:obsdata}
The observed TND data structure is $O=DS(\Delta, \Delta A,Y, X)\sim P$ and a subset of $O_F$ (Figure \ref{fig:dag1}). Individuals are enrolled in a TND if they obtain SARS-CoV-2 testing within some testing window ($S=1$) and meet the symptom definition ($D=1$). Individuals are enrolled before $Y$ is measured with SARS-CoV-2 diagnostic tests. $W$ is a latent variable that affects $D$ in the population but is not typically observed in the TND. $X$ is collected through electronic health records, medical examination, specimen collection, and/or self-report. $\Delta$ is an indicator of observing $A$, which may be missing unintentionally or by two-phase sampling design. Cases are individuals who meet the symptom definition, obtain SARS-CoV-2 testing, and test SARS-CoV-2 positive ($D= S= Y=1$). Noncases meet the symptom definition, obtain SARS-CoV-2 testing, and test SARS-CoV-2 negative ($D= S=1, Y=0$). Since all TND individuals have symptoms and obtain SARS-CoV-2 testing, we assume they have identical healthcare-seeking behavior ($U=1$) and are representative of the healthcare-seeking population. 

\begin{figure}[ht]
  \begin{minipage}[c]{0.36\textwidth}
    \centering
% This code uses the tikz package
\begin{tikzpicture}[scale = 3.3]
\node (v0) at (-0.75, 1.2) {$U$};
\node (v1) at (0.05,0.65) {$S$};
\node (v2) at (-1.25,0.4) {$A$ };
\node (v3) at (-0.75,0.4) {$Y$};
\node (v4) at (-0.25,0.4) {$D$};
\node (v5) at (-1.25,.9) {$X$};
\node (v7) at (-0.5, .9) {$W$};
\node [font=\sffamily](v8) at (-1.5, 1.2) {\large \textbf{A}};
\draw [->] (v0) edge (v3);
\draw [->] (v0) edge (v2);
\draw [->] (v0) edge (v7);
\draw [->] (v2) edge (v3);
\draw [dashed] (v5) edge (v0);
\draw [->] (v3) edge (v4);
\draw [->] (v4) edge (v1);
\draw [->] (v5) edge (v2);
\draw [->] (v5) edge (v7);
\draw [->] (v5) edge (v3);
\draw [->] (v5) edge (v4);
\draw [->] (v0) edge (v1);
\draw [->] (v7) edge (v4);
\end{tikzpicture}
\end{minipage}
\hfill
  \begin{minipage}[c]{0.58\textwidth}
        \centering
\begin{tikzpicture}[scale = 3.3]
\node [draw, text=gray ](v0) at (-0.75, 1.2) {$U=1$};
\node [draw](v1) at (0.05,0.65) {$S=1$};
\node (v2) at (-1.25,0.4) {$A$ };
\node (v3) at (-0.75,0.4) {$Y$};
\node [draw](v4) at (-0.25,0.4) {$D=1$};
\node (v5) at (-1.25,.9) {$X$};
\node (v6) at (-1.5,0.65) {$\Delta$};
\node [text=gray ](v7) at (-0.5, .9) {$W$};
\node [font=\sffamily](v8) at (-1.5, 1.2) {\large \textbf{B}};
\draw [->] (v0) edge (v3);
\draw [->] (v0) edge (v2);
\draw [->] (v0) edge (v7);
\draw [->] (v2) edge (v3);
\draw [dashed] (v5) edge (v0);
\draw [->] (v3) edge (v4);
\draw [->] (v4) edge (v1);
\draw [->] (v5) edge (v2);
\draw [->] (v5) edge (v7);
\draw [->] (v5) edge (v3);
\draw [->] (v5) edge (v4);
\draw [dashed] (v5) edge (v6);
\draw [dashed] (v3) edge (v6);
\draw [->] (v0) edge (v1);
\draw [->] (v7) edge (v4);
\end{tikzpicture}
  \end{minipage}
\caption[Directed acyclic graph (DAG) of causal relationships in the general population (\textbf{A}) and a test-negative design study (\textbf{B}).]{Directed acyclic graph (DAG) of causal relationships in the general population (\textbf{A}) and a test-negative design study (\textbf{B}) between vaccination status or immune marker level $A$, SARS-CoV-2 infection $Y$, a cause other than SARS-CoV-2 that could induce symptoms $W$, symptoms $D$, covariates $X$, healthcare-seeking behavior $U$, obtaining SARS-CoV-2 testing $S$, and observing vaccination status or immune marker level $\Delta$. Solid arrows indicate causal relationships and dashed lines indicate associations. $U$ and $W$ are not typically observed in test-negative design studies and are in gray. The boxes in DAG \textbf{B} reflect how the test-negative design restricts to symptomatic individuals who obtain SARS-CoV-2 testing, thereby assuming they share identical healthcare-seeking behavior.} 
  \label{fig:dag1}
\end{figure}

\subsection{Identification of the full data conditional risk ratio}
\label{subsec:statparam}

Since the TND uses outcome-dependent sampling, we can obtain
the probability of exposure status given case status and other covariates,
\begin{equation}
\mu_{P}(y,x) := P(A=1|D=1,S=1,\Delta = 1, Y=y, X=x),
\label{eqn:mu}
\end{equation}
from the observed TND data but not the probability of case status given exposure status and other covariates \cite{westreich_berksons_2012}. Thus, the observed conditional odds ratio,
\begin{equation}
OR(P)(x):=\frac{\mu_P(1,x)/(1-\mu_P(1,x))}{\mu_P(0,x)/(1-\mu_P(0,x))},
\label{eqn:or}
\end{equation}
is identified by the observed TND data. However, we introduce the following nonparametric identifying assumptions to identify the full data conditional risk ratio $RR(P_F)(x)$ from 
$OR(P)(x)$ \cite{van_der_laan_semiparametric_2025,jackson_test-negative_2013,broome_pneumococcal_1980,schnitzer_estimands_2022, jiang_double_2025}:

\begin{assumption}
\textit{Weak overlap for symptoms from SARS-CoV-2 and other causes}\\
$P\left( P(Y= 1,D=1| S=1, X) > 0\right)=1.$

\noindent $P\left(P(Y= 0,D=1| S=1, X) > 0\right)=1.$
\label{ass:posy}
\end{assumption}

\begin{assumption}\textit{Weak overlap for exposure status missingness} \\
$P\left(P(\Delta = 1| D=1,S=1,Y,X) > 0\right)=1.$ 
\label{ass:posdelta}
\end{assumption}

\begin{assumption} \textit{Weak overlap for SARS-CoV-2 testing missingness} \\
$P_F\left(P_F(S = 1| D=1,Y, X) >0\right)=1.$
\label{ass:poss}
\end{assumption}

\begin{assumption} \textit{Exposure status is missing at random in the symptomatic population} \\
$S \independent A |D=1, Y, X. $
\label{ass:condsa}  
\end{assumption}

\begin{assumption}\textit{Exposure status is missing at random among TND participants} \\
$\Delta \independent A | D=1, S=1,Y, X.$ 
\label{ass:conddeltaa}  
\end{assumption}

\begin{assumption} \textit{Noncase Exchangeability (core TND assumption)}\\
$ P_F (Y=0,D=1|  A= 1,X)=P_F (Y=0,D=1|A=0,X).$
\label{ass:coretnd}
\end{assumption}

\begin{assumption}
    \textit{Exposure status at SARS-CoV-2 testing accurately reflects exposure status $A$ at time of potential SARS-CoV-2 exposure in TND participants}
    \label{ass:amisclass}
\end{assumption}

\begin{assumption}
    \textit{SARS-CoV-2 test result accurately measures SARS-CoV-2 infection status $Y$ in TND participants}
    \label{ass:ymisclass}
\end{assumption}

\begin{theorem} Identification of full data conditional risk ratio

Given the identifying Assumptions \ref{ass:posy}-\ref{ass:ymisclass}, the observed conditional odds ratio $OR(P)(x)$ equals the full data conditional risk ratio,
  \begin{align}
RR(P_F)(x) &:=\frac{P_F(Y=1, D=1|A=1,X=x)}{P_F(Y=1,D=1|A=0, X=x)}.
  \end{align}
 \label{thm:fulldatarr}
\end{theorem}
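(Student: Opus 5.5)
The plan is to transform $\mu_P(y,x)$ step by step until it is written in terms of the full data distribution $P_F$ in the healthcare-seeking population. Then an odds-ratio calculation reduces $OR(P)(x)$ to a ratio of full data risks, and Assumption \ref{ass:coretnd} cancels the noncase terms. Throughout, Assumptions \ref{ass:amisclass} and \ref{ass:ymisclass} let us treat the recorded exposure and test result as the true $A$ and $Y$ in the observed data. Assumptions \ref{ass:posy}--\ref{ass:poss} guarantee that every conditioning event below has positive probability for almost every $x$, so all conditional probabilities and ratios are well defined.

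\emph{Step 1 (remove $\Delta$ and $S$).} By Assumption \ref{ass:conddeltaa}, $\mu_P(y,x)=P(A=1\mid D=1,S=1,Y=y,X=x)$. By Assumption \ref{ass:condsa}, this equals $P_F(A=1\mid D=1,Y=y,X=x)$. Since TND participants are assumed to have $U=1$, this is a probability in the healthcare-seeking population $\mathcal{P}$.

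\emph{Step 2 (Bayes inversion).} Write $p_a(y,x):=P_F(Y=y,D=1\mid A=a,X=x)$. By Bayes' rule,
\begin{equation*}
\frac{\mu_P(y,x)}{1-\mu_P(y,x)}=\frac{P_F(A=1\mid X=x)\,p_1(y,x)}{P_F(A=0\mid X=x)\,p_0(y,x)}.
\end{equation*}
Taking the ratio over $y=1$ versus $y=0$ cancels the propensity odds $P_F(A=1\mid x)/P_F(A=0\mid x)$. This gives
\begin{equation*}
OR(P)(x)=\frac{p_1(1,x)/p_0(1,x)}{p_1(0,x)/p_0(0,x)}.
\end{equation*}

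\emph{Step 3 (core TND assumption).} Assumption \ref{ass:coretnd} says $p_1(0,x)=p_0(0,x)$, so the denominator equals $1$. Hence $OR(P)(x)=p_1(1,x)/p_0(1,x)=RR(P_F)(x)$, as claimed.

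The main obstacle is the positivity bookkeeping. The Bayes step needs $p_a(y,x)>0$ and $0<P_F(A=1\mid x)<1$, not just $0<\mu_P<1$. I would first try to derive all of this from Assumption \ref{ass:posy}, arguing that each stratum with $D=1,S=1$ and either value of $Y$ has positive mass. That alone does not obviously give positivity of $p_a(0,x)$ for each $a$ separately. If needed, I would note that the odds ratio $OR(P)(x)$ is only defined when $0<\mu_P(y,x)<1$, and that this implicitly supplies exposure overlap within each case stratum. Combined with Assumption \ref{ass:coretnd}, it then yields the required positivity of $p_a(y,x)$ for both $a$.
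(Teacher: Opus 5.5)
Your proposal is correct and follows essentially the same route as the paper. You remove $\Delta$ and $S$ via Assumptions \ref{ass:conddeltaa} and \ref{ass:condsa}, pass to $P_F$ via Assumptions \ref{ass:amisclass}--\ref{ass:ymisclass}, and invert with Bayes' rule; your explicit cancellation of the propensity odds $P_F(A=1\mid x)/P_F(A=0\mid x)$ spells out what the paper's one-line Bayes step leaves implicit. You then cancel the noncase terms with Assumption \ref{ass:coretnd}, exactly as the paper does.

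Your positivity remark is more careful than the paper's proof. The paper simply asserts that Assumptions \ref{ass:posy}--\ref{ass:poss} make every conditional probability well defined, although none of them addresses exposure overlap. Your fix is sound, and it does not need Assumption \ref{ass:coretnd}: $0<\mu_P(y,x)<1$ for both $y$, together with Assumption \ref{ass:posy}, already gives $P_F(A=a,Y=y,D=1\mid X=x)>0$ for all $a,y$. That yields both $0<P_F(A=1\mid x)<1$ and $p_a(y,x)>0$.
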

Assumption \ref{ass:posy} and \ref{ass:poss} require that TND studies are conducted during seasons and in regions where SARS-CoV-2 and other causes of COVID-19-like symptoms (e.g., influenza, rhinovirus, seasonal allergies) are present and SARS-CoV-2 testing is available. Assumption \ref{ass:posdelta} posits that every $X=x$ subgroup that meets the symptom definition and obtains SARS-CoV-2 testing has some probability of having their exposure status observed.

Assumption \ref{ass:condsa} requires that obtaining a SARS-CoV-2 test at a TND study site is conditionally independent of exposure status, given meeting the symptom definition, SARS-CoV-2 infection status, and covariates. Limiting inference to a healthcare-seeking population, adjusting for covariates associated with testing behavior, and studying a severe COVID-19 outcome could help satisfy this assumption \cite{jackson_test-negative_2013,sullivan_theoretical_2016,lewnard_theoretical_2021,shi_comparison_2017}. To satisfy Assumption \ref{ass:conddeltaa}, TND studies should adjust for covariates related to vaccination status missingness or the two-phase sampling design.

Assumption \ref{ass:coretnd} requires that the incidence of COVID-19-like symptoms not caused by SARS-CoV-2 is the same in both exposure status groups in the healthcare-seeking population with the same covariates. This is satisfied if the exposure status does not affect other causes of COVID-19-like symptoms and confounders of exposure status and other causes of COVID-19-like symptoms (e.g., comorbidities, age, vaccination against other respiratory pathogens) are accounted for \cite{doll_effects_2022, jiang_double_2025, payne_impact_2023}. 

Assumptions \ref{ass:amisclass} and \ref{ass:ymisclass} require that the exposure status and SARS-CoV-2 test result collected at testing accurately measure exposure status at the time of potential SARS-CoV-2 exposure and SARS-CoV-2 infection, respectively.

\subsection{Identification of the full data causal conditional risk ratio}
%Causal Parameter
\label{subsec:causalparam}
To identify the full data causal conditional risk ratio, we require standard causal assumptions \cite{robins_causal_2024,van_der_laan_semiparametric_2025,schnitzer_estimands_2022}: 

\begin{assumption} \textit{Consistency}\\
$\{Y(a), D(a)|A=a \} \overset{d}{=} \{Y, D | A=a\}$ for $ a=\{0,1\}.$
\label{ass:cons}
\end{assumption}
\begin{assumption}
 \textit{No interference}\\
$Y_i(a),D_i(a)\independent A_k$ for $a=\{0,1\}$ and all pairs of individuals $i\neq k$.
\label{ass:inter}
\end{assumption}
\begin{assumption}
\textit{No unmeasured confounding}\\
 $ Y(a), D(a) \independent A|X$ for $a=\{0,1\}.$
 \label{ass:confound}
\end{assumption}

\begin{theorem} Identification of full data causal conditional risk ratio

Given the identifying Assumptions \ref{ass:posy}-\ref{ass:confound}, the observed conditional odds ratio $OR(P)(x)$ equals the full data causal conditional risk ratio $RR(P_{F,ca})(x)$ .
    \label{thm:causalrr}
\end{theorem}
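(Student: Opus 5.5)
The plan is to reduce Theorem \ref{thm:causalrr} to Theorem \ref{thm:fulldatarr}. Assumptions \ref{ass:posy}--\ref{ass:ymisclass} are among the hypotheses, so Theorem \ref{thm:fulldatarr} already gives
\[
OR(P)(x)=RR(P_F)(x)=\frac{P_F(Y=1,D=1\mid A=1,X=x)}{P_F(Y=1,D=1\mid A=0,X=x)} .
\]
It then suffices to show that, for each $a\in\{0,1\}$, the conditional probability $P_F(Y=1,D=1\mid A=a,X=x)$ equals $P_{F,ca}(Y(a)=1,D(a)=1\mid X=x)$. Applying this to the numerator and denominator separately turns $RR(P_F)(x)$ into $RR(P_{F,ca})(x)$ in \eqref{eqn:causalrr}.

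For fixed $a$, I would carry out the standard g-formula chain:
\begin{align*}
P_{F,ca}(Y(a)=1,D(a)=1\mid X=x)
&=P_{F,ca}(Y(a)=1,D(a)=1\mid A=a,X=x)\\
&=P_F(Y=1,D=1\mid A=a,X=x).
\end{align*}
The first equality uses no unmeasured confounding (Assumption \ref{ass:confound}), read as joint independence of $(Y(a),D(a))$ and $A$ given $X$. The second uses consistency (Assumption \ref{ass:cons}), applied within the stratum $X=x$. No interference (Assumption \ref{ass:inter}) is what makes each individual's potential outcomes well defined, so that $O_{F,ca}$ is an i.i.d.\ structure and the preceding display makes sense.

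I expect the main obstacle to be technical care rather than any deep argument. First, Assumption \ref{ass:cons} is stated as a distributional identity given $A=a$ only. I would interpret it as holding conditionally on $X$ as well, or equivalently use the usual pathwise version: $A=a$ implies $(Y,D)=(Y(a),D(a))$. Second, the conditioning events must have positive probability. The denominators must be nonzero, which follows from Assumption \ref{ass:posy} together with the positivity implicit in $OR(P)(x)$ being well defined. The events $\{A=a,X=x\}$ must also have positive probability; this is needed for $\mu_P(y,x)\in(0,1)$ in \eqref{eqn:or}, so it is implicitly assumed. With these points addressed, combining the display for $a=1$ and $a=0$ with Theorem \ref{thm:fulldatarr} completes the argument.
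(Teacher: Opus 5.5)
Your proposal is correct and follows the paper's proof essentially step for step. You invoke Theorem \ref{thm:fulldatarr} to get $OR(P)(x)=RR(P_F)(x)$, then use consistency (with no interference) to replace $P_F(Y=1,D=1\mid A=a,X=x)$ by $P_{F,ca}(Y(a)=1,D(a)=1\mid A=a,X=x)$, and no unmeasured confounding to drop the conditioning on $A$; your notes on reading Assumption \ref{ass:cons} within strata of $X$ and on positivity of the conditioning events only make explicit what the paper leaves implicit.
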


Consistency requires no variability within exposure statuses that may affect the potential outcomes and that all individuals can achieve both exposure statuses \cite{rubin_causal_2005, hudgens_toward_2008}. TND studies should carefully define the population of inference and exposure statuses and account for additional variables, like vaccination history and prior infection, to satisfy this assumption. To minimize interference  \cite{cox_planning_1958,rubin_randomization_1980}, TND studies could enroll individuals at many sites, limit one individual per household, or incorporate methods that account for interference \cite{hudgens_toward_2008, morgan_social_2013,schnitzer_estimands_2022}. To satisfy no unmeasured confounding, TND studies should measure and adjust for characteristics that may affect individuals’ exposure status, SARS-CoV-2 infection, and/or symptom development.

\section{Semiparametric estimation and inference}
\label{sec:semi}
In this section, we describe a semiparametric model to construct an efficient but asymptotically biased substitution estimator of $OR(P)(x)$. We then update this estimator using targeted maximum likelihood estimation to obtain a consistent, asymptotically linear, debiased TMLE.

\subsection{Partially linear logistic regression model}
\label{subsec:semiass}
To obtain an expression for $OR(P)(x)$, we impose a partially linear logistic regression model for $P_F(A=1|D=1,Y=y,  X=x)$ \cite{tchetgen_tchetgen_doubly_2010}:
\begin{equation}\text{logit} \{P_F(A=1|D=1,Y=y,  X=x)\} = y\beta(P_F)^T \underline{f}(x) + h_{P_F}(x) 
\label{eqn:semilogit}
\end{equation} for an unknown vector of coefficients $\beta(P_F)\in\mathbb{R}^b$, a known vector-valued function $\underline{f}:\mathbb{R}^r\to\mathbb{R}^b$, and an unspecified function $h_{P_F}(x):=\text{logit} \{P_F(A=1|D=1,Y=0,  X=x)\}$. This semiparametric model for $P_F$ provides an interpretable relationship between exposure status and case status that can incorporate effect modification, $\exp\left(\beta(P_F)^T \underline{f}(x)\right)=OR(P_F)(x)$, and nonparametrically adjusts for covariates to help satisfy the assumptions in Sections \ref{subsec:statparam} and \ref{subsec:causalparam}. The partially linear logistic regression model is advantageous because it provides the same useful interpretation as the ordinary logistic regression model without having to specify the relationships between covariates $X$ and exposure $A$.

\subsection{Plug-in estimator}
\label{subsec:plugin}

Let $\mathcal{M}$ be a collection of distributions that satisfy $\mu_{P'}(y,x)=\text{expit}\{y\beta(P')^{T} \underline{f}(x) + h_{P'}(x)\},$ where $\beta(P')\in\mathbb{R}^b$ and $h_{P'}(w,t):\mathbb{R}^r\to\mathbb{R}^b$ for any $P'\in\mathcal{M}$. Our observed TND data has sample size $n$ and is generated by $P\in\mathcal{M}$. Thus, $P$ must satisfy $\log OR(P)(x)=\beta(P)^T\underline{f}(x)$ for $\beta(P)\in\mathbb{R}^b$.

To construct the efficient influence function of $\beta(P')$ for $P'\in\mathcal{P}$, we define the conditional probability of COVID-19 as $\widetilde{\pi}_{P'}(x) := P'(Y=1,D=1\mid S=1, \Delta=1, X=x)$, the conditional variance as $\sigma^2_{P'}(y,x):=\mu_{P'}(y,x)(1-\mu_{P'}(y,x))$, the invertible scaling matrix as
\begin{equation*}
    {\Lambda}_{P'}^{-1} := E_{P'}\left[\underline{f}(X)\underline{f}(X)^T\frac{\Delta(1-\widetilde{\pi}_{P'}(X))\widetilde{\pi}_{P'}(X)\sigma^2_{P'}(1,X)\sigma^2_{P'}(0,X)}{ (1-\widetilde{\pi}_{P'}(X))\sigma^2_{P'}(0,X) +\widetilde{\pi}_{P'}(X)\sigma^2_{P'}(1,X) }\right],
\end{equation*} and
\begin{equation*}
H_{P'}(y,x):=\left(y-\frac{E_{P'}\left[Y\sigma^2_{P'}(1,x)|\Delta=1,X=x\right]}{E_{P'}\left[\sigma_{P'}^2(Y,x)|\Delta=1,X=x\right]}\right).
\end{equation*} 

\begin{theorem} Efficient influence function of $\beta(P)$

The vector-valued efficient influence function of $\beta(P)$ for $P\in\mathcal{M}$ is 
\begin{equation} D_{P}(o) :=   (\Lambda_{P}\underline{f}(x))\circ \delta H_{P}(y,x)  \left[ \delta a  - \mu_{P}(y,x)\right], \label{eqn:eif} \end{equation}
where $\circ$ represents the coordinate-wise product. \label{thm:eif}
\end{theorem}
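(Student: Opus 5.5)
We will derive the efficient influence function by the standard projection argument for a semiparametric model. The model $\mathcal{M}$ restricts only the conditional law of $A$ given $(Y,X)$ among units with $\Delta=1$ (with $D=S=1$ implicit in the TND data). Accordingly, we factor the likelihood of $O=(\Delta,\Delta A,Y,X)$ as
\[
p(x)\,p(y\mid x)\,p(\delta\mid y,x)\,\bigl[\mu_P(y,x)^{a}(1-\mu_P(y,x))^{1-a}\bigr]^{\delta}.
\]
The tangent space then decomposes orthogonally into three pieces: the saturated scores for the marginal of $X$, $Y\mid X$, and $\Delta\mid Y,X$, all of which are unrestricted; and the scores of the partially linear logistic model for $A\mid Y,X,\Delta=1$. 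The scores of that last model are of the form
\[
\delta\bigl\{y\,\underline{f}(x)^T s+g(x)\bigr\}\bigl(a-\mu_P(y,x)\bigr)
\]
with $s\in\mathbb{R}^b$ and $g$ arbitrary. Writing $\delta(a-\mu_P)$ for $\delta a-\mu_P$ on $\{\delta=1\}$, the EIF must be a tangent-space element with the pathwise derivative property. Because $\beta$ depends only on $\mu_P$, the EIF lies in the $A$-score component.

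\textbf{Step 1: efficient score for $\beta$.} We project the $\beta$-score $\delta\,y\,\underline{f}(x)(a-\mu_P)$ onto the orthocomplement of the nuisance scores $\{\delta g(X)(a-\mu_P)\}$. For each coordinate we seek $g^*$ such that
\[
E\bigl[\delta\,(Y\underline{f}(X)-g^*(X))\,(A-\mu_P)^2\,g(X)\bigr]=0 \quad\text{for all } g.
\]
Conditioning on $A$ given $(\Delta=1,Y,X)$ replaces $(A-\mu_P)^2$ by $\sigma^2_P(Y,X)$. This gives
\[
g^*(x)=\underline{f}(x)\,\frac{E[Y\sigma^2_P(1,x)\mid\Delta=1,X=x]}{E[\sigma^2_P(Y,x)\mid\Delta=1,X=x]},
\]
using $Y\sigma^2_P(Y,x)=Y\sigma^2_P(1,x)$. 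The efficient score is therefore
\[
S^*=\underline{f}(x)\,\delta\,H_P(y,x)\,(a-\mu_P(y,x)).
\]

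\textbf{Step 2: information matrix.} The EIF is $I^{-1}S^*$, where
\[
I=E\bigl[\Delta\,\underline{f}\underline{f}^T H_P^2\,\sigma^2_P(Y,X)\bigr].
\]
Given $\Delta=1$ and $X$, $Y$ is Bernoulli with probability $\widetilde{\pi}_P(X)$, interpreted as the conditional probability of a case among observed-exposure TND units. Write $\widetilde\pi=\widetilde\pi_P(X)$ and $\sigma^2_y=\sigma^2_P(y,X)$. Then
\[
H_P(1,x)=\frac{(1-\widetilde\pi)\sigma^2_0}{(1-\widetilde\pi)\sigma^2_0+\widetilde\pi\sigma^2_1},\qquad
H_P(0,x)=-\frac{\widetilde\pi\sigma^2_1}{(1-\widetilde\pi)\sigma^2_0+\widetilde\pi\sigma^2_1}.
\]
Averaging $H_P^2\sigma^2_P(Y,X)$ over $Y$ yields
\[
\frac{\widetilde\pi(1-\widetilde\pi)\sigma^2_1\sigma^2_0}{(1-\widetilde\pi)\sigma^2_0+\widetilde\pi\sigma^2_1},
\]
so $I=\Lambda_P^{-1}$ as defined above.

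\textbf{Step 3: the displayed form.} The theorem writes the EIF as $(\Lambda_P\underline{f})\circ\delta H_P(\delta a-\mu_P)$. Since $H_P$ is scalar, the coordinate-wise product with a scalar factor is just $\Lambda_P\underline{f}(x)\,\delta H_P(a-\mu_P)$, which is exactly $I^{-1}S^*$. It remains to confirm the pathwise differentiability identity
\[
\frac{d}{d\epsilon}\beta(P_\epsilon)=E[D_P\,\text{score}]
\]
along submodels; this follows from the standard efficient-score theory for semiparametric models with a Euclidean parameter (e.g.\ van der Laan and Gilbert \cite{van_der_laan_semiparametric_2025}). We also note that the $\Delta$-dependence requires Assumption~\ref{ass:posdelta} so that $\Lambda_P^{-1}$ is finite and invertible.

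The main obstacle is Step 1. There we must carefully justify that the nuisance tangent space for $h$ is all of $\{\delta g(X)(A-\mu_P)\}$. We must also check that the scores for the $X$, $Y\mid X$ and $\Delta\mid Y,X$ factors are orthogonal to the $A$-score component. The latter holds because these scores do not involve $A$, while $E[A-\mu_P\mid \Delta=1,Y,X]=0$; thus the projection reduces to a weighted least squares problem in $g$ given $X$.
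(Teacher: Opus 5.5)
Your derivation is correct, but it is not how the paper argues. The paper's proof is purely by citation. It takes the efficient influence function for the partially linear logistic coefficient from Theorem~4 of van der Laan and Gilbert~\cite{van_der_laan_semiparametric_2025}. It then translates notation through Table~\ref{tab:notation} ($J\mapsto A$, $A(t)\mapsto (Y,D=1)$, $W\mapsto X$, $R\mapsto S$, with $T$ and $W_T$ discarded).

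You instead work from first principles:
\begin{enumerate}
\item You factor the observed-data likelihood and observe that the $X$, $Y\mid X$ and $\Delta\mid Y,X$ scores are orthogonal to anything of the form $\delta k(Y,X)(A-\mu_P)$.
\item You obtain $H_P$ as the residual of a weighted least-squares projection onto $\{\delta g(X)(A-\mu_P)\}$, using $Y\sigma^2_P(Y,x)=Y\sigma^2_P(1,x)$.
\item You check, by averaging over $Y\mid\Delta=1,X\sim\text{Bernoulli}(\widetilde\pi_P(X))$, that the efficient information is exactly $\Lambda_P^{-1}$.
\end{enumerate}
The only outside input is the general fact that the efficient influence function of a Euclidean parameter is the inverse efficient information times the efficient score.

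Your route gives a self-contained check that the closed forms of $H_P$ and $\Lambda_P$ are right in the TND setting, which the paper's translation asserts rather than shows. For instance, it shows that $\widetilde\pi_P$ must be the case probability among $\Delta=1$ participants, and that dropping $T$ and $W_T$ leaves the form unchanged. The paper's route is shorter and inherits pathwise differentiability and regularity from the cited theorem.

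One small correction: Assumption~\ref{ass:posdelta} alone does not make $\Lambda_P^{-1}$ invertible. It guarantees $P(\Delta=1\mid X)>0$, so the conditioning in $H_P$ is well defined. You also need the following, so that the weight in $\Lambda_P^{-1}$ is almost surely positive:
\begin{itemize}
\item $\widetilde\pi_P\in(0,1)$, from Assumptions~\ref{ass:posy} and~\ref{ass:posdelta};
\item $\sigma^2_P>0$, from Condition~\ref{cond:bound};
\item invertibility of $E_P[\underline{f}(X)\underline{f}(X)^T]$, from Condition~\ref{cond:inv}.
\end{itemize}
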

We obtain this efficient influence function by adapting the efficient influence function from the partially linear logistic regression model coefficient \cite{van_der_laan_readings_2009,tchetgen_tchetgen_doubly_2010, van_der_laan_semiparametric_2025} and conditioning on $D=S=\Delta=1$.

We estimate $\beta(P)$ and $h_P$ in the observed TND data, such that
\begin{align} 
\{ \beta(P_n),  h_{P_n}\}:= \argmin_{\beta(P) \in \mathbb{R}^b, h_P \in \mathcal{H}} \frac{1}{n}\sum_{i=1}^n& S_i \Delta_i \left\{ \log\left\{1 + \exp\left(Y_i\beta(P)^T \underline{f}(X_i) + h_P(X_i) \right) \right\}\right.  \nonumber \\
&\left. - A_i \left(Y_i\beta(P)^T \underline{f}(X_i) + h_P(X_i) \right) \right\},
\end{align}
where $\mathcal{H}$ is a class of nuisance functions, such as those with bounded variation. Minimizing this risk is equivalent to fitting a partially linear logistic regression of $A$ on $Y$ and $X$ using individuals in the TND data with complete vaccination or immune marker information. 

Machine-learning algorithms that meet the partially linear logistic model constraint, such as generalized additive models \cite{hastie_generalized_1987}, smoothing splines \cite{friedman_multivariate_1991}, highly adaptive lasso \cite{benkeser_highly_2016}, or ensemble methods, like super learner \cite{van_der_laan_super_2007}, can be used for estimation to reduce bias from model misspecification. Since model performance varies by dataset, ensemble methods are advantageous because they can consider many machine-learning algorithms for estimation and can achieve the same performance as the optimal estimator for a dataset when a rich and diverse library of parametric and nonparametric machine-learning algorithms is considered \cite{van_der_laan_unified_2003-1,dudoit_asymptotics_2005,van_der_vaart_oracle_2006, van_der_laan_super_2007}.  

The resulting plug-in estimator, $ \mu_{P_n}(y,x) := \text{expit} \left\{  y\beta(P_{n})^T \underline{f}(x) +  h_{P_n}(x)\right\}$, respects the model constraints of $\mathcal{M}$ and is a plug-in estimator that can be used to estimate $OR(P)(x)$,
\begin{equation}
OR(P_{n})(x) := \frac{\mu_{P_n}(1,x)(1-\mu_{P_n}(0,x))}{\mu_{P_n}(0,x)(1-\mu_{P_n}(1,x))} = \exp \left\{ \beta(P_{n})^T \underline{f}(x) \right\}.
\label{eqn:estOR}
\end{equation}
However, flexibly estimating the nuisance functions causes bias in $\beta(P_n)$ \cite{bickel_efficient_1993, van_der_laan_semiparametric_2025} and must be addressed. 

\subsection{Targeted maximum likelihood estimation}
\label{subsec:tmle}
We apply targeted maximum likelihood estimation to update and debias $\beta(P_n)$. The TMLE $\beta(P_n^*)$ is constructed by updating $P_{n}$ to $P_n^*$ using an iterative algorithm involving maximum likelihood estimation to solve the efficient score equation $\frac{1}{n}\sum_{i=1}^nD_{P_n^*}(O_i)= 0$.  The algorithm fluctuates $P_n\in\mathcal{M}$ along the least favorable model, as determined by the efficient influence function \cite{van_der_laan_targeted_2011}. Our estimator fluctuates along the logistic fluctuation submodel $(P'(\varepsilon):\varepsilon\in\mathbb{R}^b)\ \subset\mathcal{M}$ and varies $\mu_{P'(\varepsilon)}(y,x)$ through $\varepsilon$ while holding the other components of $P_n$ constant. The submodel satisfies
$$\varepsilon  \mapsto \text{logit}\{\mu_{P'(\varepsilon)}(y,x)\}=\text{logit}\{\mu_{P'}(y,x)\}+\varepsilon^T\underline{f}(x)H_{P'}(y,x): \varepsilon\in\mathbb{R}^b,$$
where $\mu_{P'(\varepsilon)}(y,x)=\text{expit}\{y\beta(P'(\varepsilon))^T\underline{f}(x)+h_{P'(\varepsilon)}(x)\}$, $\beta(P'(\varepsilon))=\beta(P')+\varepsilon$, and $h_{P'(\varepsilon)}(x)=h_{P'}(x)+\varepsilon^T\underline{f}(x)H_{P'}(y=0,x)$, which respect the constraints of $\mathcal{M}$. At each step, maximum likelihood estimation is performed according to the following working log-likelihood function,
$$L_n(P'):=\frac{1}{n}\sum_{i=1}^n\Delta_i\left\{A_i\log\mu_{P'}(Y_i,X_i)+(1-A_i)\log(1-\mu_{P'}(Y_i,X_i))\right\},$$
with score vector,

$$\frac{d}{d\varepsilon}L_n(P'(\varepsilon)):=\frac{1}{n}\sum_{i=1}^n\Delta_iH_{P'}(Y_i,X_i)[A_i-\mu_{P'(\varepsilon)}(Y_i,X_i)]. $$

Formally, at step $k=0$, the initial estimate of $P$ is $P_{n}^{(k=0)} =P_{n}\in\mathcal{M}$, the plug-in estimator in Section \ref{subsec:plugin}. Then, for steps $k\in\mathbb{N}$ , we compute $P_{n}^{k}=P_{n}^{(k-1)}(\varepsilon_n^{(k)})$ and consequently,  $\beta(P_{n}^{k})=\beta(P_{n}^{(k-1)})+\varepsilon_n^{(k)}$, where $\hat{\varepsilon}_n^{(k)}=\argmax_{\varepsilon\in\mathbb{R}^b} L_n(P_{n}^{(k-1)}(\varepsilon))$. The process ends at step $k=K$ when $P_{n}^{K}=P_n^*$ results in $\frac{1}{n}\sum_{i=1}^nD_{P_n^*}(O_i)\approx o_P\left(\frac{1}{\sqrt{n}}\right)$ and $\varepsilon_n^{(K)}$ is nearly the zero vector. The resulting TMLEs are $\beta(P_n^*)$ for $\beta(P)$ and $OR(P_n^*)(x)=\exp \left\{  \beta(P_{n}^{*})^T \underline{f}(x) \right\}$ for $OR(P)(x)$, where $P_n^*\in\mathcal{M}$. 

\subsection{Asymptotic properties}
\label{subsec:asymp}

The TMLEs, $\beta(P_n^*)$ and $OR(P_n^*)(x)$, are substitution estimators and one-step estimators that are efficient and asymptotically linear under the following regularity conditions \cite{bickel_efficient_1993, van_der_laan_semiparametric_2025}:

\begin{condition}
$\underline{f}(X)$ is $P_F$-uniformly bounded and
$E_P \left[ \underline{f} (X) \underline{f}(X)^T \right]$ is invertible. 
\label{cond:inv}
\end{condition}
\begin{condition} There exists some $\epsilon>0$ such that \newline $P(1-\epsilon >\mu_{P_{n}^*}(Y,X) > \epsilon) = P(1-\epsilon>\mu_P(Y,X)>\epsilon)=1$.
\label{cond:bound}
\end{condition}
\begin{condition} $\widetilde{\pi}_{P_n}(X)$ and $\mu_{P_{n}^*}(Y,X)$  are in a uniformly bounded $P$-Donsker function class with probability 1. 
\label{cond:donsker}
\end{condition}
\begin{condition} $\|\widetilde{\pi}_{P_n}(X) - \widetilde{\pi}_P(X)\| = o_P(n^{-1/4})$ and $\max_{y\in\{0,1\}}\|\mu_{P_{n}^*}(Y,X)-\mu_{P}(Y,X) \|= o_P(n^{-1/4})$, where $\|\cdot\|$ is the $L_2(P)$ norm.
\label{cond:nuisrate}
\end{condition}

\begin{theorem} Asymptotic distribution of $\beta(P_{n}^*)$ 

Given $P\in\mathcal{M}$, Assumptions \ref{ass:posy}-\ref{ass:conddeltaa}, \ref{ass:amisclass}, and \ref{ass:ymisclass} and Conditions \ref{cond:inv}-\ref{cond:nuisrate}, the TMLE $\beta(P_{n}^*)$ is efficient and asymptotically linear with influence function $D_P$ and asymptotic distribution
\[\sqrt{n} \left(\beta(P_{n}^*) - \beta(P) \right) \longrightarrow_d N\left(0, \text{cov}(D_P(O)) \right),\]
where $\text{cov}(D_{P}(O)) := E\left[D_{P}(O)D_{P}(O)^T \right]\in \mathbb{R}^b \times \mathbb{R}^b$. The covariance can be consistently estimated using the empirical covariance of the efficient influence function.
\label{thm:asympb}
\end{theorem}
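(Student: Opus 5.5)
The plan is the standard TMLE argument: an exact first-order expansion of $\beta$ around $P_n^*$, combined with empirical process control and a second-order remainder bound. Let $P_n^*$ be the final TMLE update. Because the logistic fluctuation submodel has score $\Delta H_{P'}(Y,X)[A-\mu_{P'}(Y,X)]$ in the direction $\underline{f}(X)$, the algorithm's stopping rule gives $\frac{1}{n}\sum_i D_{P_n^*}(O_i)=o_P(n^{-1/2})$. The only subtlety is that $D_{P_n^*}$ involves $\Lambda_{P_n^*}$ and $\widetilde{\pi}_{P_n}$, which are held fixed during fluctuation. Since $\Lambda_{P_n^*}$ is a fixed invertible matrix multiplying a score that is already solved coordinatewise, the efficient score equation holds up to $o_P(n^{-1/2})$. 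Theorem \ref{thm:eif} then gives the pathwise derivative identity
\[
\beta(P_n^*)-\beta(P) = -P D_{P_n^*} + R(P_n^*,P),
\]
where $R$ is an exact remainder. Adding and subtracting the empirical mean yields
\[
\beta(P_n^*)-\beta(P) = (P_n - P)D_P + (P_n-P)(D_{P_n^*}-D_P) - P_n D_{P_n^*} + R(P_n^*,P).
\]
It then suffices to show the last three terms are $o_P(n^{-1/2})$. The multivariate CLT, using Condition \ref{cond:bound} and the boundedness of $\underline{f}$ from Condition \ref{cond:inv} to ensure finite variance, then gives the stated normal limit with covariance $E[D_PD_P^T]$.

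\textbf{Computing the remainder.} I would compute $R$ explicitly by writing $P D_{P'}$ in terms of $\mu_{P'}$, $\widetilde\pi_{P'}$ and $\Lambda_{P'}$. Conditioning on $(\Delta=1,Y,X)$ and using Assumption \ref{ass:conddeltaa} (so $\mu_P$ is the regression of $A$ among $\Delta=1$) gives
\[
P D_{P'} = E\big[\Lambda_{P'}\underline f(X)\,\Delta H_{P'}(Y,X)\{\mu_P(Y,X)-\mu_{P'}(Y,X)\}\big].
\]
Write $\mu_P-\mu_{P'}$ on the logit scale as $Y(\beta(P)-\beta(P'))^T\underline f+(h_P-h_{P'})$ plus a quadratic error, via a Taylor expansion of expit made valid by Condition \ref{cond:bound}. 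Two facts then do the work. First, the term linear in $h_P-h_{P'}$ vanishes up to second order, because $H_{P'}$ is constructed so that $E[\Delta H_{P'}\sigma^2_{P'}(Y,X)\mid X]=0$, i.e. $H_{P'}$ is orthogonal to the nuisance tangent space. Second, the term linear in $\beta(P)-\beta(P')$ produces exactly $\Lambda_{P'}\Lambda_{P'}^{-1}(\beta(P)-\beta(P'))$, by the definition of $\Lambda$ (the weight $\widetilde\pi(1-\widetilde\pi)\sigma^2(1)\sigma^2(0)/\{\cdots\}$ is $E[\Delta\sigma^2(Y,X)YH(Y,X)\mid X]$). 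What remains is a sum of products, namely $(\widetilde\pi_{P_n}-\widetilde\pi_P)$ times $(\mu_{P_n^*}-\mu_P)$ and $(\mu_{P_n^*}-\mu_P)^2$, weighted by bounded functions. By Cauchy--Schwarz and Condition \ref{cond:nuisrate}, $R=o_P(n^{-1/2})$.

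\textbf{Empirical process and consistency.} The term $(P_n-P)(D_{P_n^*}-D_P)$ is $o_P(n^{-1/2})$ by asymptotic equicontinuity. $D_{P'}$ is a Lipschitz transformation of $(\widetilde\pi_{P'},\mu_{P'})$ and the matrix $\Lambda_{P'}$, with denominators bounded away from zero by Conditions \ref{cond:bound} and \ref{ass:posy}--\ref{ass:posdelta}. Lipschitz images of the Donsker classes in Condition \ref{cond:donsker} are therefore Donsker. The $L_2$ consistency $\|D_{P_n^*}-D_P\|\to0$ follows from Condition \ref{cond:nuisrate} together with consistency of $\Lambda_{P_n^*}$; invertibility of $\Lambda^{-1}_P$ uses Condition \ref{cond:inv} and the positivity of the weight. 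Efficiency follows because the influence function equals the canonical gradient of Theorem \ref{thm:eif}. Consistency of the empirical covariance $\frac1n\sum_i D_{P_n^*}(O_i)D_{P_n^*}(O_i)^T$ follows from the same $L_2$ convergence, uniform boundedness, and a Glivenko--Cantelli argument.

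\textbf{Main obstacle.} I expect the hardest step to be verifying the exact double-robust structure of $R$, i.e. that the first-order terms in $h_{P_n^*}-h_P$ cancel even though $H_{P_n^*}$ uses $\widetilde\pi_{P_n}$ rather than $\widetilde\pi_P$. My first approach would be to write $E[\,\cdot\mid \Delta=1,X]$ as a two-point sum over $Y\in\{0,1\}$ with weights $\widetilde\pi$ and $1-\widetilde\pi$. Then the mismatch between $\widetilde\pi_{P_n}$ and $\widetilde\pi_P$ multiplies $\mu_{P_n^*}-\mu_P$ explicitly, giving the required product form.
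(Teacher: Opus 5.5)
Your proposal is correct and follows the paper's proof. Both use the same decomposition $\beta(P_n^*)-\beta(P) = -P_nD_{P_n^*} + (P_n-P)D_P + (P_n-P)(D_{P_n^*}-D_P) + R$: the first term is killed by the TMLE score equation, the empirical-process term by the Donsker/Lipschitz argument (Lemma 19.24 of van der Vaart), and the limit follows from the CLT plus Slutsky. The only difference is in the remainder $R$, where you sketch the second-order bound directly (the orthogonality $E_{P'}[\Delta H_{P'}\sigma^2_{P'}\mid X]=0$, the $\Lambda_{P'}$ normalization, and the $\widetilde\pi$-mismatch yielding product terms), whereas the paper simply cites Lemma 1 of van der Laan and Gilbert for $R=o_P(n^{-1/2})$.
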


\begin{theorem} Asymptotic distribution of $OR(P_n^*)(x)$

    Under the same assumptions and conditions as Theorem \ref{thm:asympb}, $\log OR(P_n^*)(x)= \beta(P_{n}^{*})^T \underline{f}(x) $ is asymptotically linear, efficient, and asymptotically normal with asymptotic distribution
$$\sqrt{n} \left(\log OR(P_n^*)(x) -\log OR(P)(x) \right) \longrightarrow_d N\left(0, \underline{f}(x)^T\text{cov}(D_P(O)) \underline{f}(x)\right).$$
The asymptotic covariance matrix can be consistently estimated using the empirical covariance of the efficient influence function.
\label{thm:asympor}
\end{theorem}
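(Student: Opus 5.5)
This is a delta-method corollary of Theorem \ref{thm:asympb}: for fixed $x$, $\log OR(P_n^*)(x)$ is a fixed linear functional of the TMLE $\beta(P_n^*)$. By the plug-in identity \eqref{eqn:estOR}, applied to both $P_n^*\in\mathcal{M}$ and $P\in\mathcal{M}$,
\[
\log OR(P_n^*)(x)-\log OR(P)(x)=\underline{f}(x)^T\big(\beta(P_n^*)-\beta(P)\big).
\]
Here $\underline{f}(x)$ is a deterministic vector, finite by Condition \ref{cond:inv}, so no Taylor remainder arises.

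First I would transfer asymptotic linearity. Theorem \ref{thm:asympb} gives
\[
\beta(P_n^*)-\beta(P)=\frac{1}{n}\sum_{i=1}^n D_P(O_i)+o_P(n^{-1/2}).
\]
Multiplying on the left by $\underline{f}(x)^T$ shows that $\log OR(P_n^*)(x)$ is asymptotically linear with influence function $\underline{f}(x)^T D_P(O)$. This function has mean zero because $D_P$ does.

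Second, I would apply the continuous mapping theorem (equivalently, Cramér--Wold) to the limit $N(0,\text{cov}(D_P(O)))$ from Theorem \ref{thm:asympb}. This yields $N\big(0,\underline{f}(x)^T\text{cov}(D_P(O))\underline{f}(x)\big)$.

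Third, efficiency. In model $\mathcal{M}$ the parameter $P\mapsto \log OR(P)(x)=\underline{f}(x)^T\beta(P)$ is a linear transformation of the pathwise differentiable parameter $\beta(P)$. Its canonical gradient is therefore $\underline{f}(x)^T D_P$, where $D_P$ is the efficient influence function of Theorem \ref{thm:eif}. This holds because pathwise derivatives commute with fixed linear maps, and $\underline{f}(x)^T D_P$ still lies in the tangent space since the tangent space is a closed linear space containing each coordinate of $D_P$. An estimator whose influence function equals the canonical gradient is efficient, so efficiency follows.

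Finally, variance estimation. I would show that the empirical covariance $\frac1n\sum_i D_{P_n^*}(O_i)D_{P_n^*}(O_i)^T$ is consistent for $\text{cov}(D_P(O))$, after which continuous mapping gives consistency of $\underline{f}(x)^T\widehat{\text{cov}}\,\underline{f}(x)$. Consistency of the empirical covariance rests on two facts:
\begin{itemize}
\item $D_{P_n^*}\to D_P$ in $L_2(P)$. This follows from Conditions \ref{cond:bound} and \ref{cond:nuisrate}, since $\Lambda_{P}$ and $H_P$ are smooth functions of $\mu_P$ and $\widetilde{\pi}_P$ with denominators bounded away from zero.
\item The class of products of such bounded functions is Glivenko--Cantelli, by Condition \ref{cond:donsker}.
\end{itemize}

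This last step is the only one needing real work. The main obstacle I expect is checking that $\Lambda_{P_n^*}$ converges to $\Lambda_P$, which requires that the matrix inverse stays bounded. I would handle this with Condition \ref{cond:inv} together with the lower bounds on $\sigma^2$ from Condition \ref{cond:bound}. Everything else is immediate from Theorem \ref{thm:asympb}.
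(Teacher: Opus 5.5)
Your proposal is correct and follows the paper's proof. The paper applies the multivariate delta method to the linear map $g(\beta)=\beta^T\underline{f}(x)$ with gradient $\underline{f}(x)$, which is exactly your continuous-mapping step, with no remainder since $g$ is linear. Your further arguments for asymptotic linearity, efficiency (the canonical gradient of $\underline{f}(x)^T\beta(P)$ is $\underline{f}(x)^T D_P$) and consistency of the variance estimator are sound additions; the paper's proof asserts these parts without proving them, and your one refinement needed is that positive definiteness of $\Lambda_P^{-1}$ also requires $\widetilde{\pi}_P(X)\in(0,1)$ and $P(\Delta=1\mid D=1,S=1,Y,X)>0$ from Assumptions \ref{ass:posy}--\ref{ass:posdelta} (its weight contains the factor $\Delta\,\widetilde{\pi}_P(1-\widetilde{\pi}_P)$), not only Conditions \ref{cond:inv}--\ref{cond:bound}.
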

Condition \ref{cond:inv}  requires the user-defined $\underline{f}(x)$ to be uniformly bounded and have an invertible second moment to uniquely identify $\beta(P)$ from $P$. Condition \ref{cond:bound} bounds the observed TND conditional odds ratio estimand and estimator and is necessary for stability. Conditions \ref{cond:donsker} and \ref{cond:nuisrate} require well-behaved nuisance estimators that converge faster than $n^{-1/4}$. Some machine-learning methods, like random forests and gradient boosting, require cross-fitting \cite{schick_asymptotically_1986, van_der_laan_targeted_2011,chernozhukov_doubledebiased_2018} to satisfy Condition \ref{cond:donsker}. Other machine-learning methods, like smoothing splines \cite{friedman_multivariate_1991,tibshirani_regression_1996}, highly adaptive lasso \cite{benkeser_highly_2016}, and reproducing kernel Hilbert space estimators, satisfy Conditions \ref{cond:donsker} and \ref{cond:nuisrate}. 

Code to implement the TMLE approach in R version 4.4.0 \cite{r_core_team_r_2024} using causalglm \cite{van_der_laan_causalglm_2024}, sl3 \cite{coyle_sl3_2021}, nnls \cite{mullen_nnls_2024}, and hal9001 \cite{hejazi_hal9001_2020,hejazi_hal9001_2020-1} can be found at
\href{https://github.com/leahandrews/semiparametric-tmle-tnd}{https://github.com/leahandrews/semiparametric-tmle-tnd}. 

\section{Simulation study}
\label{sec:sim}
We conducted a plasmode simulation study using COVE participant characteristics \cite{el_sahly_efficacy_2021} to compare empirical performance of our TMLE method, a standard TND method, and a two-phase cohort method when analyzing data from a two-phase TND immune correlates study. We evaluated inferences on  $\beta(P_F)$, or the effect of having a high immune marker level on COVID-19 after adjusting for additional covariates, under three effect sizes, three confounding settings, three two-phase sampling designs, and four phase one sample sizes over 1000 Monte Carlo repetitions.

\subsection{Data-generating mechanism and estimators}
\label{subsec:datagen}

For each simulated dataset, we generated a healthcare-seeking population of $N=50,000$ individuals according to Figure \ref{fig:dag1}. We sampled per-protocol COVE participants with replacement to generate realistic covariate distributions for $X=(X_f,X_{co},X_t)$ in the simulated datasets \cite{franklin_plasmode_2014, schreck_statistical_2024}. We retained participants' sex $X_f$ (48\% female) and presence of any comorbidities $X_{co}$ (23\% with comorbidities) as defined in the Supplementary Material and El Sahly et al. \cite{el_sahly_efficacy_2021}. $X_t$ represents calendar date of symptom-triggered SARS-CoV-2 testing, which we defined as the number of days after September 1, 2020 that participants either met the primary COVID-19 endpoint in COVE \cite{el_sahly_efficacy_2021} or were censored. Participants' actual vaccination and SARS-CoV-2 status during COVE were not used in simulations.

We generated indicators of having a high immune marker level $A$ and SARS-CoV-2 infection $Y$ using logistic regression models according to three confounding settings to compare method performance (Table \ref{tab:datagentmle} and Figure \ref{fig:aydatagen}). The main effects confounding setting, in which $A$ and $Y$ depend on $X_f,X_{co},$ and $X_t$ main effects, assesses if the TMLE approach has comparable performance to a standard TND analysis that is correctly specified \cite{bond_regression_2016}. The interaction confounding setting and splines confounding setting represent scenarios where the standard TND method is incorrectly specified and more flexible confounding adjustment via the TMLE may be beneficial. In the interaction confounding setting, comorbidity status modifies both the immune marker and sex association and the SARS-CoV-2 infection and sex association. In the splines confounding setting, the probability of a high immune marker level and probability of SARS-CoV-2 infection vary over time via calendar date splines. For each confounding setting, we generated the conditional association between $A$ and $Y$ as $\beta(P_F)=\log \text{OR}(P_F)(x)$ at $\log(0.2), \log(0.7),$ and $\log(1)$ for all subgroups in the healthcare-seeking population. 

Indicators of infection with a pathogen other than SARS-CoV-2 that could induce symptoms $W$ and meeting the symptom definition $D$ were generated using logistic regression models on the full healthcare-seeking population (Table \ref{tab:datagentmle}). To obtain the phase one participants in the TND study ($S=1$), we sampled without replacement $n=500, 1000, 2000,$ and $ 3000$ individuals who met the symptom definition ($D=1$) in the simulated healthcare-seeking population. Roughly $4-29\%$ of the phase one TND participants had COVID-19, depending on the confounding setting. 

Though we generated $A$ in the healthcare-seeking population, we only observed $A$ in TND participants with $\Delta=1$ (i.e., phase two TND cohort) using three two-phase sampling designs to assess how missing data patterns affect method performance (Table \ref{tab:datagentmle}). In the biased 1:1 case-noncase two-phase sampling design, we observed $A$ on all COVID-19 cases in the TND study cohort and an equal number of noncases. We sampled noncases without replacement across four covariate strata (females without comorbidities, males without comorbidities, females with comorbidities, and males with comorbidities), oversampling in strata with the lowest and highest probability of having a high immune marker level to improve efficiency \cite{gilbert_optimal_2014}. When TND study cohorts lacked enough noncases to meet the desired sampling ratios, we sampled individuals from additional strata to obtain enough noncases. In the biased 1:3 case-noncase two-phase sampling design, we observed $A$ on all cases and three times as many noncases, using the same biased covariate strata sampling. In the last design, we observed $ A$ on all cases and noncases in the TND study cohort. 

We compared seven estimators of $\beta(P)$, which identified $\beta(P_F)$ since the data-generating mechanism adhered to Section \ref{subsec:statparam} assumptions. We constructed the TMLE from Section \ref{sec:semi}, adjusting for covariates, $X_f,X_{co},$ and $X_t$, and allowing for two-way covariate interactions. We used the partially linear first-order smooth highly adaptive lasso for estimation \cite{benkeser_highly_2016}. We compared the TMLE to pseudo-likelihood logistic regression estimators, PLMx and PLEx, that adjust for covariate main effects and a sex-comorbidity interaction. PLMx and PLEx report identical point estimates, but PLMx reports model variance estimates, which perform better under correctly specified confounding relationships, and PLEx reports empirical variance estimates, which perform better under misspecified confounding relationships \cite{breslow_weighted_1997,haneuse_osdesign_2011}. These methods correctly account for the two-phase sampling designs using an offset and have been used in two-phase immune correlate studies \cite{breslow_logistic_1988,breslow_weighted_1997,haneuse_osdesign_2011, gilbert_powersample_2016,fong_modification_2018}. For comparison, we constructed two naïve pseudo-likelihood logistic regression estimators, nPLM and nPLE, that misspecify the two-phase sampling design by adjusting for covariate main effects. We also fit two maximum likelihood estimators of ordinary logistic regression models, MLEx and nMLE. MLEx adjusts for covariate main effects and a sex-comorbidity interaction and nMLE adjusts for covariate main effects. In the main effects confounding setting, all methods correctly specify the confounding distribution. In the interaction confounding setting, TMLE, PLMx, PLEx, and MLEx correctly specify the confounding distribution. In the splines confounding setting, only the TMLE is correctly specified. 

\subsection{Simulation results}
\label{subsec:simresults}

Across all simulation settings, the TMLE method had comparable or better bias, 95\% confidence interval (CI) coverage, type 1 error, and power than the comparison estimators (Figures \ref{fig:biasinxsplines}-\ref{fig:covinxsplines} and \ref{fig:biasmaineff}-\ref{fig:powertmle}). In the main effects confounding settings, the TMLE had similar levels of bias as all pseudo-likelihood logistic regression methods (PLMx, PLEx, nPLM, nPLE) and the MLEx across all three two-phase sampling designs, with less bias in larger phase one and phase two TND cohorts. All methods had similar 95\% CI coverage and type 1 error around the nominal level and power. In the interaction confounding setting, the TMLE was consistent, less biased, and nearly attained the nominal 0.95 level CI coverage and 0.05 type 1 error compared to the nPLM, nPLE, and nMLE. The TMLE had slightly worse bias and 95\% CI coverage compared to PLMx, PLEx, and MLEx, but performance improved with larger sample sizes. In nearly all splines confounding settings, the TMLE was consistent, had less bias, and nearly attained the nominal 0.95 level CI coverage and 0.05 type 1 error compared to the other estimators. 

The TMLE had a consistent variance estimator, was as variable, and had as small of variance estimates as the comparison estimators in most scenarios (Figure \ref{fig:varcomplete} and Table S2%\ref{tab:vartmle}
). The TMLE mean estimated standard error approximated the Monte Carlo standard deviation in most settings, but underestimated the Monte Carlo standard deviation in several 1:1 and 1:3 case-noncase two-phase sampling designs with fewer than two hundred phase two TND participants. The TMLE Monte Carlo standard deviations were roughly identical to the Monte Carlo standard deviations for PLMx, PLEx, and MLEx for all settings. In nearly all main effects and interaction confounding settings, the TMLE mean estimated standard error was similar to the mean estimated standard errors of PLMx, PLEx, and MLEx.

 \begin{figure}
     \centering
     \includegraphics[width=1\linewidth]{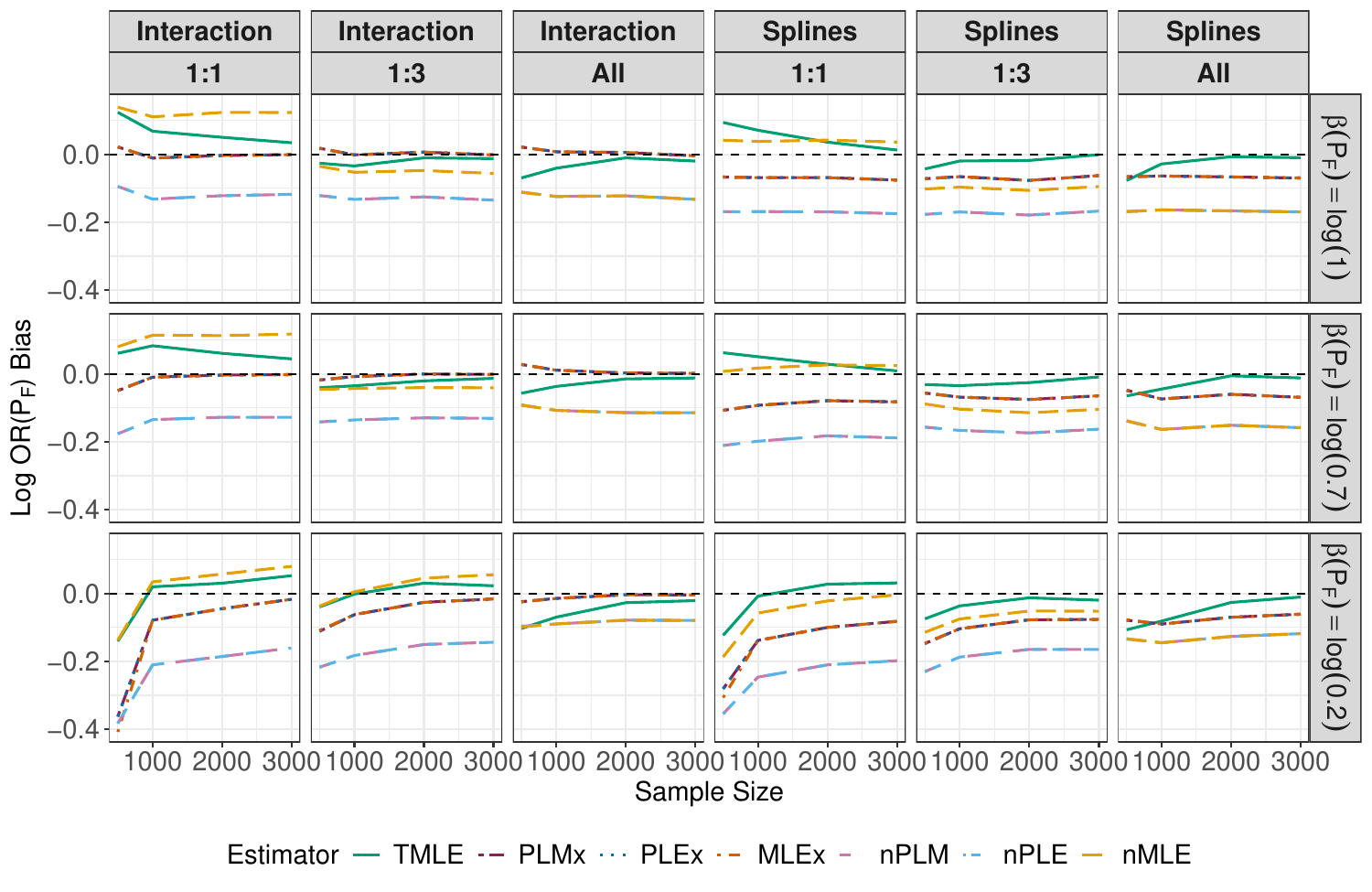}
     \caption{Bias of seven estimators from 1000 simulated two-phase test-negative design (TND) study cohorts generated under the interaction confounding setting and splines confounding setting. Phase one samples sizes were $500, 1000, 2000, $ and $3000$. Phase two TND participants ($A$ observed) were determined using a biased 1:1 case-noncase two-phase sampling design, biased 1:3 case-noncase two-phase sampling design, or all TND participants. The TMLE adjusts for covariates and uses highly adaptive lasso for estimation. PLMx, PLEx, and MLEx denote two pseudo-likelihood logistic regression approaches (model variance or empirical variance) and an ordinary logistic regression, respectively, that adjust for covariate main effects and an interaction. nPLM, nPLE, and nMLE denote two naïve pseudo-likelihood logistic regression approaches and a naïve ordinary logistic regression that adjust for covariate main effects.}
     \label{fig:biasinxsplines}
 \end{figure}

 \begin{figure}
     \centering
     \includegraphics[width=1\linewidth]{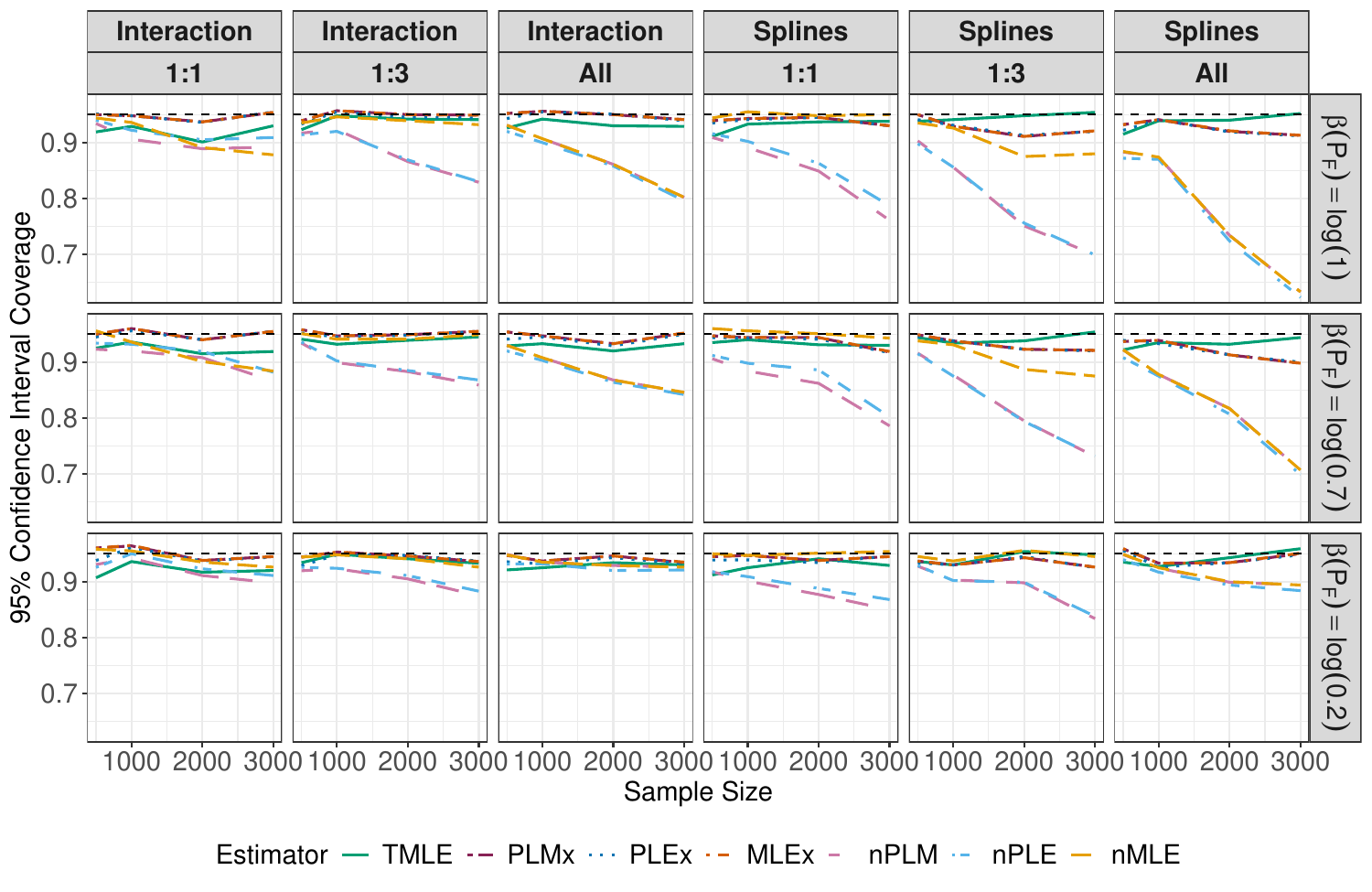}
     \caption{95\% confidence interval coverage of seven estimators from 1000 simulated two-phase test-negative design (TND) study cohorts generated under the interaction confounding setting and splines confounding setting. Phase one samples sizes were $500, 1000, 2000, $ and $3000$. Phase two TND participants ($A$ observed) were determined using a biased 1:1 case-noncase two-phase sampling design, biased 1:3 case-noncase two-phase sampling design, or all TND participants. The TMLE adjusts for covariates and uses highly adaptive lasso for estimation. PLMx, PLEx, and MLEx denote two pseudo-likelihood logistic regression approaches (model variance or empirical variance) and an ordinary logistic regression, respectively, that adjust for covariate main effects and an interaction. nPLM, nPLE, and nMLE denote two naïve pseudo-likelihood logistic regression approaches and a naïve ordinary logistic regression that adjust for covariate main effects.}
     \label{fig:covinxsplines}
 \end{figure}

\section{Moderna COVE trial data applications}
\label{sec:app}

We applied our method to assess COVID-19 vaccine effectiveness and several potential exposure-proximal immune correlates of COVID-19 using TND study cohorts obtained from COVE during the blinded phase from July 2020 to March 2021. 

The COVE (mRNA-1273-P301) study was conducted in accordance with the International Council for Harmonisation of Technical Requirements for Pharmaceuticals for Human Use, Good Clinical Practice guidelines, and applicable government regulations. The study was also conducted in compliance with the ethical principles that have their origin in the Declaration of Helsinki. The Central Institutional Review Board approved the mRNA-1273-P301 protocol and the consent forms. The mRNA-1273-P301 protocol (Pro00044270) was initially approved on June 18, 2020. Central Institutional Review Board services for the mRNA-1273-P301 study were provided by Advarra, Inc., 6100 Merriweather Dr., Suite 600, Columbia, MD 21044. All necessary patient/participant informed consent before enrollment has been obtained and the appropriate institutional forms have been archived. 

\subsection{Application 1: inference on COVID-19 vaccine effectiveness}

\label{subsec:appve}

We constructed a TND study cohort from 28,451 per-protocol COVE participants to assess vaccine effectiveness against the primary COVID-19 endpoint from COVE, or primary COVID-19 
\cite{el_sahly_efficacy_2021}. Per-protocol participants received two doses of the intervention they were blinded and randomized to (mRNA-1273 vaccine or placebo), had no major protocol violations during the blinded phase, and were naïve to SARS-CoV-2. Vaccination status was measured on all participants. We identified 2,553 participants who met the primary COVID-19 symptom definition and had at least one positive or negative SARS-CoV-2 test at least two weeks after the second intervention dose, within ten days after symptom onset, after meeting the symptom definition, while blinded, and before receiving any nonstudy COVID-19 vaccinations (Table \ref{tab:vetmletbl1}). We applied participant-based sampling with censoring for COVID-19 to select one SARS-CoV-2 test per person \cite{de_serres_test-negative_2013, andrews_evaluating_2025}: 728 participants were classified as cases and 1,825 participants were classified as noncases.

To demonstrate a post-marketing TND analysis that must address confounding, we adjusted for age, sex, race and ethnicity (Person of Color vs. Non-Hispanic/Latino White), presence of comorbidities, US Census region (Midwest, Northeast, South, and West), and quantitative or two-week calendar date of SARS-CoV-2 testing, depending on the statistical method used \cite{el_sahly_efficacy_2021,us_census_bureau_geographic_2021,lopez_bernal_effectiveness_2021, chua_use_2020,belongia_effectiveness_2009, bond_regression_2016}.  See the Supplementary Material for additional details. 

To determine if we can identify the conditional risk ratios of COVID-19 in the healthcare-seeking population, we evaluated identifying Assumptions \ref{ass:posy}-\ref{ass:ymisclass}. Assumption \ref{ass:posy} is satisfied because rhinoviruses, respiratory enteroviruses, and adenoviruses caused infections at their typical seasonal levels during the pandemic, ensuring all subgroups had some chance of experiencing COVID-19 symptoms from SARS-CoV-2 or other causes \cite{olsen_changes_2021, chow_effects_2023, ashby_validating_2025}. Assumption \ref{ass:poss} is satisfied because individuals in all four US Census regions had access to SARS-CoV-2 nucleic acid amplification testing \cite{centers_for_disease_control_and_prevention_covid_2020}. Assumption \ref{ass:condsa} is reasonable because COVE randomized and blinded vaccination status. In post-marketing TND studies, Assumption \ref{ass:condsa} holds because we limit inference to the healthcare-seeking population and assume that vaccination status is only associated with SARS-CoV-2 testing through healthcare-seeking behavior or SARS-CoV-2 infection and meeting the symptom definition (Figure \ref{fig:dag1}). Noncase Exchangeability is reasonable because vaccine-induced immune responses are antigen-specific \cite{male_immunology_2013, huang_non-hiv_2024}. Andrews et al. \cite{andrews_evaluating_2025} and Ashby et al. \cite{ashby_validating_2025} also validated that the mRNA-1273 vaccine does not affect other causes of COVID-19-like symptoms. To reduce case status misclassification, our TND study cohort obtained SARS-CoV-2 tests within 10 days of symptom onset \cite{kucirka_variation_2020,patel_evaluation_2021, andrews_evaluating_2025}. Assumption \ref{ass:posdelta}, \ref{ass:conddeltaa}, and \ref{ass:amisclass} are satisfied because vaccination status was accurately observed for all per-protocol COVE participants. Post-marketing TND studies should adjust for covariates related to vaccination status missingness, such as region and age. Compared to other observational studies that rely on self-report or linking vaccine records, TND studies are less prone to bias from vaccination status misclassification since vaccination status is collected before SARS-CoV-2 status is known \cite{jackson_use_2019,liang_typhoid_2023,rolnick_self-report_2013,zimmerman_sensitivity_2003, sullivan_theoretical_2016}.

We also evaluated Assumptions \ref{ass:cons}-\ref{ass:confound} to determine if we can identify the causal conditional risk ratio and interpret both estimators as vaccine effectiveness, $VE = (1-RR(P_{F,ca}))\times 100\%=(1-OR(P))\times 100\%$. Consistency is satisfied because we do not expect any variability within vaccination statuses. No interference is reasonable because vaccines were not readily available in the US during the blinded phase of COVE and participants were from 99 study sites and unlikely to interact with each other \cite{el_sahly_efficacy_2021}. There was no unmeasured confounding in the TND cohort. Thus, it is appropriate to estimate conditional odds ratio estimators using the observed TND data and interpret them as vaccine effectiveness estimators.

We applied two statistical methods to estimate primary COVID-19 vaccine effectiveness in the TND study cohort. We applied our proposed TMLE method adjusted for the aforementioned covariates and used ensemble super learner for nuisance estimation \cite{van_der_laan_super_2007, chernozhukov_doubledebiased_2018} (Table \ref{tab:vesl}). We also fit an ordinary logistic regression adjusted for covariate linear main effects \cite{lopez_bernal_effectiveness_2021, chua_use_2020, belongia_effectiveness_2009, bond_regression_2016}. 

To assess how the TND and RCT study design and analysis methods compare in a setting without confounding or selection bias, we also reported the COVID-19 vaccine efficacy estimates and 95\% CIs from the COVE final blinded phase analysis publication \cite{el_sahly_efficacy_2021}. RCT vaccine efficacy was defined as one minus the COVID-19 hazard ratio (vaccine vs. placebo), estimated using a Cox proportional hazards model stratified on randomization factors and Efron’s method for handling ties \cite{el_sahly_efficacy_2021,efron_efficiency_1977}. 

The TND vaccine effectiveness and RCT vaccine efficacy estimates were highly concordant (Figure \ref{fig:coveve}). Using the TMLE approach, the mRNA-1273 vaccine effectiveness against symptomatic COVID-19 was 92.8\% (95\% CI = 90.1 to 94.8) in the healthcare-seeking population, holding covariates constant. The TMLE and ordinary logistic regression approaches produced similar vaccine effectiveness estimates and 95\% CIs that overlapped with the published RCT vaccine efficacy estimates (vaccine efficacy = 93.2, 95\% CI = 91.0 to 94.8). Despite the TND study cohorts being 9\% the size of the RCT cohorts, the TMLE variance estimates were only 33\% larger than the RCT variance estimates on the natural logarithm scale. 

This data application illustrates that when confounding is controlled and all TND participants have the same healthcare-seeking behavior, TND vaccine effectiveness estimates approximate RCT vaccine efficacy estimates.

\begin{figure}
    \centering
    \includegraphics[width=1\linewidth]{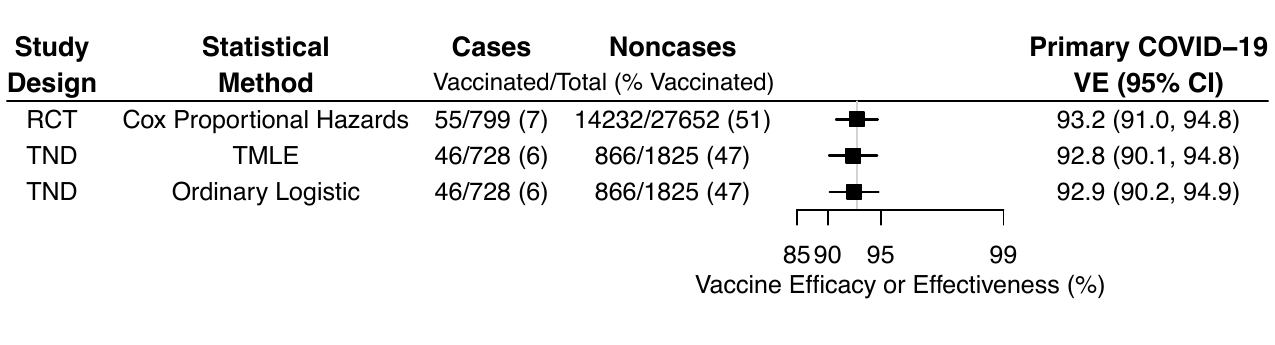}
    \caption{Moderna COVE primary COVID-19 vaccine efficacy and vaccine effectiveness estimates. Primary COVID-19 RCT vaccine efficacy was defined as one minus the COVID-19 hazard ratio, estimated on on the per-protocol COVE RCT cohort using a stratified Cox proportional hazards model and the Efron method for handling ties \cite{el_sahly_efficacy_2021, efron_efficiency_1977}. Primary COVID-19 TND vaccine effectiveness was defined as one minus the causal conditional COVID-19 risk ratio, estimated on the COVE TND study cohort. The TMLE approach adjusted for age, sex, race and ethnicity, comorbidities, region, and calendar date and used ensemble super learner for estimation. The ordinary logistic regression adjusted for covariate main effects.
Estimates and 95\% CIs are compared on the ln(1-VE) scale, with plotting labels on the VE scale.    
\\
  Abbreviations: RCT = Randomized Placebo-Controlled Clinical Trial; TND = Test-Negative Design; VE = Vaccine Effectiveness or Vaccine Efficacy; CI = Confidence Interval; ln = natural logarithm}
    \label{fig:coveve}
\end{figure}

\subsection{Application 2: inference on immune markers as exposure-proximal correlates of COVID-19}

\label{subsec:appimm}

To illustrate how the TND and our TMLE approach can be used to evaluate immune correlates, we reanalyzed COVE data \cite{el_sahly_efficacy_2021} as a two-phase TND immune correlates study. We evaluated if neutralizing antibodies (nAbs) and binding antibodies (bAbs) are correlates of COVID-19 in healthcare-seeking individuals who are naïve to SARS-CoV-2 and received two mRNA-1273 doses within six months. We studied a COVID-19 symptom definition recommended by the Centers for Disease Control and Prevention (CDC) in 2020 \cite{cdc_symptoms_2022, stokes_coronavirus_2020}. 

Our phase one TND study cohort consisted of 935 per-protocol COVE participants who received both mRNA-1273 doses, met the CDC COVID-19 symptom definition, and had at least one positive or negative SARS-CoV-2 test at least 64 days after their first dose, within ten days after symptom onset, after meeting the symptom definition, while blinded, and before receiving any nonstudy COVID-19 vaccinations (Table \ref{tab:immtmletbl1}). The 64-day waiting period was an artifact of the original data collection \cite{gilbert_immune_2022,follmann_test-negative_2025}; SARS-CoV-2 tests obtained once individuals have a well-defined vaccination status at potential SARS-CoV-2 exposure is sufficient for post-marketing TND immune correlates studies. Like Section \ref{subsec:appve}, we applied participant-based sampling with censoring for COVID-19 \cite{de_serres_test-negative_2013,andrews_evaluating_2025} to obtain 46 cases and 889 noncases. 

During COVE, symptomatic participants had their blood drawn on Disease Day 1, or symptom onset \
\cite{el_sahly_efficacy_2021, gilbert_immune_2022}, and stored for immunological assessment. Our phase two TND study cohort consisted of 127 of these participants (34 cases and 93 noncases) who were anti-nucleocapsid seronegative and had complete 50\% inhibitory dilution nAb titers and immunoglobulin G (IgG) bAb concentrations measured against the ancestral SARS-CoV-2 spike protein and receptor-binding domain (RBD) from previous COVE immune correlate studies (Figure \ref{fig:immcorr}) \cite{gilbert_immune_2022, follmann_test-negative_2025}. While the original immune correlates data was collected via 1:3 case-noncase matching by US Census region and testing date \cite{follmann_test-negative_2025}, we analyzed these participants as if they were sampled without replacement from our phase one TND cohort to attain a 1:3 case-noncase ratio within each stratum defined by US Census region \cite{us_census_bureau_geographic_2021}.

We investigated nAb titers, anti-spike IgG bAb concentration, and anti-RBD IgG bAb concentration dichotomized by a threshold to study if some thresholds are associated with a greater demarcation in COVID-19 risk. The limits of detection for nAb titers, anti-spike IgG bAb concentration, and anti-RBD IgG bAb concentration are not suitable thresholds because few individuals had immune marker levels under these limits given the mRNA-1273 vaccine's high efficacy in SARS-CoV-2 naïve individuals \cite{el_sahly_efficacy_2021, gilbert_immune_2022, follmann_test-negative_2025}. Instead, we investigated thresholds (greater than or equal to vs. below) defined from the 20th, 30th, 40th, 50th, 60th, 70th, and 80th percentile for each immune marker. 

We adjusted for risk score of acquiring COVID-19, presence of comorbidities, US Census region, and quantitative or tertile calendar date of SARS-CoV-2 testing, depending on the statistical method used \cite{gilbert_immune_2022,follmann_test-negative_2025, el_sahly_efficacy_2021,us_census_bureau_geographic_2021}. The risk score of acquiring COVID-19 measured the elevated risk for COVID-19 and was estimated from a super learner analysis of the placebo arm of COVE in Gilbert et al. \cite{gilbert_immune_2022}. 

We evaluated identifying Assumptions \ref{ass:posy}-\ref{ass:ymisclass} to determine if we can identify the conditional risk ratios of CDC COVID-19 in a SARS-CoV-2 naïve healthcare-seeking population that has been vaccinated with mRNA-1273 within six months. As discussed in Section \ref{subsec:appve}, Assumptions \ref{ass:posy}, \ref{ass:poss}, and \ref{ass:ymisclass} are satisfied because they do not depend on the exposure variable. Assumption \ref{ass:posdelta} is satisfied because blood samples were collected on all participants with symptoms during COVE, so all covariate subgroups and case statuses had some probability of having their immune markers measured \cite{gilbert_immune_2022, follmann_test-negative_2025}. To satisfy Assumption \ref{ass:condsa}, we adjusted for variables that may be associated with immune marker levels and healthcare-seeking behavior, like comorbidities, risk score of acquiring COVID-19, region, and calendar date. For Assumption \ref{ass:conddeltaa}, we adjusted for region because it was used in the two-phase sampling design \cite{follmann_test-negative_2025}. Noncase Exchangeability is reasonable because the studied nAbs and bAbs are unlikely to be associated with other causes of symptoms, after adjusting for comorbidities and risk score of acquiring COVID-19, which could affect the immune system. Similarly, Follmann et al. \cite{follmann_test-negative_2025} found that the nAb titers and anti-spike IgG bAb concentrations measured in symptomatic SARS-CoV-2 negative individuals at testing were similar to the predicted titers and concentrations of individuals who were not symptomatic and did not have COVID-19 on the same date. After SARS-CoV-2 infection, SARS-CoV-2 nAbs and anti-RBD IgG bAbs emerge around the second week after symptom onset \cite{painter_prior_2023,follmann_kinetics_2023}. Post-marketing TND nAb and bAb studies should limit enrollment to individuals who obtain SARS-CoV-2 testing within five or seven days after symptom onset to avoid measuring nAb titers and bAb concentrations perturbed from recent SARS-CoV-2 infection \cite{sumner_antisars-cov-2_2024}. Since our TND study included SARS-CoV-2 tests within ten days after symptom onset, we may underestimate the benefit of high immune marker levels. We also restricted the phase one TND study cohort to anti-nucleocapsid seronegative SARS-CoV-2 naïve individuals because Follmann et al. \cite{follmann_test-negative_2025} reported that nAb titers and anti-spike IgG bAb concentrations measured at testing approximate immune marker levels present at the time of potential SARS-CoV-2 exposure.

We also evaluated Assumptions \ref{ass:cons}-\ref{ass:confound} to determine if we can identify the causal conditional risk ratio of CDC COVID-19. Since our population of inference is healthcare-seeking individuals who are naïve to SARS-CoV-2 and received two mRNA-1273 doses within six months, all immune marker levels are attainable, though consistency may be violated if COVID-19 risk varies within immune marker levels. No interference is reasonable since participants were from many sites and unlikely to interact with each other. To account for Assumption \ref{ass:confound}, we adjusted for risk score of acquiring COVID-19, comorbidities, region, and calendar date. Overall, the causal identification is appropriate.  

We applied three statistical methods to estimate the causal risk ratio of CDC COVID-19 when comparing healthcare-seeking individuals with nAb titers, anti-spike IgG bAb concentrations, or anti-RBD IgG bAb concentrations greater than or equal to vs. below a given threshold. For each immune marker, we calculated the TMLE using the phase two TND study cohort, adjusting for the aforementioned covariates and applying ensemble super learner for estimation (Table \ref{tab:immsl}). For our second statistical method, we constructed a pseudo-likelihood logistic regression estimator with empirical standard error estimates, which adjusted for covariate linear main effects and accounted for region in the two-phase sampling design \cite{breslow_logistic_1988, breslow_weighted_1997}. Lastly, we fit an ordinary logistic regression adjusted for covariate linear main effects using the phase two TND study cohort. 

We found no evidence that a specific threshold of nAb titer discriminates CDC COVID-19 risk (Figure \ref{fig:neut}). The CDC COVID-19 risk ratio for each nAb titer threshold is around 0.65 with 95\% CIs including 1 for all three statistical methods. Since all the CDC COVID-19 risk ratio estimates were similar and less than one, this could suggest a continuous relationship between detectable nAb titer and CDC COVID-19 risk. 

For bAbs, it appears that low thresholds do not discriminate CDC COVID-19 risk, but bAb concentrations compared at higher thresholds provide significantly different risks of CDC COVID-19 (Figure \ref{fig:spike} and \ref{fig:rbd}). For example, there is no evidence that anti-RBD IgG bAb concentrations have significantly different CDC COVID-19 risks at thresholds below 1,945 binding antibody units per milliliter (BAU/ml). However, the estimated risk of CDC COVID-19 using the TMLE is 78\% lower (95\% CI: 34\% to 93\% lower) for individuals with at least 2,576 anti-RBD IgG bAb BAU/ml compared to individuals below that threshold who share the same covariates. We found similar trends with anti-spike IgG bAb concentration and CDC COVID-19 though they did not reach statistical significance.

While the TMLE, pseudo-likelihood logistic regression, and ordinary logistic regression produced similar results across immune markers and thresholds, the TMLE estimates were slightly higher than the other estimates. This may suggest that the simpler analysis methods slightly underestimated the causal conditional risk ratios.

Previous immune correlates studies have shown that nAbs, anti-spike IgG bAbs, and anti-RBD IgG bAbs induced by the mRNA-1273 vaccine are inversely correlated with risk of COVID-19 \cite{khoury_neutralizing_2021,corbett_immune_2021,gilbert_immune_2022, follmann_test-negative_2025}. Our nAb analyses were consistent with previous literature, but our anti-spike and anti-RBD IgG bAb analyses suggested that this inverse relationship exists only at higher thresholds. Given the small phase two TND study cohort, our analysis was exploratory; future TND immune correlates studies should measure immune markers from more participants to improve power and precision. 

\begin{figure}
    \centering
    \includegraphics[width=1\linewidth]{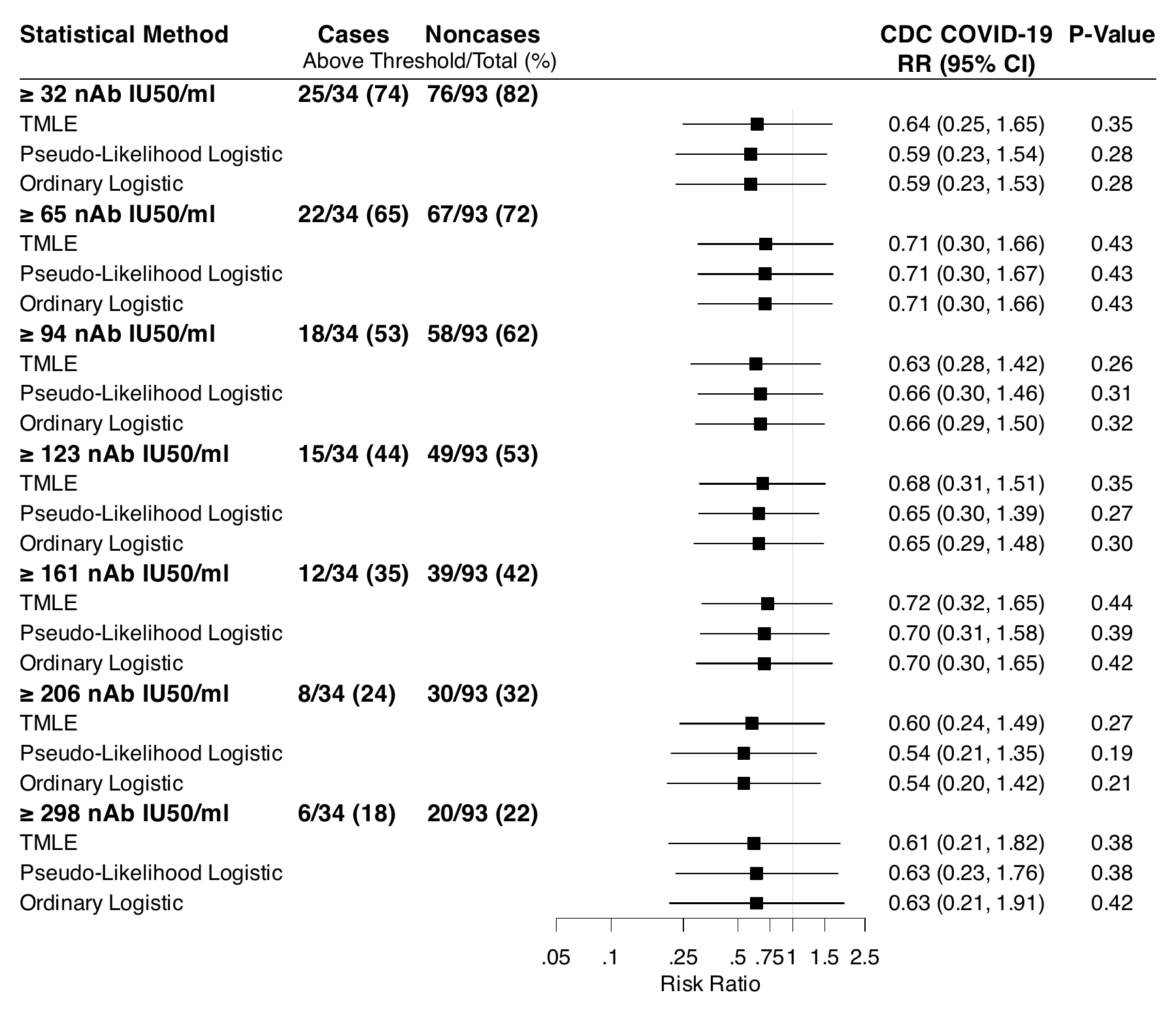}
    \caption{Causal conditional risk ratios of CDC COVID-19 by 50\% inhibitory dilution neutralizing antibody titers dichotomized by a threshold from the Moderna COVE two-phase test-negative design immune correlates study. Risk ratios compare subgroups greater than or equal to vs. below the 20th, 30th, 40th, 50th, 60th, 70th, and 80th percentile marker value. The TMLE approach adjusted for risk score of acquiring COVID-19, comorbidities, region, and calendar date and used an ensemble super learner for estimation. The pseudo-likelihood logistic regression with empirical standard error estimates accounted for the two-phase sampling design and adjusted for covariate main effects. The ordinary logistic regression adjusted for covariate main effects. Estimates are on the natural logarithm risk ratio scale, with plotting labels on the risk ratio scale.   
\\
  Abbreviations: CDC = Centers for Disease Control and Prevention; RR = Risk Ratio; CI = Confidence Interval; nAb = Neutralizing Antibody; IU50 = 50\% Inhibitory Units; ml = Milliliter}
    \label{fig:neut}
\end{figure}

\begin{figure}
    \centering
    \includegraphics[width=1\linewidth]{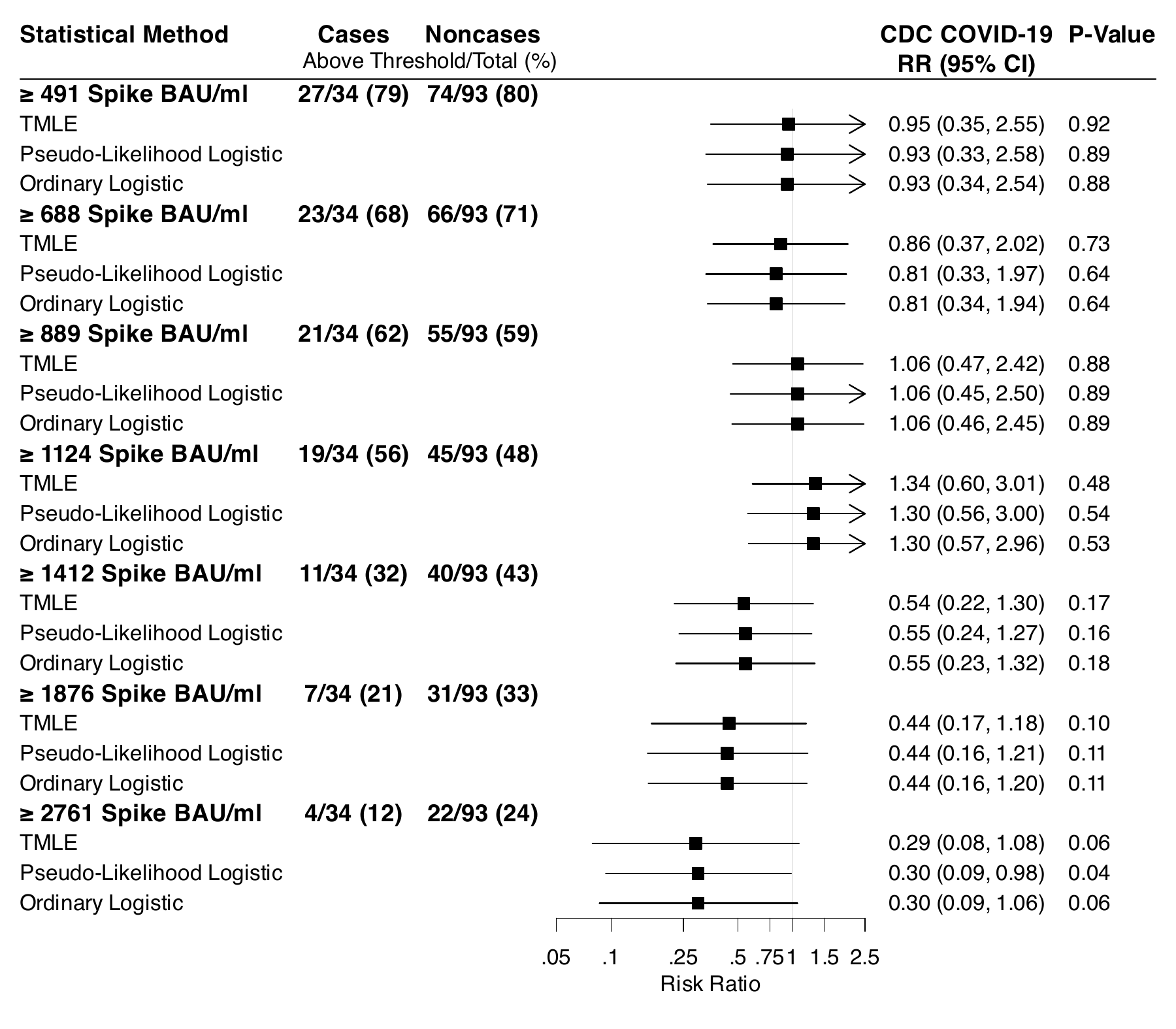}
    \caption{Causal conditional risk ratios of CDC COVID-19 by anti-spike immunoglobulin G binding antibody concentration dichotomized by a threshold 
from the Moderna COVE two-phase test-negative design immune correlates study. Risk ratios compare subgroups greater than or equal to vs. below the 20th, 30th, 40th, 50th, 60th, 70th, and 80th percentile marker value. The TMLE approach adjusted for risk score of acquiring COVID-19, comorbidities, region, and calendar date and used an ensemble super learner for estimation. The pseudo-likelihood logistic regression with empirical standard error estimates accounted for the two-phase sampling design and adjusted for covariate main effects. The ordinary logistic regression adjusted for covariate main effects. Estimates are on the natural logarithm risk ratio scale, with plotting labels on the risk ratio scale.  
\\
  Abbreviations: CDC = Centers for Disease Control and Prevention; RR = Risk Ratio; CI = Confidence Interval; BAU = Binding Antibody Units; ml = Milliliters}
    \label{fig:spike}
\end{figure}

\begin{figure}
    \centering
    \includegraphics[width=1\linewidth]{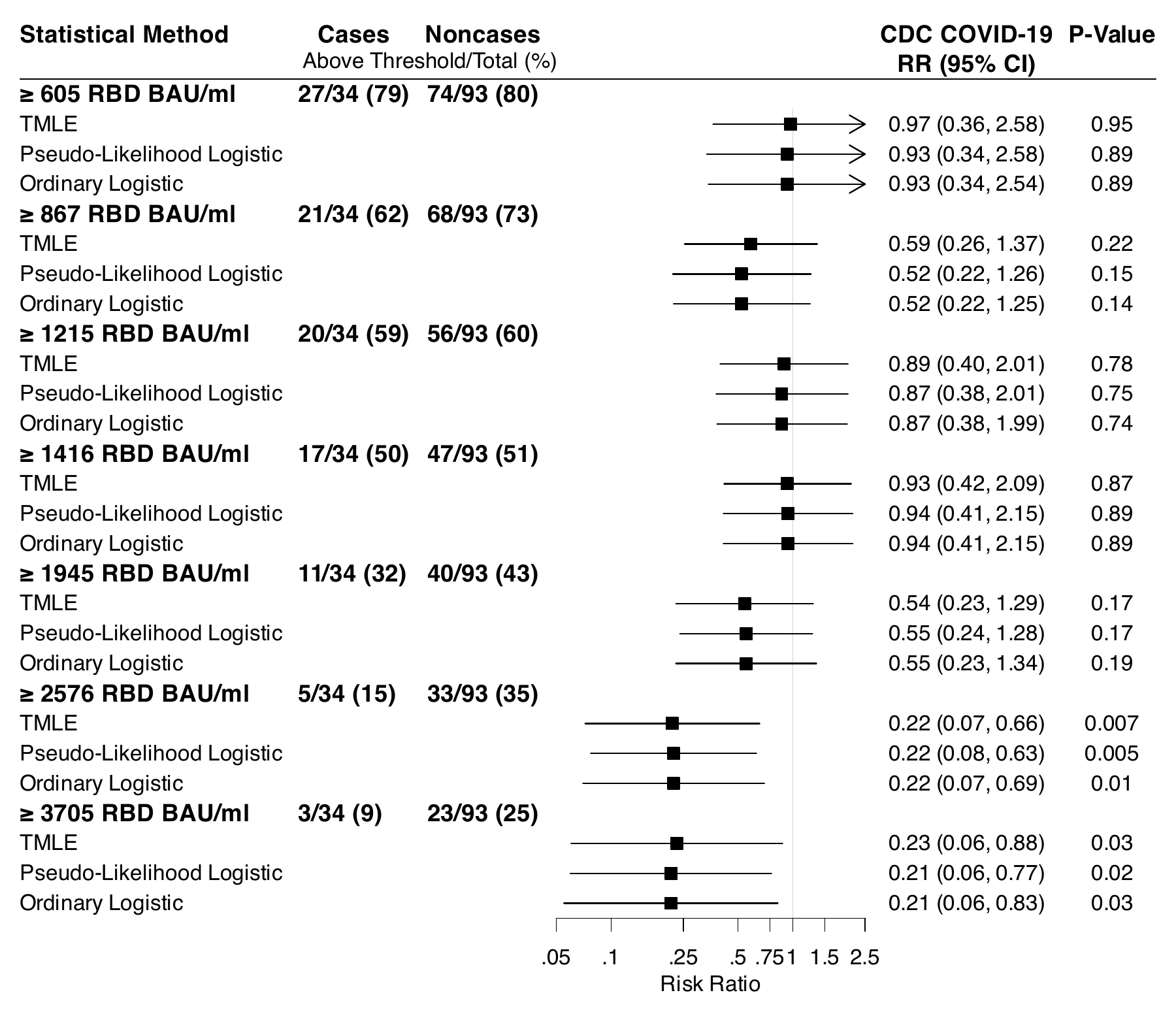}
    \caption{Causal conditional risk ratios of CDC COVID-19 by anti-receptor-binding domain immunoglobulin G binding antibody concentration dichotomized by a threshold from the Moderna COVE two-phase test-negative design immune correlates study. Risk ratios compare subgroups greater than or equal to vs. below the 20th-80th percentile marker value. The TMLE approach adjusted for risk score of acquiring COVID-19, comorbidities, region, and calendar date and used an ensemble super learner for estimation. The pseudo-likelihood logistic regression with empirical standard error estimates accounted for the two-phase sampling design and adjusted for covariate main effects. The ordinary logistic regression adjusted for covariate main effects. Estimates are on the natural logarithm risk ratio scale, with plotting labels on the risk ratio scale.    
\\
  Abbreviations: CDC = Centers for Disease Control and Prevention; RR = Risk Ratio; CI = Confidence Interval; RBD = Receptor-Binding Domain; BAU = Binding Antibody Unit; ml = Milliliter}
    \label{fig:rbd}
\end{figure}

\section{Discussion}
\label{sec:discuss}

We extended the targeted maximum likelihood estimation methodology developed by van der Laan and Gilbert \cite{van_der_laan_semiparametric_2025} to a TND setting to study the effect of vaccine regimens and immune markers on symptomatic disease when confounding and missing exposure variable data are present. We presented identifying assumptions to obtain a causal conditional risk ratio and employed a semiparametric approach involving machine-learning for flexible confounding control. The resulting estimator is efficient and asymptotically linear when the exposure variable is missing at random and has favorable finite sample properties. We also demonstrated how our TMLE can assess COVID-19 vaccine effectiveness and antibody marker correlates of COVID-19 from a TND study using data from COVE.  

TND studies have been essential for monitoring post-marketing vaccine effectiveness, but are subject to bias from confounding and missing data. With fewer required assumptions than standard analysis methods, our TMLE method accounts for confounding using machine-learning methods, produces similar variance estimates, respects model constraints, and provides valid inference \cite{van_der_laan_targeted_2006, van_der_laan_targeted_2011}. The TMLE excels in high-dimensional large sample settings, like TND vaccine effectiveness studies \cite{van_der_laan_semiparametric_2025,chang_effectiveness_2022}, though machine-learning methods can be adapted to accommodate the amount of data available  \cite{van_der_laan_unified_2003-1,dudoit_asymptotics_2005,van_der_vaart_oracle_2006, van_der_laan_super_2007}. Smaller TND immune correlates studies may consider applying super learner with less flexible machine-learning models, log-likelihood loss with leave-one-out cross-validation, and/or learners with screens that limit the number of covariates in each model to improve small sample performance \cite{phillips_practical_2023}. To our knowledge, this is the first application of TMLE and second application of machine-learning algorithms for data-adaptive confounding control in TND analyses \cite{jiang_double_2025}.

Currently, our TMLE approach is restricted to accommodating binary exposure variables $A$. Since our method requires regressing $A$ using a partially linear logistic regression, we would have to assume a different model for a categorical or quantitative exposure variable and adapt the identifying assumptions. While there are heterogeneous vaccination statuses, comparing receipt of the current vaccine vs. no vaccination within a time period is useful for monitoring vaccine effectiveness. Similarly, while immune markers are often quantitative measurements, relevant binary thresholds exist and can be informative. Regulators have used fixed thresholds accepted to correspond to very low disease risk for several licensed vaccines \cite{plotkin_correlates_2010, chen_threshold_2013}. Immune markers' limit of detection can also be relevant because variability below the limit likely reflects technical measurement error that is uncorrelated with risk \cite{moodie_neutralizing_2018}. Additionally, multiple immune markers may be considered in a threshold, such as comparing individuals with high nAb titers and high memory B cell levels to high nAb titers and low memory B cell levels. A threshold approach is simpler than modeling multiple quantitative immune markers together. Future work may include extending our TMLE method for quantitative exposure variables. 

Another limitation is that our method assumes healthcare-seeking behavior is binary and restricts inference to the healthcare-seeking population. This population can be difficult to define and excludes individuals with limited access to medical care. Additionally, residual bias may exist if healthcare-seeking behavior is not binary and/or TND participants have varying healthcare-seeking behavior. Thus, our approach reduces rather than eliminates bias from healthcare-seeking behavior. Li et al. \cite{li_double_2023} and Li et al. \cite{li_doubly_2022} have developed causal methods that leverage negative control variables to account for confounding and bias from healthcare-seeking behavior and generalize to the entire population. Extending our TMLE to incorporate negative control variables could help reduce residual bias and produce more generalizable estimates. 

Given its practicality and cost-effectiveness, the TND has become a prominent study design for evaluating vaccines and a novel study design for investigating immune correlates of disease. By updating the statistical methods implemented, we can provide more accurate and interpretable estimates and conduct inference for a range of circumstances.

\section*{Acknowledgements}
The authors would like to thank the COVID-19 Prevention Network and the Moderna COVE study participants and study team for their contributions. 

\FloatBarrier

% ========== Appendix A

\section*{Appendix}
\label{sec:appendix}
\setcounter{subsection}{0}
\renewcommand{\thesubsection}{A.\arabic{subsection}}

\subsection{Theorem \ref{thm:fulldatarr} proof}

\begin{proof}
   First, we show that the observed TND data conditional odds ratio $OR(P)(x)$ identifies the full data conditional odds ratio $OR(P_F)$. 
\begin{align*}
&OR(P)(x)\\
&\overset{\textcolor{white}{A2.4}}{=}\frac{\mu_P(1,x)/(1-\mu_P(1,x))}{\mu_P(0,x)/(1-\mu_P(0,x))}\\
&\overset{\textcolor{white}{A2.4}}{=}\frac{P( A=1|\Delta = 1, S=1, Y=1, D=1, X=x)/P(A=0|\Delta = 1,S=1, Y=1, D=1, X=x)}{P( A=1|\Delta = 1,S=1, Y=0, D=1, X=x)/P( A=0|\Delta = 1,S=1,Y=0, D=1, X=x)}\\
&\overset{A\ref{ass:conddeltaa}}{=}\frac{P( A=1| S=1, Y=1, D=1, X=x)/P(A=0|S=1, Y=1, D=1, X=x)}{P( A=1|S=1, Y=0, D=1, X=x)/P( A=0|S=1,Y=0, D=1, X=x)} \\
&\overset{A\ref{ass:condsa}}{=}\frac{P( A=1|Y=1, D=1, X=x)/P(A=0| Y=1, D=1, X=x)}{P( A=1| Y=0, D=1, X=x)/P( A=0|Y=0, D=1, X=x)} \\
&\overset{\textcolor{white}{A2.4}}{=}\frac{P(Y=1, D=1|A=1,X=x)/P(Y=0,D=1|A=1, X=x)}{P(Y=1,D=1|A=0, X=x)/P(Y=0,D=1|A=0, X=x)}\\
&\underset{A\ref{ass:ymisclass}}{\overset{A\ref{ass:amisclass}}{=}}\frac{P_F(Y=1, D=1|A=1,X=x)/P_F(Y=0,D=1|A=1, X=x)}{P_F(Y=1,D=1|A=0, X=x)/P_F(Y=0,D=1|A=0, X=x)}\\
&\overset{\textcolor{white}{A2.4}}{=}OR(P_F)(x)
\end{align*}
The first and second lines are by definition using Equations \eqref{eqn:or} and \eqref{eqn:mu}. The third line follows from Assumption \ref{ass:conddeltaa}, the fourth line follows from Assumption \ref{ass:condsa}, the fifth line follows from Bayes rule, and the sixth line follow from Assumptions \ref{ass:amisclass} and \ref{ass:ymisclass}. Assumptions \ref{ass:posy}, \ref{ass:posdelta}, and \ref{ass:poss} are necessary to ensure all the conditional probabilities are well-defined.

$OR(P_F)(x)$ is a ratio of two odds estimands that are difficult to interpret. The odds in the numerator compare the probability of COVID-19 (i.e., SARS-CoV-2 infection and meeting the symptom definition) to the probability of meeting the symptom definition from a cause other than SARS-CoV-2 for individuals with $A=1$ and the same characteristics $X=x$. The odds in the denominator compare the probability of COVID-19 to the probability of meeting the symptom definition from a cause other than SARS-CoV-2 for individuals with $A=0$ and the same characteristics $X=x$.

To identify the more interpretable target estimand, the full data conditional risk ratio of COVID-19 in the healthcare-seeking population $RR(P_F)$, we apply Assumption \ref{ass:coretnd} such that \cite{jackson_test-negative_2013,broome_pneumococcal_1980,schnitzer_estimands_2022, jiang_double_2025}
\begin{align*}
OR(P)(x) &\overset{\textcolor{white}{A2}}{=}OR(P_F)(x)\\
&\overset{\textcolor{white}{A2}}{=}\frac{P_F(Y=1, D=1|A=1,X=x)/P_F(Y=0,D=1|A=1, X=x)}{P_F(Y=1,D=1|A=0, X=x)/P_F(Y=0,D=1|A=0, X=x)}\\
 &\overset{A\ref{ass:coretnd}}{=}\frac{P_F(Y=1, D=1|A=1,X=x)}{P_F(Y=1,D=1|A=0, X=x)}\\
&\overset{\textcolor{white}{A2}}{=}RR(P_F)(x).
\end{align*}

The first equality is from Assumptions \ref{ass:posy}-\ref{ass:conddeltaa}, \ref{ass:amisclass}, and \ref{ass:ymisclass}. The third equality is from Assumption \ref{ass:coretnd}.
\end{proof}
   
\subsection{Theorem \ref{thm:causalrr} proof}

\begin{proof}
From Theorem \ref{thm:fulldatarr}, we know that $OR(P)(x)=RR(P_F)(x)$ given Assumptions \ref{ass:posy}-\ref{ass:ymisclass}. Thus, it remains to show that $RR(P_F)(x)=RR(P_{F,ca})(x)$.
\begin{align*}
RR(P_F)(x)&\overset{\textcolor{white}{A2}}{=}\frac{P_F(Y=1, D=1|A=1,X=x)}{P_F(Y=1,D=1|A=0, X=x)}&\\
&\overset{\textcolor{white}{A2}}{=}\frac{P_{F,ca}(Y=1, D=1|A=1,X=x)}{P_{F,ca}(Y=1,D=1|A=0, X=x)}&\\
&\underset{A\ref{ass:inter}}{\overset{A\ref{ass:cons}}{=}}\frac{P_{F,ca}(Y(1)=1, D(1)=1|A=1, X=x)}{P_{F,ca}(Y(0)=1,D(0)=1|A=0, X=x)}\\
&\overset{A\ref{ass:confound}}{=}\frac{P_{F,ca}(Y(1)=1, D(1)=1|X=x)}{P_{F,ca}(Y(0)=1,D(0)=1| X=x)}\\
&\overset{\textcolor{white}{A2}}{=}RR(P_{F,ca})(x)
\end{align*}
The third line follows by Assumptions \ref{ass:cons} and \ref{ass:inter} and the fourth line follows by Assumption \ref{ass:confound}.
\end{proof}

\subsection{Semiparametric assumption}
We show that $\exp\left(\beta(P_F)^T \underline{f}(x)\right)$ in the partially linear logistic regression model can be interpreted as a full data conditional odds ratio, $ OR(P_F)(x)$.

\begin{proof}
\begin{align*}
\beta(P_F)^T \underline{f}(x)&=\text{logit} P_F(A=1|D=1,Y=1, X=x) - \text{logit} P_F(A=1|D=1,Y=0, X=x)\\
&=\log \frac{P_F(A=1|D=1,Y=1, X=x)}{P_F(A=0|D=1,Y=1, X=x) } - \log \frac{P_F(A=1|D=1,Y=0, X=x)}{ P_F(A=0|D=1,Y=0, X=x )}\\
&=\log \frac{P_F(A=1|D=1,Y=1, X=x)P_F(A=0|D=1,Y=0, X=x)}{P_F(A=0|D=1,Y=1, X=x) P_F(A=1|D=1,Y=0, X=x )}\\
&=\log\frac{P_F(Y=1,D=1|A=1, X=x)P_F(Y=0,D=1|A=0, X=x)}{P_F(Y=0,D=1|A=1, X=x) P_F(Y=1,D=1|A=0, X=x)}\\
&=\log\frac{P_F(Y=1, D=1|A=1,X=x)/P_F(Y=0,D=1|A=1, X=x)}{P_F(Y=1,D=1|A=0, X=x)/P_F(Y=0,D=1|A=0, X=x)}\\
&=\log OR(P_F)(x)
\end{align*}
The first line is by definition of the partially linear logistic regression model, the second line is by $\text{logit}$ definition, the third line is by $\log$ properties, and the fourth line is by Bayes rule. Exponentiating both sides, we obtain $\exp \left(  \beta(P_F)^T \underline{f}(x) \right)=OR(P_F)(x)$. 
\end{proof}

\subsection{Theorem \ref{thm:eif} proof}

\begin{proof}

Theorem 4 of van der Laan and Gilbert \cite{van_der_laan_semiparametric_2025} derives an efficient influence function for the $\beta(P)$ coefficient in a partially linear logistic regression model 
\begin{equation}\text{logit} \{P(J=1|R=1,A=1,W=w,\Delta = 1, T=t)\} = a\beta(P)^T \underline{f}(w,t) + h_{P}(w,t),\nonumber
\end{equation}
using the notation in van der Laan and Gilbert \cite{van_der_laan_semiparametric_2025}. Their $\beta(P)$ coefficient can be used to obtain 
\begin{equation*}
\beta(P)^T \underline{f}(x) =\log OR(P)(w,t)=\log \frac{\mu_P(1,w,t)/(1-\mu_P(1,w,t))}{\mu_P(0,w,t)/(1-\mu_P(0,w,t))},
  \end{equation*}
 where $\mu_{P}(a,w,t)= P(J=1|R=1,A=1,W=w,\Delta = 1, T=t)$. We obtain Equation \eqref{eqn:eif} from translating van der Laan and Gilbert's \cite{van_der_laan_semiparametric_2025} Theorem 4 result from their notation to our notation using Table \ref{tab:notation}. For example, van der Laan and Gilbert \cite{van_der_laan_semiparametric_2025} use $J$ and we use $A$ as the outcome of the partially lienar logistic regression model. Similarly, van der Laan and Gilbert \cite{van_der_laan_semiparametric_2025} adjust for $A$ and we adjust for a composite variable of $Y$ and $D=1$ in the partially linear logistic regression model.

\end{proof}

\subsection{Theorem \ref{thm:asympb} proof}
\begin{proof}
The asymptotic distribution for the TMLE has been proven previously in van der Laan and Gilbert \cite{van_der_laan_semiparametric_2025}. We reiterate the proof for clarity.

\begin{align*}
\beta(P_n^*)-\beta(P) &= -P_nD_{P_n^*} + (P_n-P)D_{P_n^*} + \left(\beta(P_n^*)-\beta(P)+PD_{P_n^*}\right)\\
&= o_p(n^{-1/2})+ (P_n-P)D_{P_n^*} +  o_p(n^{-1/2})\\
&= (P_n-P)D_{P_n^*} +  o_p(n^{-1/2})\\
&=  (P_n-P)D_{P}+(P_n-P)D_{P_n^*}-(P_n-P)D_{P} +  
o_p(n^{-1/2})\\
&=  (P_n-P)D_{P}+(P_n-P)(D_{P_n^*}-D_{P}) +  
o_p(n^{-1/2})\\
&=  (P_n-P)D_{P}+ o_p(n^{-1/2})+  o_p(n^{-1/2})\\
&=  (P_n-P)D_{P}+  o_p(n^{-1/2})\\
\end{align*}
The first equality is a standard expansion \cite{van_der_laan_targeted_2011}. In the second equality, we constructed the TMLE estimator such that $P_nD_{P_n^*}=o_p(n^{-1/2})$. The third term, $\beta(P_n^*)-\beta(P)+PD_{P_n^*}$, is a second-order remainder term that is $o_p(n^{-1/2})$ by Lemma 1 in van der Laan and Gilbert \cite{van_der_laan_semiparametric_2025}, which requires Conditions \ref{cond:inv}, \ref{cond:bound}, and \ref{cond:nuisrate}. In the fifth equality, $(P_n-P)(D_{P_n^*}-D_{P})=o_p(n^{-1/2})$ by Lemma 19.24 from van der Vaart \cite{van_der_vaart_asymptotic_2000}. This lemma requires that the function class containing $D_{P_n^*}-D_{P}$ is $P$-Donsker, which holds by Condition \ref{cond:donsker} and because the influence function is a Lipschitz transformation of the nuisance estimators \cite{van_der_vaart_weak_1996}.  Lemma 19.24 also requires that $|| (P_n-P)(D_{P_n^*}-D_{P})||=o_p(1)$, which is implied by the consistency of the nuisance estimators, as stated in Condition \ref{cond:nuisrate}. Given Condition \ref{cond:bound}, $D_P$ is uniformly bounded and has finite variance. From the central limit theorem, $\sqrt{n}(P_n-P)D_{P}\longrightarrow_d N\left(0, E\left[D_{P}(O)D_{P}(O)^T \right]\right)$. Lastly, we apply Slutsky's Lemma to obtain the desired result.
\end{proof}

\subsection{Theorem \ref{thm:asympor} proof}

\begin{proof}
Given Assumptions \ref{ass:posy}-\ref{ass:conddeltaa}, \ref{ass:amisclass}, and \ref{ass:ymisclass}, and Conditions \ref{cond:inv}-\ref{cond:nuisrate}, Theorem \ref{thm:asympb} states that
\[\sqrt{n} \left(\beta(P_{n}^*) - \beta(P) \right) \longrightarrow_d N\left(0, \text{cov}(D_P(O)) \right).\]

We define $g(\beta(P'))=\beta(P')^T\underline{f}(x)=\log OR(P')(x)\in\mathbb{R}$, which has gradient, \newline $\nabla g(\beta(P'))=(\frac{\partial g}{\partial \beta_0(P')},\frac{\partial g}{\partial \beta_1(P')},\cdots,\frac{\partial g}{\partial \beta_{b-1}(P')})^T = \underline{f}(x)\in\mathbb{R}^b$. We apply the multivariate delta-method to obtain the desired result:
\begin{align*}
    \sqrt{n} \left(\log OR(P_n^*)(x) -\log OR(P)(x) \right) &= \sqrt{n} \left(\beta(P_{n}^{*})^T \underline{f}(x) -\beta(P)^T \underline{f}(x) \right)\\
    &=\sqrt{n} \left(g(\beta(P_n^*)) -g(\beta(P)) \right)\\
   &\longrightarrow_d N\left(0, \nabla g (\beta(P))^T\text{ cov}(D_P(O)) \nabla g(\beta(P))\right)\\
   &=N\left(0, \underline{f}(x)^T\text{cov}(D_P(O)) \underline{f}(x)\right).
   \end{align*}
\end{proof}

% ==========  Bibliography

\bibliographystyle{vancouver}
\bibliography{tmlepaper} 

\clearpage
\hypersetup{
  colorlinks=true,
  linkcolor=black,
  filecolor=black,
  urlcolor=black,
  citecolor=black,
  pdftitle={Your Document Title},
  pdfpagemode=FullScreen,
}
\usetikzlibrary{fadings}
\usetikzlibrary{patterns}
\usetikzlibrary{shadows.blur}
\usetikzlibrary{shapes}
\renewcommand{\thetable}{S\arabic{table}}
\renewcommand{\thefigure}{S\arabic{figure}}
\renewcommand{\thesubsection}{\arabic{section}.\arabic{subsection}}
\setlist[itemize]{itemsep=1mm,parsep=1mm,topsep=0mm}
\setlist[enumerate]{itemsep=1mm,parsep=1mm,topsep=0mm}

\vspace*{.8cm} 

\begin{center}
{\LARGE Supplementary Material} \\[1em]
{\LARGE Targeted maximum likelihood estimation of vaccine effectiveness and immune correlates in test-negative design studies with missing data}\\[1em]
{ \large Leah I. B. Andrews, Lars van der Laan, and Peter B. Gilbert}
\end{center}
%\noindent * indicates corresponding author

\setcounter{table}{0}
\setcounter{figure}{0}
\setcounter{section}{0}

\section{Simulation study}
\label{sec:suppsim}

\subsection{Data-generating mechanism}

Table \ref{tab:datagentmle} and Figure \ref{fig:aydatagen} provide additional information about the data-generating mechanism of the two-phase test-negative design (TND) immune correlates simulation study described in Section \ref{subsec:datagen}.

 \begin{table}[htbp]
\caption{Data-generating mechanism for two-phase test-negative design immune correlates simulation study.}
   \label{tab:datagentmle}

    \begin{tabular}{|>{\raggedright\arraybackslash}m{0.39\linewidth}|>{\raggedright\arraybackslash}m{0.565\linewidth}|}
    \hline
         \textbf{Variable}& \textbf{Data-Generating Distribution}\\ 
         \hline
         $X_f$: \text{Female}& Resampled from COVE Distribution\\ 
        [1ex]
         $X_{co}$: \text{Comorbidities}& Resampled from COVE Distribution\\ 
      [1ex]
         $X_{t}$: \text{Calendar Date}& Resampled from COVE Distribution\\ 
         [1ex]
         $A$: High Immune Marker Level & \\
 \quad Main Effects Setting& Bernoulli$(\text{expit}(\log(0.33)+\log(3)X_{f}+\log(0.25)X_{co}+\log(1.01)X_t))$\\ 
 &\\
 \quad Interaction Setting&Bernoulli($\text{expit}(\log(0.33)+\log(3)X_{f}+\log(0.25)X_{co}+\log(1.01)X_t+\log(4)X_{f}X_{co}))$\\ 
    &\\
 \quad Splines Setting& Bernoulli$(\text{expit}(\log(0.33)+\log(3)X_{f}+\log(0.25)X_{co}+\log(1.01)X_t+\log(4)X_{f}X_{co}+\log(1.00)I_{(X_t\geq 90)}(X_t-90)+\log(0.97)I_{(X_t\geq 135)}(X_t-135)))$\\ 
 &\\
         $Y$: SARS-CoV-2 Infection& \\
 \quad Main Effects Setting& Bernoulli($\text{expit}(\log(0.15)+\beta_FA+\log(3)X_{f}+\log(4)X_{co}+\log(0.99)X_t))$\\
 &\\
 \quad Interaction Setting&Bernoulli$(\text{expit}(\log(0.15)+\beta_FA+\log(3)X_{f}+\log(4)X_{co}+\log(0.99)X_t+\log(0.25)X_{f}X_{co}))$\\ 
&\\
 \quad Splines Setting&Bernoulli$(\text{expit}(\log(0.15)+\beta_FA+\log(3)X_{f}+\log(4)X_{co}+\log(0.99)X_t+\log(0.25)X_{f}X_{co}+\log(1.00)I_{(X_t\geq 90)}(X_t-90)+\log(1.03)I_{(X_t\geq135)}(X_t-135)))$\\ 
 &\\
         $W$: Infection with a Pathogen Other than SARS-CoV-2 that Could Induce Symptoms& Bernoulli$(\text{expit}(\log(0.10)+ \log(2)X_{co}+\log(2)X_{f}+\log(1.01)X_t
+\log(0.99)I_{(X_t\geq 90)}(X_t-90)+\log(0.98)I_{(X_t\geq180)}(X_t-180)))$\\ 
&\\
         $D$: Meets Symptom Definition& Bernoulli($\text{expit}(\log(0.10)+\log(2)X_{co}
+\log(13.5)Y+\log(1.08)YX_f+\log(0.53)YX_{co}
+\log(4)W+\log(6)WX_f+\log(0.53)WX_{co}))$\\ 
&\\
         $S$: Obtains SARS-CoV-2 Test and Enrolled in TND& Sample $n$ individuals with $D=1$\\
         \hline
    \end{tabular}
\end{table}

\begin{table}[htbp]
   \addtocounter{table}{-1}
   \caption[]{Data-generating mechanism for two-phase test-negative design immune correlates simulation study (continued).}

    \begin{tabular}{|>{\raggedright\arraybackslash}m{0.37\linewidth}|>{\raggedright\arraybackslash}m{0.57\linewidth}|}
    \hline
         \textbf{Variable}& \textbf{Data-Generating Distribution}\\ 
    \hline
         $\Delta$: Observed Immune Marker Level& \\
  \quad Biased 1:1 Case-Noncase \textcolor{white}{twphase} \textcolor{white}{ii} Two-Phase Sampling Design&$\Delta=1$ for all cases and the same number of noncases sampled without replacement (40\% females without comorbidities, 10\% males without comorbidities, 10\% females with comorbidities, and 40\% males with comorbidities)\\ 
         &\\ 
         \quad Biased 1:3 Case-Noncase \textcolor{white}{twphase} \textcolor{white}{ii} Two-Phase Sampling Design& $\Delta=1$ for all cases and three times as many noncases sampled without replacement (40\% females without comorbidities, 10\% males without comorbidities, 10\% females with comorbidities, and 40\% males with comorbidities)\\ 
         &\\ 
         \quad All TND Participants& $\Delta=1$ for all cases and all noncases\\
         \hline
    \end{tabular}
    \caption*{For each Monte Carlo repetition, $N=50,000$ healthcare-seeking individuals were generated and $n = $ 500, 1000, 2000, or 3000 individuals were enrolled in the phase one TND study cohort ($S=1$). The phase two TND study cohort ($\Delta=1$) sample size depends on the two-phase sampling design and the distribution of $Y$, $X_f$, and $X_{co}$. $\beta(P_F)=\log(0.2), \log(0.7),$ and $\log(1)$.  
    \\
    Abbreviations: TND = Test-Negative Design; log = Natural Logarithm}
 
\end{table}

\begin{figure}[htb]
    \centering
    \includegraphics[width=1\linewidth]{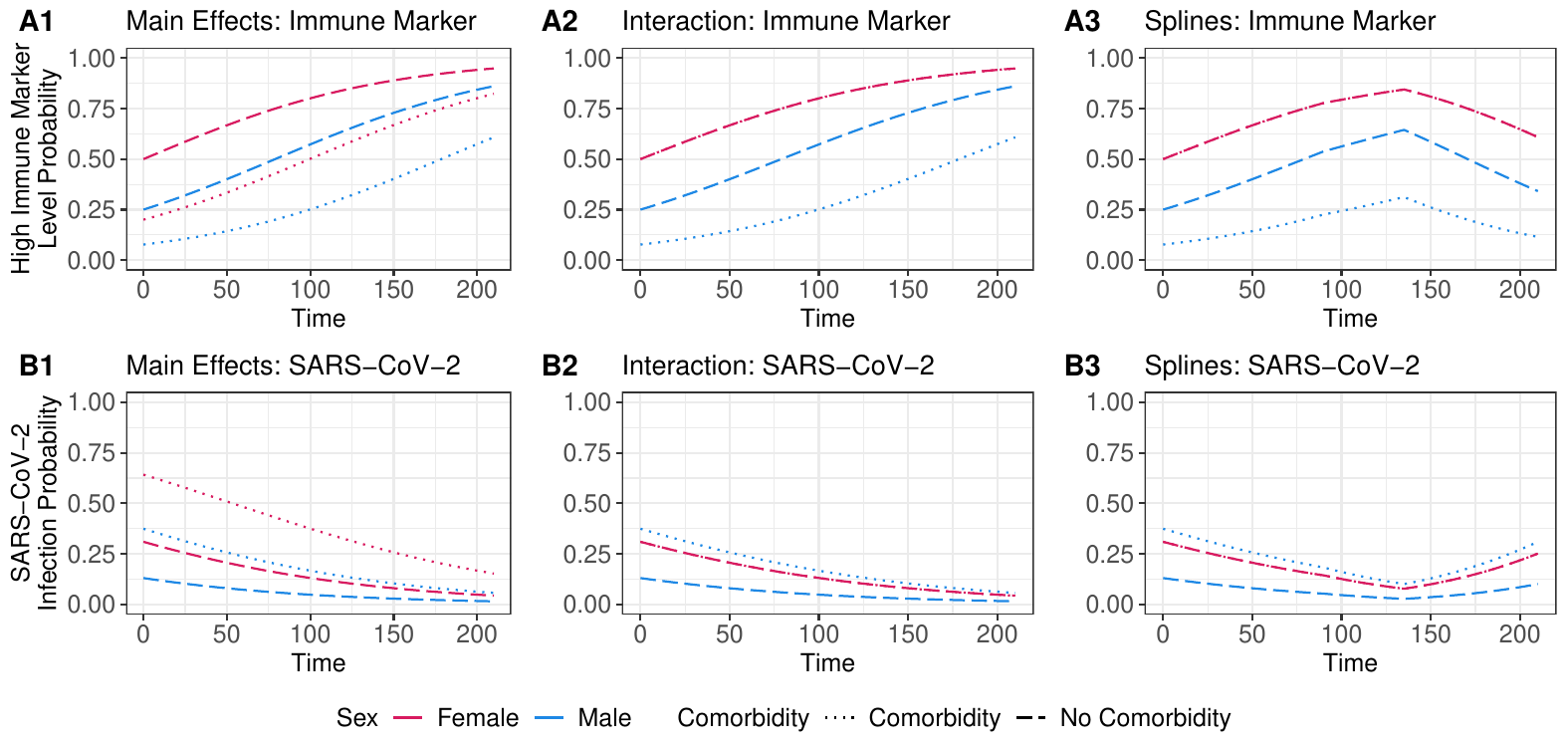}
    \caption[Data-generating distribution for high immune marker level $A$ and SARS-CoV-2 infection $Y$ under main effects, interaction, and splines confounding settings.]{Data-generating distribution for high immune marker level $A$ and SARS-CoV-2 infection $Y$ under main effects, interaction, and splines confounding settings. In the main effects setting (\textbf{A1} and \textbf{B1}), the probability of having a high immune marker level ($A=1$) and SARS-CoV-2 infection ($Y=1$) depends on sex, comorbidity, and calendar date main effects. In the interaction setting (\textbf{A2} and \textbf{B2}), $A$ and $Y$ depend on the aforementioned main effects and a sex-comorbidity interaction. In the splines setting (\textbf{A3} and \textbf{B3}), $A$ and $Y$ depend on the aforementioned main effects, sex-comorbidity interaction, and calendar date splines.}
    \label{fig:aydatagen}
\end{figure}
\newpage

\subsection{Simulation results}

We assessed bias, 95\% confidence interval (CI) coverage, type 1 error and/or power, mean estimated standard error (SE), and Monte Carlo standard deviation (MCSD) for every estimator and simulation scenario (Figures \ref{fig:biasmaineff}-\ref{fig:varcomplete} and Table S2%\ref{tab:vartmle}
).

\begin{figure}[h]
     \centering
     \includegraphics[width=1\linewidth]{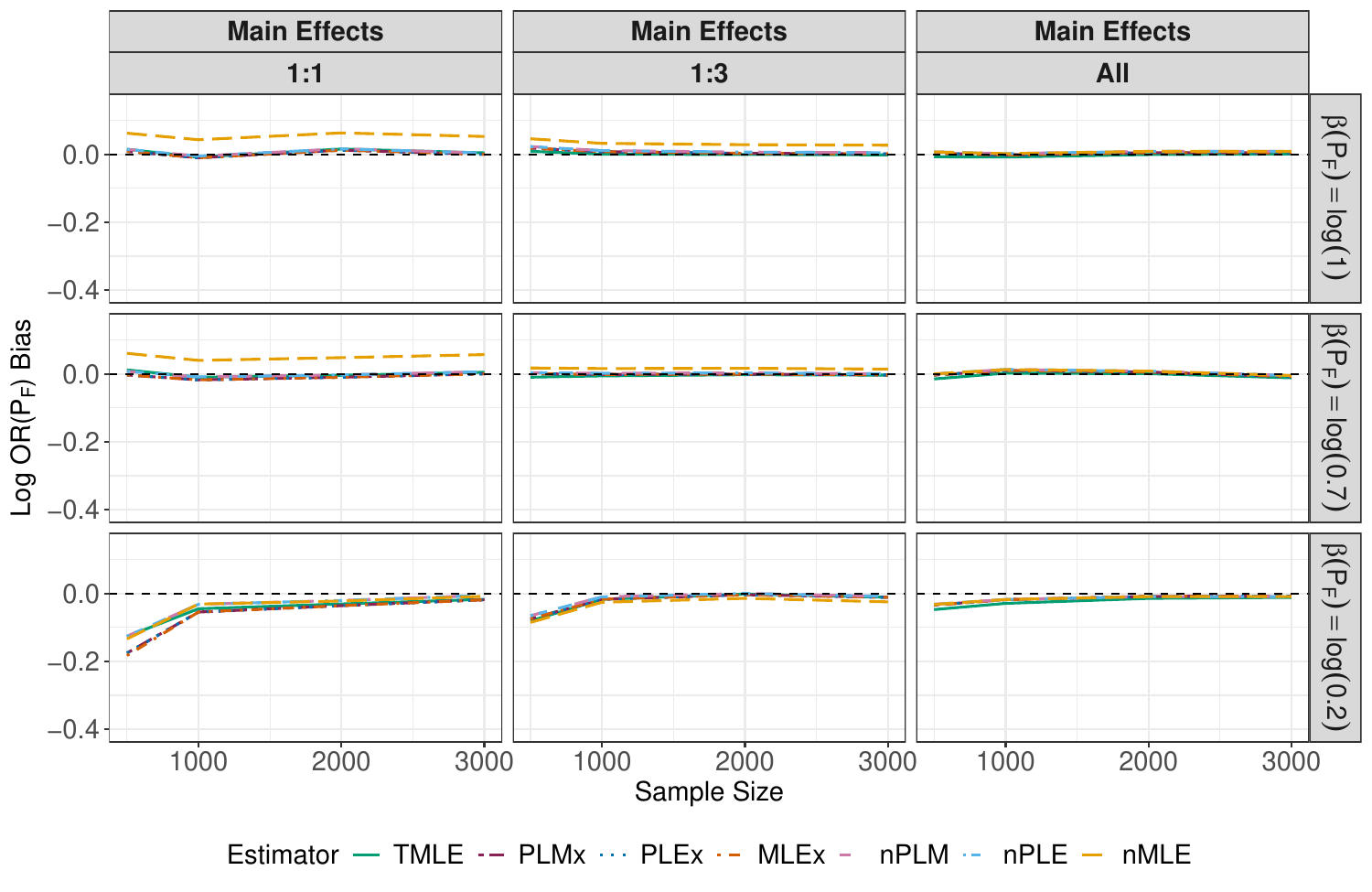}
     \caption[Bias of seven estimators from 1000 simulated two-phase test-negative design (TND) study cohorts generated under the main effects confounding setting.]{Bias of seven estimators from 1000 simulated two-phase test-negative design (TND) study cohorts generated under the main effects confounding setting. Phase one samples sizes were $500, 1000, 2000, $ and $3000$. Phase two TND participants ($A$ observed) were determined using a biased 1:1 case-noncase two-phase sampling design, biased 1:3 case-noncase two-phase sampling design, or all TND participants. The TMLE adjusts for covariates (sex, comorbidities, and calendar date) and uses highly adaptive lasso for estimation. PLMx, PLEx, and MLEx denote two pseudo-likelihood logistic regression approaches (model variance or empirical variance) and an ordinary logistic regression, respectively, that adjust for covariate main effects and a sex-comorbidity interaction. nPLM, nPLE, and nMLE denote two naïve pseudo-likelihood logistic regression approaches and a naïve ordinary logistic regression that adjust for covariate main effects.}
     \label{fig:biasmaineff}
 \end{figure}

\begin{figure}[h]
     \centering
     \includegraphics[width=1\linewidth]{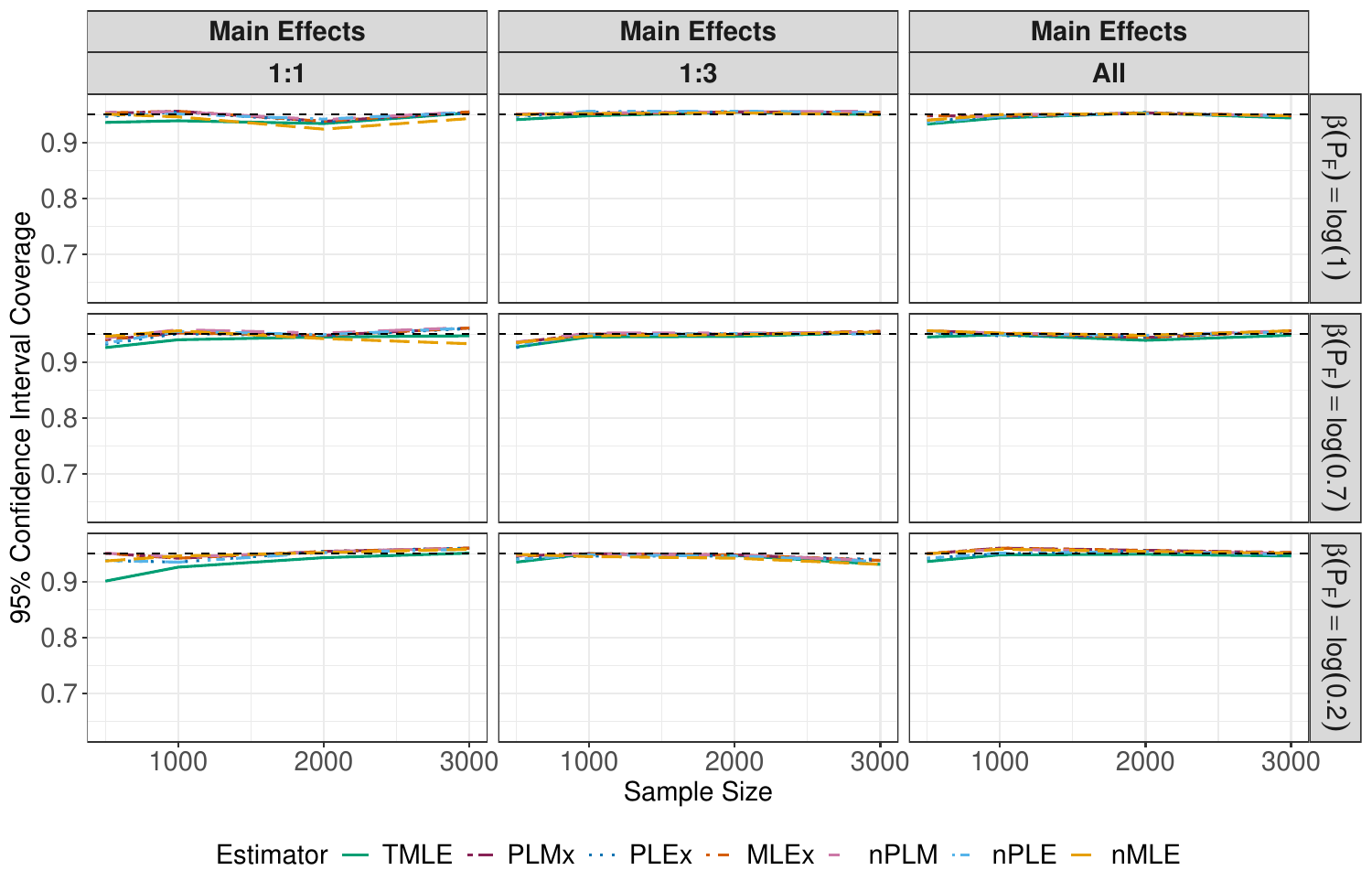}
     \caption[95\% confidence interval coverage of seven estimators from 1000 simulated two-phase test-negative design (TND) study cohorts generated under the main effects confounding setting.]{95\% confidence interval coverage of seven estimators from 1000 simulated two-phase test-negative design (TND) study cohorts generated under the main effects confounding setting. Phase one samples sizes were $500, 1000, 2000, $ and $3000$. Phase two TND participants ($A$ observed) were determined using a biased 1:1 case-noncase two-phase sampling design, biased 1:3 case-noncase two-phase sampling design, or all TND participants. The TMLE adjusts for covariates (sex, comorbidities, and calendar date) and uses highly adaptive lasso for estimation. PLMx, PLEx, and MLEx denote two pseudo-likelihood logistic regression approaches (model variance or empirical variance) and an ordinary logistic regression, respectively, that adjust for covariate main effects and a sex-comorbidity interaction. nPLM, nPLE, and nMLE denote two naïve pseudo-likelihood logistic regression approaches and a naïve ordinary logistic regression that adjust for covariate main effects.}
     \label{fig:covmaineff}
 \end{figure}

 \begin{figure}[h]
     \centering
     \includegraphics[width=1\linewidth]{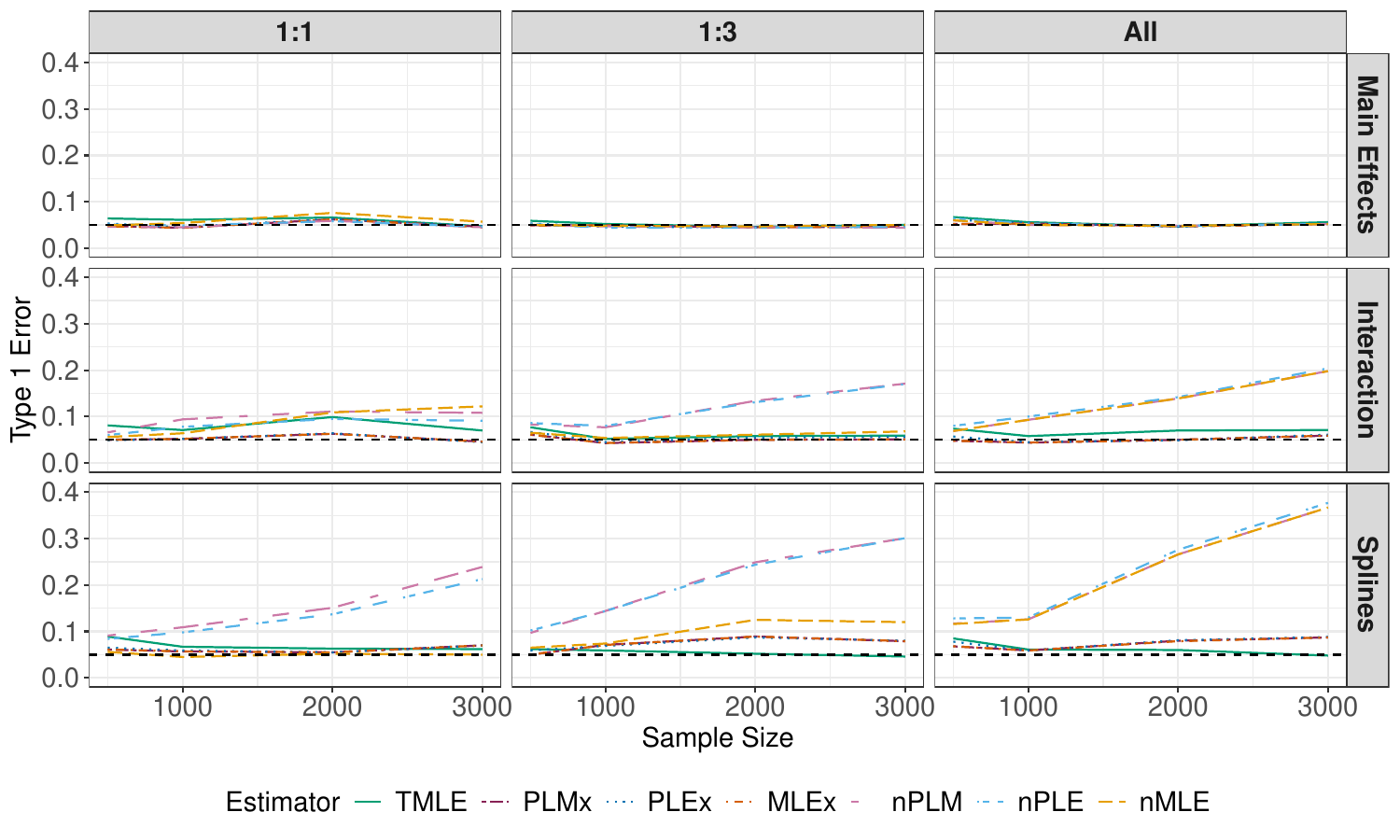}
     \caption[Type 1 error of seven estimators from 1000 simulated two-phase test-negative design (TND) study cohorts generated under the main effects, interaction, and splines confounding settings.]{Type 1 error of seven estimators from 1000 simulated two-phase test-negative design (TND) study cohorts generated under the main effects, interaction, and splines confounding settings. Phase one samples sizes were $500, 1000, 2000, $ and $3000$. Phase two TND participants ($A$ observed) were determined using a biased 1:1 case-noncase two-phase sampling design, biased 1:3 case-noncase two-phase sampling design, or all TND participants. $\beta(P_F)=\log OR (P_F)=\log(1)$. The TMLE adjusts for covariates (sex, comorbidities, and calendar date) and uses highly adaptive lasso for estimation. PLMx, PLEx, and MLEx denote two pseudo-likelihood logistic regression approaches (model variance or empirical variance) and an ordinary logistic regression, respectively, that adjust for covariate main effects and a sex-comorbidity interaction. nPLM, nPLE, and nMLE denote two naïve pseudo-likelihood logistic regression approaches and a naïve ordinary logistic regression that adjust for covariate main effects.}
     \label{fig:type1tmle}
 \end{figure}

 \begin{figure}[p]
     \centering
     \includegraphics[width=1\linewidth]{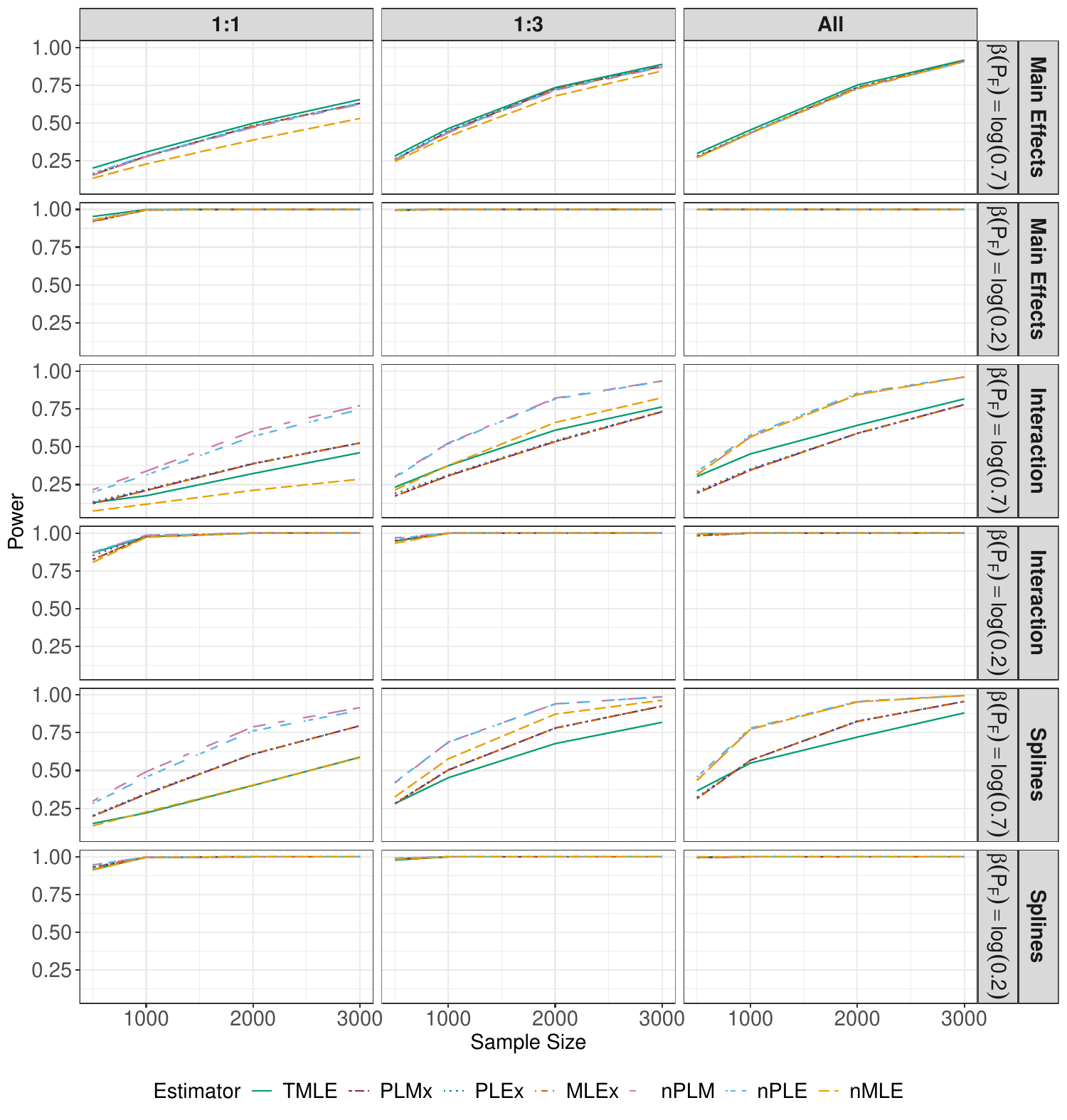}
     \caption[Power of seven estimators from 1000 simulated two-phase test-negative design (TND) study cohorts generated under the main effects, interaction, and splines confounding settings.]{Power of seven estimators from 1000 simulated two-phase test-negative design (TND) study cohorts generated under the main effects, interaction, and splines confounding settings. Phase one samples sizes were $500, 1000, 2000, $ and $3000$. Phase two TND participants ($A$ observed) were determined using a biased 1:1 case-noncase two-phase sampling design, biased 1:3 case-noncase two-phase sampling design, or all TND participants. The TMLE adjusts for covariates (sex, comorbidities, and calendar date) and uses highly adaptive lasso for estimation. PLMx, PLEx, and MLEx denote two pseudo-likelihood logistic regression approaches (model variance or empirical variance) and an ordinary logistic regression, respectively, that adjust for covariate main effects and a sex-comorbidity interaction. nPLM, nPLE, and nMLE denote two naïve pseudo-likelihood logistic regression approaches and a naïve ordinary logistic regression that adjust for covariate main effects.}
     \label{fig:powertmle}
 \end{figure}

 \begin{figure}[h]
    \centering
    \includegraphics[width=1\linewidth]{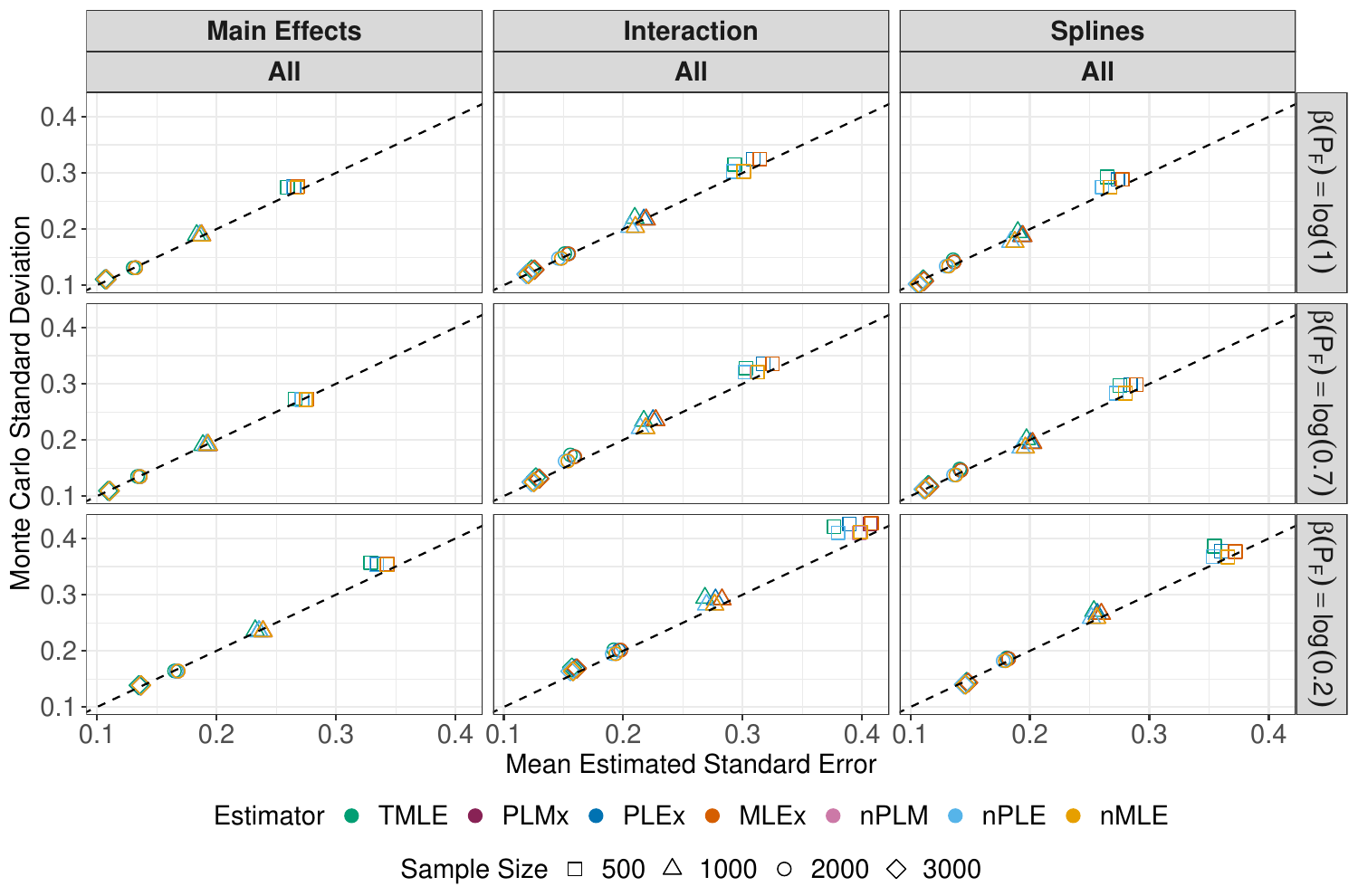}  
    \caption[Monte Carlo standard deviation and mean estimated standard error of seven estimators from 1000 simulated two-phase test-negative design (TND) study cohorts with $A$ observed on all participants, generated under three confounding settings.]{Monte Carlo standard deviation and mean estimated standard error of seven estimators from 1000 simulated two-phase test-negative design study cohorts with $A$ observed on all participants, generated under three confounding settings. The identity line is represented with black dashes. Phase one samples sizes were $500, 1000, 2000, $ and $3000$. The TMLE adjusts for covariates (sex, comorbidities, and calendar date) and uses highly adaptive lasso for estimation. PLMx, PLEx, and MLEx denote two pseudo-likelihood logistic regression approaches (model variance or empirical variance) and an ordinary logistic regression, respectively, that adjust for covariate main effects and a sex-comorbidity interaction. nPLM, nPLE, and nMLE denote two naïve pseudo-likelihood logistic regression approaches and a naïve ordinary logistic regression that adjust for covariate main effects.}
    \label{fig:varcomplete}
\end{figure}
\clearpage

\begin{longtable}{|>{\centering\arraybackslash}m{0.02\linewidth}
                  >{\centering\arraybackslash}m{0.09\linewidth} |
                  >{\centering\arraybackslash}m{0.03\linewidth}
                  >{\centering\arraybackslash}m{0.0435\linewidth}|
                  >{\centering\arraybackslash}m{0.03\linewidth}
                  >{\centering\arraybackslash}m{0.0435\linewidth}|
                  >{\centering\arraybackslash}m{0.055\linewidth}|
                  >{\centering\arraybackslash}m{0.03\linewidth}
                  >{\centering\arraybackslash}m{0.0435\linewidth}|
                  >{\centering\arraybackslash}m{0.03\linewidth}
                  >{\centering\arraybackslash}m{0.0435\linewidth}|
                  >{\centering\arraybackslash}m{0.055\linewidth}|
                  >{\centering\arraybackslash}m{0.03\linewidth}
                  >{\centering\arraybackslash}m{0.0435\linewidth}|}

\caption{Monte Carlo standard deviation and mean estimated standard error of seven logistic regression estimators from 1000 simulated two-phase test-negative design study cohorts, generated under the main effects, interaction, and splines confounding settings. \label{tab:vartmle}}\\
\hline
\multirow{2}{=}{$n$} & 2-Phase &
\multicolumn{2}{c|}{\textbf{TMLE}} &
\multicolumn{2}{c|}{\textbf{PLMx}} &
\multicolumn{1}{c|}{\textbf{PLEx}} &
\multicolumn{2}{c|}{\textbf{MLEx}} &
\multicolumn{2}{c|}{\textbf{nPLM}} &
\multicolumn{1}{c|}{\textbf{nPLE}} &
\multicolumn{2}{c|}{\textbf{nMLE}} \\
& Sampling Design &
MC SD & Mean SE &
MC SD & Mean SE &
Mean SE &
MC SD & Mean SE &
MC SD & Mean SE &
Mean SE &
MC SD & Mean SE \\
%\midrule
\hline
\endfirsthead

\caption{Monte Carlo standard deviation and mean estimated standard error of seven logistic regression estimators from 1000 simulated two-phase test-negative design study cohorts, generated under the main effects, interaction, and splines confounding settings (continued).}\\
\hline
\multirow{2}{=}{$n$} & 2-Phase &
\multicolumn{2}{c|}{\textbf{TMLE}} &
\multicolumn{2}{c|}{\textbf{PLMx}} &
\multicolumn{1}{c|}{\textbf{PLEx}} &
\multicolumn{2}{c|}{\textbf{MLEx}} &
\multicolumn{2}{c|}{\textbf{nPLM}} &
\multicolumn{1}{c|}{\textbf{nPLE}} &
\multicolumn{2}{c|}{\textbf{nMLE}} \\
& Sampling Design &
MC SD & Mean SE &
MC SD & Mean SE &
Mean SE &
MC SD & Mean SE &
MC SD & Mean SE &
Mean SE &
MC SD & Mean SE \\
\hline
\endhead

\hline
\endfoot

\endlastfoot

\multicolumn{14}{|c|}{Main Effects, $\beta(P_F) = \log(1)$}\\                                                                                      \hline       
 500   & 1:1      & 0.35            & 0.32              & 0.35            & 0.35              & 0.34              & 0.35            & 0.35              & 0.35            & 0.34              & 0.34              & 0.33            & 0.33              \\
    & 1:3      & 0.28            & 0.27              & 0.28            & 0.27              & 0.27              & 0.28            & 0.27              & 0.28            & 0.27              & 0.27              & 0.28            & 0.27              \\
    & All      & 0.27            & 0.26              & 0.28            & 0.27              & 0.26              & 0.28            & 0.27              & 0.27            & 0.27              & 0.26              & 0.27            & 0.27              \\
 1000  & 1:1      & 0.24            & 0.23              & 0.24            & 0.24              & 0.24              & 0.24            & 0.24              & 0.24            & 0.24              & 0.24              & 0.23            & 0.23              \\
   & 1:3      & 0.19            & 0.19              & 0.19            & 0.19              & 0.19              & 0.19            & 0.19              & 0.19            & 0.19              & 0.19              & 0.19            & 0.19              \\
   & All      & 0.19            & 0.18              & 0.19            & 0.19              & 0.19              & 0.19            & 0.19              & 0.19            & 0.19              & 0.19              & 0.19            & 0.19              \\
 2000  & 1:1      & 0.18            & 0.16              & 0.18            & 0.17              & 0.17              & 0.18            & 0.17              & 0.17            & 0.17              & 0.17              & 0.17            & 0.16              \\
  & 1:3      & 0.13            & 0.13              & 0.13            & 0.13              & 0.13              & 0.13            & 0.13              & 0.13            & 0.13              & 0.13              & 0.13            & 0.13              \\
  & All      & 0.13            & 0.13              & 0.13            & 0.13              & 0.13              & 0.13            & 0.13              & 0.13            & 0.13              & 0.13              & 0.13            & 0.13              \\
 3000  & 1:1      & 0.14            & 0.13              & 0.14            & 0.14              & 0.14              & 0.14            & 0.14              & 0.14            & 0.14              & 0.14              & 0.13            & 0.13              \\
   & 1:3      & 0.11            & 0.11              & 0.11            & 0.11              & 0.11              & 0.11            & 0.11              & 0.11            & 0.11              & 0.11              & 0.11            & 0.11              \\
   & All      & 0.11            & 0.11              & 0.11            & 0.11              & 0.11              & 0.11            & 0.11              & 0.11            & 0.11              & 0.11              & 0.11            & 0.11              \\
 \hline
\multicolumn{14}{|c|}{Main Effects, $\beta(P_F) = \log(0.7)$}   \\                                                                                            \hline
 500   & 1:1      & 0.41            & 0.35              & 0.41            & 0.39              & 0.38              & 0.41            & 0.39              & 0.40            & 0.38              & 0.37              & 0.38            & 0.36              \\
    & 1:3      & 0.30            & 0.28              & 0.30            & 0.29              & 0.28              & 0.30            & 0.29              & 0.30            & 0.28              & 0.28              & 0.30            & 0.28              \\
    & All      & 0.27            & 0.27              & 0.27            & 0.28              & 0.27              & 0.27            & 0.28              & 0.27            & 0.27              & 0.27              & 0.27            & 0.27              \\
 1000  & 1:1      & 0.27            & 0.25              & 0.27            & 0.27              & 0.27              & 0.27            & 0.27              & 0.26            & 0.26              & 0.26              & 0.25            & 0.25              \\
   & 1:3      & 0.20            & 0.20              & 0.20            & 0.20              & 0.20              & 0.20            & 0.20              & 0.20            & 0.20              & 0.20              & 0.20            & 0.20              \\
   & All      & 0.19            & 0.19              & 0.19            & 0.19              & 0.19              & 0.19            & 0.19              & 0.19            & 0.19              & 0.19              & 0.19            & 0.19              \\
 2000  & 1:1      & 0.19            & 0.18              & 0.19            & 0.19              & 0.19              & 0.19            & 0.19              & 0.19            & 0.19              & 0.18              & 0.18            & 0.18              \\
  & 1:3      & 0.14            & 0.14              & 0.14            & 0.14              & 0.14              & 0.14            & 0.14              & 0.14            & 0.14              & 0.14              & 0.14            & 0.14              \\
  & All      & 0.14            & 0.13              & 0.13            & 0.14              & 0.14              & 0.13            & 0.14              & 0.13            & 0.14              & 0.14              & 0.13            & 0.14              \\
 3000  & 1:1      & 0.15            & 0.15              & 0.16            & 0.15              & 0.15              & 0.16            & 0.15              & 0.15            & 0.15              & 0.15              & 0.15            & 0.15              \\
   & 1:3      & 0.11            & 0.11              & 0.11            & 0.11              & 0.11              & 0.11            & 0.11              & 0.11            & 0.11              & 0.11              & 0.11            & 0.11              \\
   & All      & 0.11            & 0.11              & 0.11            & 0.11              & 0.11              & 0.11            & 0.11              & 0.11            & 0.11              & 0.11              & 0.11            & 0.11              \\
 \hline
\multicolumn{14}{|c|}{Main Effects, $\beta(P_F) = \log(0.2)$}       \\                                                 \hline
 500   & 1:1      & 0.61            & 0.48              & 1.02            & 1.08              & 0.53              & 1.23            & 4.45              & 0.72            & 0.69              & 0.51              & 0.82            & 2.60              \\
    & 1:3      & 0.41            & 0.38              & 0.41            & 0.39              & 0.38              & 0.41            & 0.39              & 0.40            & 0.39              & 0.38              & 0.41            & 0.39              \\
    & All      & 0.36            & 0.33              & 0.35            & 0.34              & 0.33              & 0.35            & 0.34              & 0.35            & 0.34              & 0.33              & 0.35            & 0.34              \\
 1000  & 1:1      & 0.38            & 0.34              & 0.39            & 0.37              & 0.36              & 0.39            & 0.37              & 0.37            & 0.36              & 0.35              & 0.36            & 0.35              \\
   & 1:3      & 0.27            & 0.26              & 0.27            & 0.27              & 0.26              & 0.27            & 0.27              & 0.27            & 0.27              & 0.26              & 0.28            & 0.27              \\
   & All      & 0.24            & 0.23              & 0.23            & 0.24              & 0.24              & 0.23            & 0.24              & 0.23            & 0.24              & 0.24              & 0.23            & 0.24              \\
 2000  & 1:1      & 0.26            & 0.24              & 0.26            & 0.26              & 0.25              & 0.26            & 0.26              & 0.26            & 0.25              & 0.25              & 0.25            & 0.25              \\
  & 1:3      & 0.19            & 0.19              & 0.19            & 0.19              & 0.19              & 0.19            & 0.19              & 0.19            & 0.19              & 0.19              & 0.19            & 0.19              \\
  & All      & 0.16            & 0.16              & 0.16            & 0.17              & 0.17              & 0.16            & 0.17              & 0.16            & 0.17              & 0.17              & 0.16            & 0.17              \\
 3000  & 1:1      & 0.20            & 0.20              & 0.20            & 0.21              & 0.21              & 0.20            & 0.21              & 0.20            & 0.20              & 0.20              & 0.20            & 0.20              \\
   & 1:3      & 0.16            & 0.15              & 0.16            & 0.15              & 0.15              & 0.16            & 0.15              & 0.16            & 0.15              & 0.15              & 0.16            & 0.15              \\
   & All      & 0.14            & 0.13              & 0.14            & 0.14              & 0.14              & 0.14            & 0.14              & 0.14            & 0.14              & 0.14              & 0.14            & 0.14              \\
\hline
\multicolumn{14}{|c|}{Interaction, $\beta(P_F) = \log(1)$}  \\                                                               \hline                                                                                                                                                          
 500   & 1:1      & 0.42            & 0.37              & 0.42            & 0.41              & 0.41              & 0.42            & 0.41              & 0.44            & 0.41              & 0.43              & 0.40            & 0.40              \\
    & 1:3      & 0.35            & 0.32              & 0.35            & 0.33              & 0.33              & 0.35            & 0.33              & 0.34            & 0.33              & 0.32              & 0.34            & 0.33              \\
    & All      & 0.31            & 0.29              & 0.32            & 0.31              & 0.31              & 0.32            & 0.31              & 0.30            & 0.30              & 0.29              & 0.30            & 0.30              \\
 1000  & 1:1      & 0.29            & 0.26              & 0.29            & 0.28              & 0.28              & 0.29            & 0.28              & 0.31            & 0.29              & 0.30              & 0.28            & 0.27              \\
 & 1:3      & 0.23            & 0.23              & 0.23            & 0.23              & 0.23              & 0.23            & 0.23              & 0.23            & 0.23              & 0.23              & 0.22            & 0.23              \\
  & All      & 0.22            & 0.21              & 0.22            & 0.22              & 0.22              & 0.22            & 0.22              & 0.20            & 0.21              & 0.21              & 0.20            & 0.21              \\
 2000  & 1:1      & 0.21            & 0.19              & 0.21            & 0.20              & 0.20              & 0.21            & 0.20              & 0.22            & 0.20              & 0.21              & 0.20            & 0.19              \\
  & 1:3      & 0.16            & 0.16              & 0.16            & 0.16              & 0.16              & 0.16            & 0.16              & 0.16            & 0.16              & 0.16              & 0.16            & 0.16              \\
  & All      & 0.16            & 0.15              & 0.16            & 0.15              & 0.15              & 0.16            & 0.15              & 0.15            & 0.15              & 0.15              & 0.15            & 0.15              \\
 3000  & 1:1      & 0.17            & 0.16              & 0.16            & 0.16              & 0.16              & 0.16            & 0.16              & 0.17            & 0.16              & 0.17              & 0.16            & 0.16              \\
  & 1:3      & 0.14            & 0.13              & 0.13            & 0.13              & 0.13              & 0.13            & 0.13              & 0.13            & 0.13              & 0.13              & 0.13            & 0.13              \\
  & All      & 0.13            & 0.12              & 0.13            & 0.13              & 0.13              & 0.13            & 0.13              & 0.12            & 0.12              & 0.12              & 0.12            & 0.12              \\
\hline
\multicolumn{14}{|c|}{Interaction, $\beta(P_F) = \log(0.7)$} \\
\hline
 500   & 1:1      & 0.45            & 0.40              & 0.47            & 0.46              & 0.45              & 0.47            & 0.46              & 0.48            & 0.46              & 0.47              & 0.43            & 0.43              \\
    & 1:3      & 0.35            & 0.34              & 0.35            & 0.36              & 0.35              & 0.35            & 0.36              & 0.35            & 0.35              & 0.35              & 0.34            & 0.35              \\
   & All      & 0.33            & 0.30              & 0.34            & 0.32              & 0.32              & 0.34            & 0.33              & 0.32            & 0.31              & 0.30              & 0.32            & 0.31              \\
 1000  & 1:1      & 0.31            & 0.28              & 0.31            & 0.31              & 0.31              & 0.31            & 0.31              & 0.33            & 0.31              & 0.33              & 0.30            & 0.30              \\
   & 1:3      & 0.25            & 0.24              & 0.25            & 0.25              & 0.25              & 0.25            & 0.25              & 0.25            & 0.24              & 0.25              & 0.24            & 0.24              \\
   & All      & 0.23            & 0.22              & 0.24            & 0.23              & 0.22              & 0.24            & 0.23              & 0.22            & 0.22              & 0.21              & 0.22            & 0.22              \\
 2000  & 1:1      & 0.22            & 0.20              & 0.22            & 0.22              & 0.22              & 0.22            & 0.22              & 0.23            & 0.22              & 0.23              & 0.21            & 0.21              \\
   & 1:3      & 0.18            & 0.17              & 0.17            & 0.17              & 0.17              & 0.17            & 0.17              & 0.17            & 0.17              & 0.17              & 0.17            & 0.17              \\
   & All      & 0.17            & 0.16              & 0.17            & 0.16              & 0.16              & 0.17            & 0.16              & 0.16            & 0.15              & 0.15              & 0.16            & 0.15              \\
 3000  & 1:1      & 0.19            & 0.17              & 0.18            & 0.18              & 0.18              & 0.18            & 0.18              & 0.19            & 0.18              & 0.19              & 0.17            & 0.17              \\
   & 1:3      & 0.14            & 0.14              & 0.13            & 0.14              & 0.14              & 0.13            & 0.14              & 0.13            & 0.14              & 0.14              & 0.13            & 0.14              \\
  & All      & 0.13            & 0.13              & 0.13            & 0.13              & 0.13              & 0.13            & 0.13              & 0.13            & 0.13              & 0.12              & 0.13            & 0.13              \\
\hline
\multicolumn{14}{|c|}{Interaction, $\beta(P_F) = \log(0.2)$}  \\
\hline
 500   & 1:1      & 0.75            & 0.58              & 1.51            & 2.23              & 0.66              & 1.95            & 17.45             & 0.89            & 0.92              & 0.67              & 0.90            & 2.62              \\
   & 1:3      & 0.52            & 0.47              & 0.53            & 0.50              & 0.49              & 0.53            & 0.50              & 0.52            & 0.49              & 0.49              & 0.52            & 0.49              \\
  & All      & 0.42            & 0.38              & 0.43            & 0.41              & 0.39              & 0.43            & 0.41              & 0.41            & 0.40              & 0.38              & 0.41            & 0.40              \\
 1000  & 1:1      & 0.45            & 0.40              & 0.45            & 0.44              & 0.44              & 0.45            & 0.44              & 0.46            & 0.44              & 0.45              & 0.42            & 0.43              \\
   & 1:3      & 0.34            & 0.33              & 0.34            & 0.34              & 0.34              & 0.34            & 0.34              & 0.34            & 0.33              & 0.34              & 0.34            & 0.34              \\
   & All      & 0.29            & 0.27              & 0.29            & 0.28              & 0.28              & 0.29            & 0.28              & 0.28            & 0.28              & 0.27              & 0.28            & 0.28              \\
 2000  & 1:1      & 0.31            & 0.28              & 0.32            & 0.31              & 0.30              & 0.32            & 0.31              & 0.32            & 0.30              & 0.32              & 0.30            & 0.29              \\
  & 1:3      & 0.24            & 0.23              & 0.24            & 0.24              & 0.23              & 0.24            & 0.24              & 0.24            & 0.23              & 0.24              & 0.24            & 0.23              \\
   & All      & 0.20            & 0.19              & 0.20            & 0.20              & 0.20              & 0.20            & 0.20              & 0.19            & 0.19              & 0.19              & 0.19            & 0.19              \\
 3000  & 1:1      & 0.25            & 0.23              & 0.25            & 0.25              & 0.24              & 0.25            & 0.25              & 0.26            & 0.24              & 0.25              & 0.24            & 0.24              \\
  & 1:3      & 0.20            & 0.19              & 0.20            & 0.19              & 0.19              & 0.20            & 0.19              & 0.20            & 0.19              & 0.19              & 0.20            & 0.19              \\
   & All      & 0.17            & 0.16              & 0.17            & 0.16              & 0.16              & 0.17            & 0.16              & 0.16            & 0.16              & 0.16              & 0.16            & 0.16              \\
\hline
\multicolumn{14}{|c|}{Splines, $\beta(P_F) = \log(1)$}  \\                                                                \hline                                                                                                                                                           
 500   & 1:1      & 0.38            & 0.34              & 0.38            & 0.37              & 0.37              & 0.38            & 0.37              & 0.39            & 0.37              & 0.38              & 0.36            & 0.36              \\
   & 1:3      & 0.30            & 0.28              & 0.30            & 0.29              & 0.29              & 0.30            & 0.29              & 0.29            & 0.28              & 0.28              & 0.29            & 0.29              \\
   & All      & 0.29            & 0.26              & 0.29            & 0.28              & 0.27              & 0.29            & 0.28              & 0.27            & 0.27              & 0.26              & 0.27            & 0.27              \\
 1000  & 1:1      & 0.26            & 0.24              & 0.25            & 0.25              & 0.25              & 0.25            & 0.25              & 0.27            & 0.26              & 0.27              & 0.24            & 0.25              \\
   & 1:3      & 0.21            & 0.20              & 0.21            & 0.20              & 0.20              & 0.21            & 0.20              & 0.20            & 0.20              & 0.20              & 0.20            & 0.20              \\
  & All      & 0.19            & 0.19              & 0.19            & 0.19              & 0.19              & 0.19            & 0.19              & 0.18            & 0.19              & 0.18              & 0.18            & 0.19              \\
 2000  & 1:1      & 0.18            & 0.18              & 0.18            & 0.18              & 0.18              & 0.18            & 0.18              & 0.19            & 0.18              & 0.19              & 0.17            & 0.17              \\
  & 1:3      & 0.14            & 0.14              & 0.14            & 0.14              & 0.14              & 0.14            & 0.14              & 0.14            & 0.14              & 0.14              & 0.14            & 0.14              \\
   & All      & 0.15            & 0.14              & 0.14            & 0.14              & 0.14              & 0.14            & 0.14              & 0.13            & 0.13              & 0.13              & 0.13            & 0.13              \\
 3000  & 1:1      & 0.15            & 0.14              & 0.14            & 0.15              & 0.15              & 0.14            & 0.15              & 0.15            & 0.15              & 0.15              & 0.14            & 0.14              \\
  & 1:3      & 0.12            & 0.12              & 0.11            & 0.12              & 0.12              & 0.11            & 0.12              & 0.11            & 0.11              & 0.12              & 0.11            & 0.12              \\
   & All      & 0.11            & 0.11              & 0.11            & 0.11              & 0.11              & 0.11            & 0.11              & 0.10            & 0.11              & 0.11              & 0.10            & 0.11              \\
\hline
\multicolumn{14}{|c|}{Splines, $\beta(P_F) = \log(0.7)$} \\                                                            \hline
 500   & 1:1      & 0.40            & 0.37              & 0.41            & 0.40              & 0.40              & 0.41            & 0.40              & 0.42            & 0.40              & 0.41              & 0.38            & 0.39              \\
   & 1:3      & 0.33            & 0.30              & 0.32            & 0.31              & 0.31              & 0.32            & 0.31              & 0.32            & 0.31              & 0.31              & 0.32            & 0.31              \\
   & All      & 0.30            & 0.28              & 0.30            & 0.29              & 0.28              & 0.30            & 0.29              & 0.28            & 0.28              & 0.27              & 0.28            & 0.28              \\
 1000  & 1:1      & 0.28            & 0.26              & 0.28            & 0.28              & 0.28              & 0.28            & 0.28              & 0.29            & 0.28              & 0.29              & 0.26            & 0.27              \\
   & 1:3      & 0.23            & 0.22              & 0.22            & 0.22              & 0.22              & 0.22            & 0.22              & 0.22            & 0.21              & 0.21              & 0.22            & 0.22              \\
  & All      & 0.20            & 0.20              & 0.19            & 0.20              & 0.20              & 0.19            & 0.20              & 0.19            & 0.20              & 0.19              & 0.19            & 0.20              \\
 2000  & 1:1      & 0.20            & 0.19              & 0.19            & 0.19              & 0.19              & 0.19            & 0.20              & 0.20            & 0.20              & 0.20              & 0.18            & 0.19              \\
   & 1:3      & 0.16            & 0.15              & 0.16            & 0.15              & 0.15              & 0.16            & 0.15              & 0.15            & 0.15              & 0.15              & 0.15            & 0.15              \\
   & All      & 0.15            & 0.14              & 0.15            & 0.14              & 0.14              & 0.15            & 0.14              & 0.14            & 0.14              & 0.14              & 0.14            & 0.14              \\
 3000  & 1:1      & 0.16            & 0.16              & 0.16            & 0.16              & 0.16              & 0.16            & 0.16              & 0.17            & 0.16              & 0.17              & 0.16            & 0.15              \\
   & 1:3      & 0.13            & 0.13              & 0.12            & 0.13              & 0.13              & 0.12            & 0.13              & 0.12            & 0.12              & 0.12              & 0.12            & 0.12              \\
   & All      & 0.12            & 0.12              & 0.12            & 0.12              & 0.12              & 0.12            & 0.12              & 0.11            & 0.11              & 0.11              & 0.11            & 0.11              \\
\hline
\multicolumn{14}{|c|}{Splines, $\beta(P_F) = \log(0.2)$}    \\                                                            \hline                                                                                                                                                
 500   & 1:1      & 0.65            & 0.52              & 0.87            & 0.96              & 0.56              & 1.17            & 6.33              & 0.75            & 0.77              & 0.56              & 0.97            & 3.58              \\
   & 1:3      & 0.46            & 0.43              & 0.45            & 0.43              & 0.42              & 0.45            & 0.43              & 0.44            & 0.42              & 0.42              & 0.45            & 0.43              \\
   & All      & 0.39            & 0.35              & 0.38            & 0.37              & 0.36              & 0.38            & 0.37              & 0.37            & 0.37              & 0.35              & 0.37            & 0.37              \\
 1000  & 1:1      & 0.40            & 0.36              & 0.40            & 0.38              & 0.38              & 0.40            & 0.38              & 0.41            & 0.38              & 0.39              & 0.40            & 0.37              \\
   & 1:3      & 0.33            & 0.30              & 0.32            & 0.30              & 0.29              & 0.32            & 0.30              & 0.31            & 0.29              & 0.29              & 0.32            & 0.30              \\
   & All      & 0.27            & 0.25              & 0.27            & 0.26              & 0.26              & 0.27            & 0.26              & 0.26            & 0.26              & 0.25              & 0.26            & 0.26              \\
 2000  & 1:1      & 0.27            & 0.26              & 0.27            & 0.27              & 0.26              & 0.27            & 0.27              & 0.27            & 0.26              & 0.27              & 0.26            & 0.26              \\
   & 1:3      & 0.21            & 0.21              & 0.20            & 0.21              & 0.21              & 0.20            & 0.21              & 0.20            & 0.20              & 0.21              & 0.20            & 0.21              \\
   & All      & 0.19            & 0.18              & 0.19            & 0.18              & 0.18              & 0.19            & 0.18              & 0.18            & 0.18              & 0.18              & 0.18            & 0.18              \\
 3000  & 1:1      & 0.22            & 0.21              & 0.22            & 0.22              & 0.21              & 0.22            & 0.22              & 0.22            & 0.21              & 0.22              & 0.21            & 0.21              \\
  & 1:3      & 0.17            & 0.17              & 0.17            & 0.17              & 0.17              & 0.17            & 0.17              & 0.17            & 0.17              & 0.17              & 0.17            & 0.17              \\
  & All      & 0.14            & 0.15              & 0.14            & 0.15              & 0.15              & 0.14            & 0.15              & 0.14            & 0.15              & 0.14              & 0.14            & 0.15   \\

\hline
\end{longtable}
\vspace{-5mm}
\noindent
Phase two TND participants ($A$ observed) were determined using a biased 1:1 case-noncase two-phase sampling design, biased 1:3 case-noncase two-phase sampling design, or all TND participants. The TMLE adjusts for covariates (sex, comorbidities, and calendar date) and uses highly adaptive lasso for estimation. PLMx, PLEx, and MLEx denote two pseudo-likelihood logistic regression approaches (model variance or empirical variance) and an ordinary logistic regression, respectively, that adjust for covariate main effects and an interaction. nPLM, nPLE, and nMLE denote two naïve pseudo-likelihood logistic regression approaches and a naïve ordinary logistic regression that adjust for covariate main effects. The Monte Carlo standard deviations for PLEx and nPLE were identical to the Monte Carlo standard deviations for PLMx and nPLE, respectively, and omitted from the table.
\\
Abbreviations: TMLE = Targeted Maximum Likelihood Estimator; PL = Pseudo-Likelihood; MLE = Maximum Likelihood Estimator; MCSD = Monte Carlo Standard Deviation; SE = Standard Error; TND = Test-Negative Design

\section{Application 1: inference on COVID-19 vaccine effectiveness}
\label{sec:suppapp1}

The primary COVID-19 symptom definition studied in COVE was defined as having at least two systemic symptoms (i.e., fever, chills, myalgia, headache, sore throat, new olfactory and taste disorder) or at least one respiratory symptom (i.e., cough, shortness or breath or difficulty breathing, pneumonia) \cite{el_sahly_efficacy_2021}.

In COVE, participants' age was measured in years and sex was measured as male or female at birth. COVE collected detailed self-reported race and ethnicity information, which we used to define a binary race and ethnicity indicator, Person of Color vs. Non-Hispanic/Latino White, to simplify covariate adjustment and manage small subgroup sizes when adjusting for additional covariates. We defined a Person of Color as a participant who reported their race as American Indian or Alaska Native, Asian, Black or African American, Multiple, Native Hawaiian or Other Pacific Islander, or other, and/or who reported their ethnicity as Hispanic or Latino. A Non-Hispanic/Latino White individual was defined as a participant who reported their race as White and either reported their ethnicity as Not Hispanic or Latino or did not report their ethnicity. Participants who did not report their race and either did not report their ethnicity or reported their ethnicity as Not Hispanic or Latino were classified as missing and excluded from analyses. We defined comorbidities as having at least one risk factor for severe COVID-19 (i.e., chronic lung disease, cardiac disease, severe obesity, diabetes, liver disease, or infection with human immunodeficiency virus) \cite{el_sahly_efficacy_2021}. We defined region as the four United States Census Groups: Midwest, Northeast, South, and West \cite{us_census_bureau_geographic_2021}. We also converted participants' SARS-CoV-2 testing date into a quantitative calendar date variable using September 1, 2020 as a reference date (for TMLE) or a categorical two-week variable (for ordinary logistic regression) \cite{lopez_bernal_effectiveness_2021, chua_use_2020,belongia_effectiveness_2009, bond_regression_2016}. For the TMLE approach, we standardized age and calendar date to have mean zero and standard deviation of one.

\begin{table}[!htbp]
\caption[Characteristics of Moderna COVE participants included in the TND vaccine effectiveness study cohort.]{Characteristics of Moderna COVE participants included in the TND vaccine effectiveness study cohort.}
\label{tab:vetmletbl1}
\begin{tabular}{|>{\raggedright\arraybackslash}m{0.5\linewidth}|>{\centering\arraybackslash}p{0.2\linewidth}|>{\centering\arraybackslash}p{0.2\linewidth}|}

\hline

 \parbox[c][1cm][c]{0.5\linewidth}{\centering \textbf{Characteristic}} &  \textbf{Case} \par  $n = 728$&  \textbf{Noncase}  \par $n = 1,825$\\

\hline
Age (Years) & 48 (14) & 48 (15)\\
Female & 354 (49\%) & 1,061 (58\%)\\
Person of Color & 247 (34\%) & 630 (35\%)\\
At Least One Comorbidity & 178 (24\%) & 458 (25\%)\\
United States Census Region &  & \\
\quad Midwest & 185 (25\%) & 388 (21\%)\\
\quad Northeast & 40 (5.5\%) & 116 (6.4\%)\\
\quad South & 360 (49\%) & 910 (50\%)\\
\quad West & 143 (20\%) & 411 (23\%)\\

Testing Date (Days Since 09-01-20) & 97 (28) & 84 (32)\\

Testing Date (Two-Week Interval) &  & \\

\quad 09-07-20 & 0 (0\%) & 16 (0.9\%)\\

\quad 09-21-20 & 15 (2.1\%) & 80 (4.4\%)\\

\quad 10-05-20 & 18 (2.5\%) & 160 (8.8\%)\\

\quad 10-19-20 & 44 (6.0\%) & 201 (11\%)\\

\quad 11-02-20 & 95 (13\%) & 271 (15\%)\\

\quad 11-16-20 & 91 (13\%) & 275 (15\%)\\

\quad 11-30-20 & 138 (19\%) & 309 (17\%)\\

\quad 12-14-20 & 105 (14\%) & 185 (10\%)\\

\quad 12-28-20 & 155 (21\%) & 180 (9.9\%)\\

\quad 01-11-21 & 53 (7.3\%) & 89 (4.9\%)\\

\quad 01-25-21 & 12 (1.6\%) & 45 (2.5\%)\\

\quad 02-08-21 & 1 (0.1\%) & 10 (0.5\%)\\

\quad 02-22-21 & 1 (0.1\%) & 4 (0.2\%)\\
Received mRNA-1273 Vaccine & 46 (6.3\%) & 866 (47\%)\\
%\bottomrule
\hline
\end{tabular}
\caption*{Mean (standard deviation) is reported for quantitative variables and number of participants (percentage) is reported for categorical variables. All covariates except testing date were recorded at randomized placebo-controlled clinical trial enrollment. A Person of Color was defined as a participant who reported their race as American Indian or Alaska Native, Asian, Black or African American, Multiple, Native Hawaiian or Other Pacific Islander, or Other, and/or reported their ethnicity as Hispanic or Latino. A Non-Hispanic/Latino White individual was defined as a participant who reported their race as White and either reported their ethnicity as Not Hispanic or Latino or did not report their ethnicity. Participants who did not report their race and either did not report their ethnicity or reported their ethnicity as Not Hispanic or Latino were excluded from the TND study cohort.}

\end{table}

\begin{table}[!htbp]
\caption{Candidate ensemble super learner algorithms for targeted maximum likelihood estimator to assess COVID-19 vaccine effectiveness.}
\label{tab:vesl}

    \begin{tabular}{|>{\raggedright\arraybackslash}m{0.31\linewidth}|>{\centering\arraybackslash}p{0.16\linewidth}|>{\centering\arraybackslash}m{0.42\linewidth}|}

    \hline
          \textbf{Learning Algorithm}&\textbf{sl3 Learner}& \textbf{\textcolor{white}{sp}Algorithm Tuning Parameters \textcolor{white}{sp} and Screens}\\
          \hline
          Sample Mean&Lrnr\_mean& screen = none\\
          Logistic Regression with Main Effects&Lrnr\_glm&  screen = none, lasso\\
          Logistic Regression with Main Effects and Interaction&Lrnr\_glm&  screen = none, lasso\\
          General Additive Models&Lrnr\_gam&  screen = none, lasso\\
           Multivariate Adaptive Regression Splines&Lrnr\_earth& degree = 2; screen = none\\
          Lasso Regression&Lrnr\_glmnet& alpha = 1;  screen = none\\
          Ridge Regression&Lrnr\_glmnet& alpha = 0;  screen = none\\
          Random Forest&Lrnr\_ranger& screen = none\\
          Gradient Boosting&Lrnr\_xgboost& max\_depth = 3,5; eta = 0.3; \par scale\_pos\_weight = 1; screen = none\\
 Highly Adaptive Lasso& Lrnr\_hal9001&smoothness\_orders = 1; max\_degree = 1,2; num\_knots = 3, 10; screen = none\\
 %\bottomrule
 \hline
    \end{tabular}
\caption*{Learners from the sl3 R package \cite{coyle_sl3_2021} considered for estimating the probability of vaccination in individuals without SARS-CoV-2 infection, given age, sex, race and ethnicity, comorbidities, region, and calendar date and the probability of SARS-CoV-2 infection given the aforementioned covariates. Age and calendar date were standardized with mean zero and standard deviation of one. Learners either considered all covariates (no screen) or only covariates with nonzero coefficients based on lasso regression using 10-fold cross-validation (lasso screen). Candidate learners were assessed using a binomial log-likelihood loss and $10$-fold cross validation and predictions were combined using non-negative least squares regression. 10-fold cross-fitting was used to obtain the targeted maximum likelihood estimator.}
\end{table}

In the TMLE, we used a 10-fold cross-fitted ensemble super learner \cite{van_der_laan_super_2007, chernozhukov_doubledebiased_2018} that considered the sample mean, logistic regression models, general additive models, multivariate adaptive regression splines, lasso regression, ridge regression, highly adaptive lasso, random forests, and gradient boosting for estimation (Table \ref{tab:vesl}). The ensemble super learner also considered screened logistic regression and general additive model learners, which only included covariates with nonzero coefficients based on lasso regression. We assessed the candidate learners using a binomial log-likelihood loss and 10-fold cross-validation and combined the nuisance function predictions using non-negative least squares regression.

\section{Application 2: inference on immune markers as exposure-proximal correlates of COVID-19}
\label{sec:suppapp2}

The COVID-19 symptom definition recommended by the CDC in 2020 required at least one of the following symptoms: fever, chills, cough, shortness of breath or difficulty breathing, fatigue, muscle aches or body aches, headache, new loss of taste or smell, sore throat, nasal congestion or rhinorrhea, nausea or vomiting, or diarrhea \cite{cdc_symptoms_2022, stokes_coronavirus_2020}. 

Previous COVE immune correlates studies collected several immune markers from a subset of COVE participants \cite{gilbert_immune_2022, follmann_test-negative_2025}. Participants' 50\% inhibitory dilution neutralizing antibody (nAb) titers were measured using the Monogram Biosciences PhenoSense$^{TM}$ pseudovirus D614G neutralization assay, measured in World Health Organization (WHO) international units (IU50/ml) \cite{follmann_test-negative_2025}.  The limit of detection for nAbs assays was 2.61 IU50/ml. Participants' immunoglobulin G (IgG) binding antibody (bAb) concentration measured against the ancestral SARS-CoV-2 spike protein, receptor-binding domain (RBD), and nucleocapsid antigens were quantified using a Meso Scale Discovery 4-plex assay with measurements in WHO international binding antibody units per milliliter (BAU/ml) \cite{follmann_test-negative_2025,gilbert_immune_2022}. The limits of detection for anti-spike, anti-RBD, and anti-nucleocapsid IgG bAb assays were 0.31 BAU/ml, 1.59 BAU/ml, and 0.09 BAU/ml, respectively, and the positivity threshold for anti-nucleocapsid IgG bAb assays was 23.5 BAU/ml  \cite{gilbert_immune_2022, follmann_test-negative_2025}. 

We derived our phase two TND study cohort from data collected from previous COVE immune correlates studies \cite{follmann_test-negative_2025,gilbert_immune_2022}. Follmann et al. \cite{follmann_test-negative_2025} selected 186 vaccinated and symptomatic COVE participants who obtained SARS-CoV-2 testing during the blinded phase to measure immune markers on. Cases were matched to roughly three noncases by United States Census region and testing date within seven days. Based on our eligibility criteria in Section \ref{subsec:appimm}, our phase one TND cohort consisted of 935 COVE participants (46 cases and 889 noncases), with 165 participants (44 cases, 121 noncases) selected to have their immune markers measured in Follmann et al. \cite{follmann_test-negative_2025}. Of these 165 participants, we excluded four participants missing anti-spike IgG bAb concentrations, anti-RBD IgG bAb concentrations, and/or nAb titers; eight participants who were anti-nucleocapsid seropositive; and all noncases matched to cases who did not meet the above criteria. Our resulting phase two TND study cohort consisted of 127 participants (34 cases and 93 noncases) from Follmann et al. \cite{follmann_test-negative_2025}.

\begin{table}
\caption[Characteristics of Moderna COVE participants included in the phase one and phase two test-negative design immune correlates study cohorts.]{Characteristics of Moderna COVE participants included in the phase one and phase two test-negative design immune correlates study cohorts.}
\label{tab:immtmletbl1}
\begin{tabular}{|>{\raggedright\arraybackslash}m{0.39\linewidth}|>{\centering\arraybackslash}m{0.12\linewidth}>{\centering\arraybackslash}m{0.13\linewidth}|>{\centering\arraybackslash}m{0.12\linewidth}>{\centering\arraybackslash}m{0.13\linewidth}|}
\hline
\multirow{3}{=}{\textbf{Characteristic}} & 
\multicolumn{2}{c|}{\textbf{Phase One}} &\multicolumn{2}{c|}{\textbf{Phase Two}} \\
& \textbf{Case}  & \textbf{Noncase} &  \textbf{Case} &\textbf{Noncase}  \\
 & \quad $n = 46$ &$n = 889$ & $n = 34$&$n = 93$\\
\hline
Age (Years) & 48 (14) & 49 (15) & 48 (14) & 51 (15)\\

Female & 22 (48\%) & 523 (59\%) & 16 (47\%) & 53 (57\%)\\

Person of Color & 11 (24\%) & 318 (36\%) & 9 (26\%) & 44 (47\%)\\

At Least One Comorbidity & 13 (28\%) & 234 (26\%) & 9 (26\%) & 26 (28\%)\\

United States Census Region &  &  &  & \\

\quad Midwest & 8 (17\%) & 187 (21\%) & 6 (18\%) & 17 (18\%)\\

\quad Northeast & 3 (6.5\%) & 57 (6.4\%) & 2 (5.9\%) & 6 (6.5\%)\\

\quad South & 24 (52\%) & 417 (47\%) & 17 (50\%) & 45 (48\%)\\

\quad West & 11 (24\%) & 228 (26\%) & 9 (26\%) & 25 (27\%)\\

Testing Date (Days Since 09-01-20) & 106 (27) & 95 (29) & 112 (24) & 116 (22)\\

Testing Date (Phase One Tertiles) &  &  &  & \\

\quad 10-02-20 to 11-20-20 & 10 (22\%) & 307 (35\%) & 5 (15\%) & 9 (9.7\%)\\

\quad 11-21-20 to 12-17-20 & 9 (20\%) & 298 (34\%) & 7 (21\%) & 20 (22\%)\\

\quad 12-18-20 to 02-26-21 & 27 (59\%) & 284 (32\%) & 22 (65\%) & 64 (69\%)\\

Risk Score of Acquiring COVID-19 & 0.44 (0.72) & 0.22 (0.95) & 0.41 (0.73) & 0.14 (1.09)\\

%Unknown & 0 & 3 &  & \\

Log10 Anti-Nucleocapsid IgG (BAU/ml) &  &  & 0.07 (0.57) & -0.09 (0.52)\\

%Unknown & 5 & 773 & 0 & 3\\

Log10 Anti-Spike IgG (BAU/ml) &  &  & 2.96 (0.41) & 3.04 (0.64)\\

%Unknown & 5 & 769 &  & \\

Log10 Anti-RBD IgG (BAU/ml) &  &  & 3.04 (0.43) & 3.19 (0.65)\\

%Unknown & 5 & 769 &  & \\

Log10 nAb ID50 (IU50/ml) & & & 1.88 (0.59) & 2.08 (0.63)\\
%Unknown & 4 & 769 &  & \\
\hline
\end{tabular}

\caption*{Mean (standard deviation) is reported for quantitative variables and number of participants (percentage) is reported for categorical variables. Age, sex, race and ethnicity, presence of comorbidities, United States Census region, and risk score of acquiring COVID-19 were recorded at randomized placebo-controlled clinical trial enrollment. Testing date and immune marker measures were collected at time of SARS-CoV-2 testing. A Person of Color was defined as a participant who reported their race as American Indian or Alaska Native, Asian, Black or African American, Multiple, Native Hawaiian or Other Pacific Islander, or Other, and/or reported their ethnicity as Hispanic or Latino. A Non-Hispanic/Latino White individual was defined as a participant who reported their race as White and either reported their ethnicity as Not Hispanic or Latino or did not report their ethnicity. 
\\
Abbreviations: Log10 = Logarithm; IgG = Immunoglobulin G; RBD = Receptor-Binding Domain; nAb = Neutralizing Antibody, BAU = Binding Antibody Unit; ml = Milliliter; ID50 = 50\% Inhibitory Dilution; IU50 = 50\% Inhibitory Units
}

\end{table}

The risk score of acquiring COVID-19 was estimated from the placebo arm of COVE \cite{gilbert_immune_2022} and missing for three phase one TND noncases. We converted participants' SARS-CoV-2 testing date into a quantitative calendar date variable using September 1, 2020 as a reference date (for TMLE) or a phase one tertile calendar date variable (for pseudo-likelihood logistic regression and ordinary logistic regression). For the TMLE approach, we standardized risk score of acquiring COVID-19 and calendar date to have mean zero and standard deviation of one.

\begin{table}[!htbp]
\caption{Candidate ensemble super learner algorithms for targeted maximum likelihood estimator to assess COVID-19 immune correlates.}
\label{tab:immsl}
 
    \begin{tabular}{|>{\raggedright\arraybackslash}m{0.5\linewidth}|>{\centering\arraybackslash}p{0.14\linewidth}|>{\centering\arraybackslash}m{0.27\linewidth}|}

    %\toprule
    \hline
          \textbf{Learning Algorithm}&\textbf{sl3 Learner}& \textbf{Algorithm Tuning Parameters and Screens}\\
          \hline
          Sample Mean&Lrnr\_mean& screen = none\\
          Logistic Regression with Main Effects&Lrnr\_glm&  screen = none, lasso\\
          Logistic Regression with Main Effects and Interaction&Lrnr\_glm&  screen = none, lasso\\
          General Additive Models&Lrnr\_gam&  screen = none, lasso\\
           Multivariate Adaptive Regression Splines&Lrnr\_earth& degree = 2; screen = none\\
          Lasso Regression&Lrnr\_glmnet& alpha = 1;  screen = none\\
          Ridge Regression&Lrnr\_glmnet& alpha = 0;  screen = none\\
          %\bottomrule
          \hline
    \end{tabular}
 \caption*{Learners from the sl3 R package \cite{coyle_sl3_2021} were considered for estimating the probability of having a high immune marker level in individuals without SARS-CoV-2 infection, given risk score of acquiring COVID-19, comorbidities, region, and quantitative calendar date and the probability of SARS-CoV-2 infection given risk score of acquiring COVID-19, comorbidities, region, and quantitative calendar date. Risk score of acquiring COVID-19  and calendar date were standardized with mean zero and standard deviation of one. Learners either considered all covariates (no screen) or only covariates with nonzero coefficients based on lasso regression using 10-fold cross-validation (lasso screen). Candidate learners were assessed using a binomial log-likelihood loss and $10$-fold cross validation and combined using non-negative least squares regression. Cross-fitting was used to obtain the targeted maximum likelihood estimator.}
\end{table}
We applied a similar ensemble super learner method as in Section \ref{subsec:appimm}, though we excluded highly adaptive lasso, random forests, and gradient boosting from the library due to the small phase two TND study sample size  (Table \ref{tab:immsl}).
\begin{figure}
    \centering
    \includegraphics[width=1\linewidth]{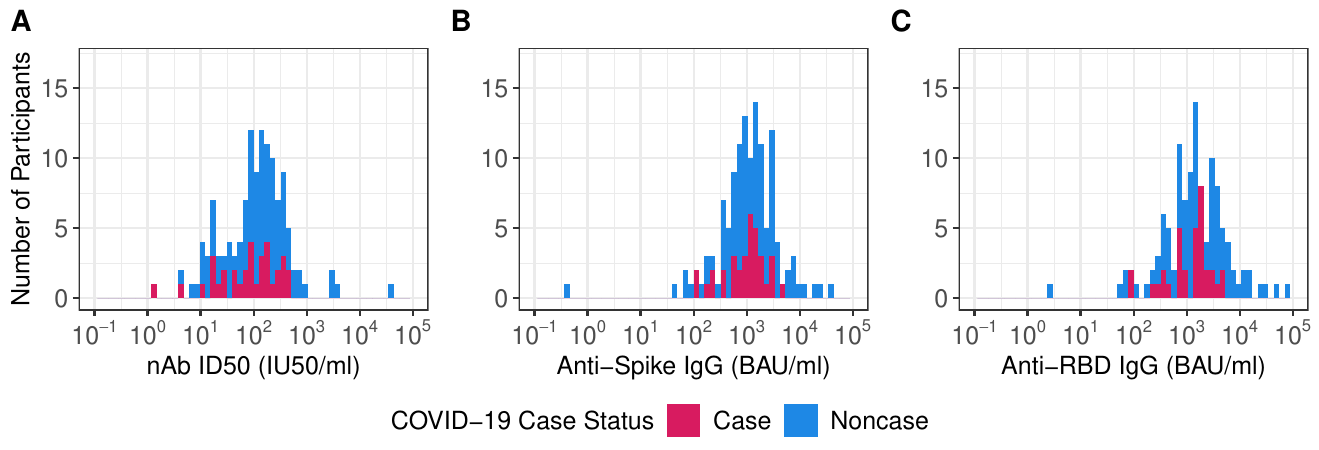}
    \caption{50\% inhibitory dilution neutralizing antibody titer (\textbf{A}), anti-spike binding antibody concentration (\textbf{B}), and  anti-receptor-binding domain binding antibody concentration (\textbf{C}) measured at symptom onset in 127 Moderna COVE phase two test-negative design study participants. Immune marker levels are plotted on the base 10 logarithm scale with plotting labels on the raw scale.
    \\
    Abbreviations: nAb = Neutralizing Antibody; ID50 = 50\% Inhibitory Dilution; IU50 = 50\% Inhibitory Units; IgG = Immunoglobulin G; BAU = Binding Antibody Unit; ml = Milliliter; RBD = Receptor-Binding Domain}
    \label{fig:immcorr}
\end{figure}

\section{Corresponding notation between van der Laan and Gilbert \cite{van_der_laan_semiparametric_2025} and Section \ref{sec:semi}}

    \begin{table}[H]
    \caption{Corresponding notation between the targeted maximum likelihood estimation method described in van der Laan and Gilbert \cite{van_der_laan_semiparametric_2025} and Section \ref{sec:semi} of our manuscript.}
    \label{tab:notation}
   \begin{tabular}{|>{\raggedright\arraybackslash}m{1.9in}|>{\centering\arraybackslash}m{1in}|>{\centering\arraybackslash}m{.7in}|>{\raggedright\arraybackslash}m{2.2in}|}
    \hline
         Variable Description in van der Laan and Gilbert \cite{van_der_laan_semiparametric_2025}& Notation in van der Laan and Gilbert \cite{van_der_laan_semiparametric_2025}&Notation \textcolor{white}{hi} in \textcolor{white}{hi} Section \ref{sec:semi}&Variable Description in Section \ref{sec:semi}\\
         \hline
         Time-Dependent Vaccination Status&  $A(t)$ &$Y, D=1$ &SARS-CoV-2 Infection and Meeting Symptom Definition\\
         [5ex]
         Binary Viral Feature&  $J$ &$A$ &Vaccination Status or Immune Marker Level\\
         [3.5ex]
         Covariates Before Infection&  $W$ &$X$ &Covariates Obtained at TND Enrollment (i.e., Time of Testing)\\
         [5ex]
 Post-Infection Covariates& $W_T$ & -&Not Relevant in TND Context\\
 [2ex]
 Time of First Infection& $T$ & -&Not Relevant in TND Context\\
 [2ex]
         Observed Viral Infection at Monitoring Site&  $R$ &$S$&Observed Individual at TND Testing Site\\
         [3.5ex]
         $J$ Observed in Case-Only Cohort&  $\Delta$ &$\Delta$ &$A$ Observed in TND Study Cohort\\
    \hline
    \end{tabular}
\caption*{van der Laan and Gilbert \cite{van_der_laan_semiparametric_2025} assess relative vaccine efficacy against multiple strains using data from observational case-only studies. Our approach in Section \ref{sec:semi} assesses vaccine effectiveness and immune correlates using data from test-negative design studies.\\
Abbreviation: TND = Test-Negative Design}
\end{table}

\end{document}